\theoremstyle{plain}
\newtheorem{theorem}{Theorem}
\newtheorem{lemma}{Lemma}
\newtheorem{corollary}{Corollary}
\theoremstyle{remark}
\newtheorem{condition}{Condition}
\newcommand{\indep}{\perp \!\!\! \perp}
\newcommand{\uQ}{\underaccent{\bar}{Q}}
\newcommand{\uq}{\underaccent{\bar}{q}}
\newcommand{\argmin}[1]{\underset{#1}{\operatorname{argmin}}\;} 
\tikzset{%
    every neuron/.style={
        circle,
        draw,
        minimum size=1cm
    },
    neuron missing/.style={
        draw=none, 
        scale=4,
        text height=0.333cm,
        execute at begin node=\color{black}$\vdots$
    },
}
\title{Efficient Estimation under Data Fusion}
\author[*]{Sijia Li}
\author[$\dagger$]{Alex Luedtke}
\affil[*]{Department of Biostatistics, University of Washington}
\affil[$\dagger$]{Department of Statistics, University of Washington}
\date{}
\begin{document}
\maketitle

\begin{abstract}
We aim to make inferences about a smooth, finite-dimensional parameter by fusing data from multiple sources together. Previous works have studied the estimation of a variety of parameters in similar data fusion settings, including in the estimation of the average treatment effect and average reward under a policy, with the majority of them merging one historical data source with covariates, actions, and rewards and one data source of the same covariates. In this work, we consider the general case where one or more data sources align with each part of the distribution of the target population, for example, the conditional distribution of the reward given actions and covariates. We describe potential gains in efficiency that can arise from fusing these data sources together in a single analysis, which we characterize by a reduction in the semiparametric efficiency bound. We also provide a general means to construct estimators that achieve these bounds. In numerical experiments, we illustrate marked improvements in efficiency from using our proposed estimators rather than their natural alternatives. Finally, we illustrate the magnitude of efficiency gains that can be realized in vaccine immunogenicity studies by fusing data from two HIV vaccine trials.  
\end{abstract}

\section{Introduction}
\label{sec:introduction}
The rapid expansion of available data has facilitated the use of data fusion, which allows researchers to combine information from many data sources, each collected on a potentially distinct population at a different time, in order to obtain valid summaries of a target population of interest. In practice, data fusion often renders more relevant information or is less expensive than traditional analyses that only leverage a single data source. For example, technology companies integrate numerous unlabeled data with a small amount of labeled data to make accurate predictions, in a process known as semi-supervised learning 
\citep{chakrabortty2016robust}. In education, policy-makers leverage multiple datasets generated by different current policies to evaluate a new rule of interest in a fast, inexpensive, and effective way \citep{kallus2020optimal}. In genomics, integrating expression data, gene sequencing data, and network data gives a heterogeneous description of the gene and a distinct view of the underlying machinery of the cell \citep{lanckriet2004statistical}. In clinical trials, experimental data can be fused with observational data to evaluate a treatment regime on a different target population than the study population \citep[e.g.,][]{wedam2020fda}.
 
There are many recent works introducing statistical methods for particular data fusion problems. Many of them focus on bridging causal conclusions via data fusion as illustrated in the aforementioned clinical trial example. This is true, for example, in works on transportability \citep{pearl2011transportability,hernan2011compound,bareinboim2014transportability,stuart2015assessing,rudolph2017robust,dahabreh2019extending,dahabreh2019efficient,dong2020integrative}, re-targeting under covariate shifts \citep{kallus2020optimal}, using surrogate index to infer long-term outcomes \citep{athey2019surrogate}, and correcting external validity bias \citep{stuart2011use,mo2020learning}, in that all these research areas focus on bridging causal effects from a source population to a different target population. While these works considered merging two datasets only, \cite{dahabreh2019efficient} and \cite{lu2021you} considered bridging data from multiple trials to a target population and others have studied combining experimental data with multiple observational data sources in the presence of unmeasured confounding \citep{evans2018doubly,sun2018semiparametric}. Moreover, \cite{bareinboim2016causal} studied the identifiability results for a general causal parameter when multiple heterogeneous data sources are available. Data fusion is also used in non-causal problems. For example, semi-supervised learning \citep{chapelle2009semi,chakrabortty2016robust} represents another important application of data fusion. 

Due to the considerable number of open problems in this area, it is of interest to describe a general framework and approach that allows researchers to tackle data fusion problems in generality without limiting themselves to specific parameters, numbers of datasets, or data structures. In this paper, we will consider a general case where different data sources align with different parts of the distribution of the target population and derive efficient estimators based on all available data. We derive a key object needed to both quantify the best achievable level of statistical efficiency when data from multiple sources are fused together and to construct estimators that achieve these gains. Our results generalize previous works that study estimation of specific parameters in data fusion problems under nonparametric models by allowing for both consideration of general parameters and arbitrary semiparametric or nonparametric models. In addition to an example provided in Section~\ref{sec:examples}, in  Appendix \ref{additional examples} we illustrate the wide applicability of the proposed method by presenting five other examples. There we also provide a discussion of implementation, possible extensions, connections to missing data problems, and proofs.

\section{Notations and Problem Setup}\label{sec:notations and setup}

We begin by defining some notation. For a natural number $m$, we write $[m]$ to denote $\{1,\ldots,m\}$. For a distribution $\nu$, we let $E_{\nu}$ denote the expectation operator under $\nu$. Throughout we use $Z=(Z_1,\ldots,Z_d)$ to denote a random variable and, for $j\in [d]$, we let $\bar{Z}_j=(Z_1,\ldots,Z_j)$, where we use the convention that $\bar{Z}_0=\emptyset$. We use capital letters, such as $\bar{Z}_j$ and $S$, to denote random variables and the corresponding lowercase letters, such as $\bar{z}_j$ and $s$, to denote their realizations. In an abuse of notation, we condition on lowercase letter in expectations to indicate conditioning on the corresponding random variable taking a specific value: for example, $E_\nu(Z_2|z_1)=E_\nu(Z_2|Z_1=z_1)$. For any distribution $Q$ of $Z$ and $j\in [d]$, we will let $Q_j(\,\cdot\mid \bar{z}_{j-1})$ denote the conditional distribution of $Z_j\mid \bar{Z}_{j-1}=\bar{z}_{j-1}$. Similarly, for any distribution $P$ of $(Z,S)$, we will let $P_j(\,\cdot\mid \bar{z}_{j-1},s)$ denote the conditional distribution of $Z_j\mid \bar{Z}_{j-1}=\bar{z}_{j-1},S=s$. Here and throughout we suppose sufficient regularity conditions so that all such conditional distributions are well defined and that all discussed distributions of $Z_j\mid \bar{Z}_{j-1}=\bar{z}_{j-1}$ and $Z_j\mid \bar{Z}_{j-1}=\bar{z}_{j-1},S=s$ are defined on some common measurable space. We use $\rightarrow$ to specify the domain and codomain of an arbitrary function and use $\mapsto$ to denote the input and output of a function --- for example, if $f$ is the standard normal density function, then it would be accurate to write both $f : \mathbb{R}\rightarrow\mathbb{R}$ and $f : x\mapsto (2\pi)^{-1/2}e^{-x^2/2}$.

Suppose we have a collection of $k$ data sources and want to estimate an $\mathbb{R}^b$-valued summary $\psi(Q^0)$ of a target distribution $Q^0$ that is known to belong to a collection $\mathcal{Q}$ of distributions
of a random variable $Z=(Z_1, \ldots, Z_d)$, where $Z$ takes values in $\mathcal{Z}=\prod_{j=1}^d \mathcal{Z}_j$. The summary $\psi$ may only depend on a subset of the conditional distributions of $Z_j\mid \bar{Z}_{j-1}$. To handle such cases, we let $\mathcal{I}\subset [d]$ denote a set of irrelevant indices $j$ such that $\psi$ is not a function of the distribution of $Z_j\mid \bar{Z}_{j-1}$ --- more concretely, $\psi(Q)=\psi(Q')$ for all $Q,Q'\in\mathcal{Q}$ such that $Q_j=Q_j'$ for all $j\in [d]\backslash \mathcal{I}$. We do not require that $\mathcal{I}$ be the largest possible set of irrelevant indices --- this means that, for any parameter $\psi$, we can take $\mathcal{I}=\emptyset$, while, for certain parameters $\psi$, it will be possible to take $\mathcal{I}$ to be a nonempty set. To ensure that it makes sense to compare the distributions of $Z_j\mid \bar{Z}_{j-1}$ under different distributions $Q$ and $Q'$ in $\mathcal{Q}$, we assume here and throughout that all pairs of distributions in $\mathcal{Q}$ are mutually absolutely continuous. We let $\mathcal{J}=[d]\backslash \mathcal{I}$ denote the set of indices that may be relevant to the evaluation of $\psi$, termed the set of relevant indices.

Rather than observe draws directly from $Q^0$, we see $n$ independent copies of $X = (Z, S)$ drawn from some common distribution $P^0$, where $Z$ takes values in $\mathcal{Z}$ and $S$ is a categorical random variable denoting the data source has support $[k]$. 
The distribution $P^0$ is known to align with $Q^0$ in the sense described below, which makes it possible to relate the conditional distributions $P^0_j(\,\cdot\mid \bar{z}_{j-1},s)$ and $Q^0_j(\,\cdot\mid \bar{z}_{j-1})$. 
\begin{condition}
\label{cond:identifiability}
\textit{(Sufficient alignment)}\; For each relevant index $j\in \mathcal{J}$, there exists a known set $\mathcal{S}_j\subseteq [k]$ such that, for all $s\in \mathcal{S}_j$, both of the following hold:
\begin{enumerate}[label=\alph*.,ref=\alph*]
    \item\label{suff_overlap} \textit{(Sufficient overlap)}\; the marginal distribution of $\bar{Z}_{j-1}$ under sampling from $Q^0$ is absolutely continuous with respect to the conditional distribution of $\bar{Z}_{j-1}\mid S=s$ under sampling from $P^0$; and
    \item\label{suff_alignment} \textit{(Common conditional distributions)}\;  $P^0_j(\,\cdot\mid \bar{z}_{j-1},s)=Q^0_j(\,\cdot\mid \bar{z}_{j-1})$ $Q^0$-almost everywhere.
\end{enumerate}
\end{condition}
We will provide an example in Section~\ref{sec:examples} and another five examples in the  Appendix ~\ref{additional examples} where the above condition is plausible. Three of those examples represent generalizations of existing results, and, in all of those cases, a version of the above alignment condition was previously assumed. A detailed comparison between Condition~\ref{cond:identifiability} and identification conditions given in existing works is provided in  Appendix~\ref{additional examples}. We refer to $\mathcal{S}_j$, $j\in[d]$, as fusion sets and suppose they are known and prespecified in advance. As $Q^0$ is unknown beyond its membership to $\mathcal{Q}$, the above implies that $P^0$ is known to belong to the collection $\mathcal{P}$ of distributions $P$ with support on $\mathcal{Z}\times [k]$ for which there exists a $Q\in\mathcal{Q}$ such that, for all $j\in \mathcal{J}$ and $s\in \mathcal{S}_j$, the following analogues of Condition~\ref{cond:identifiability} hold: (a) the marginal distribution of $\bar{Z}_{j-1}$ under sampling from $Q$ is absolutely continuous with respect to the conditional distribution of $\bar{Z}_{j-1}\mid S=s$ under sampling from $P$, and (b) $P_j(\,\cdot\mid \bar{z}_{j-1},s)=Q_j(\,\cdot\mid \bar{z}_{j-1})$ $Q$-almost everywhere. Hereafter we refer to $\mathcal{P}$ and $\mathcal{Q}$ as models.

Condition~\ref{cond:identifiability}\ref{suff_overlap} ensures that, for $s\in\mathcal{S}_j$, null sets under the distribution of $\bar{Z}_{j-1}|S=s$ implied by $P^0$ are also null sets under the marginal distribution of $\bar{Z}_{j-1}$ implied by $Q^0$, which ensures that the conditional distribution $P^0_j(\,\cdot\mid \bar{z}_{j-1},s)$ appearing in Condition~\ref{cond:identifiability}\ref{suff_alignment} is uniquely defined up to $Q^0$-null sets.  It is worth noting that we have not assumed that the conditional distribution of $\bar{Z}_{j-1}\mid S=s$ under sampling from $P^0$ is absolutely continuous with respect to the marginal distribution of $\bar{Z}_{j-1}$ under sampling from $Q^0$, which allows $\bar{Z}_{j-1}$ to take values not seen in the target distribution when sampled from aligning data sources under $P^0$. Condition~1\ref{suff_alignment} implies exchangeability over data sources, namely that $Z_j \indep S \mid (\bar{Z}_{j-1}, S\in \mathcal{S}_j)$ for $j\in [d]$, which imposes a nontrivial conditional independence condition on the data-generating distribution when $\mathcal{S}_j$ is not a singleton for at least one $j$. This exchangeability condition is testable \citep{luedtke2019omnibus,westling2021nonparametric}. Nevertheless, we advocate choosing the fusion sets based on outside knowledge rather than via hypothesis testing to avoid challenges associated with post-selection inference.

Previous data fusion works have shown that variants of Condition~\ref{cond:identifiability} enable the identification of $\psi(Q^0)$ as a functional $\phi$ of the observed data distribution $P^0$ in particular problems \citep[e.g.,][]{rudolph2017robust,dahabreh2019efficient} and in general causal inference problems \citep[e.g.,][]{pearl2011transportability,bareinboim2016causal}. We will heavily rely on such an identifiability result to construct estimators of $\psi(Q^0)$, and so we present it explicitly here. To this end, we define a mapping $\theta : \mathcal{P}\rightarrow\mathcal{Q}$. In particular, for any $P\in\mathcal{P}$, we let $\theta(P)$ denote an arbitrarily selected distribution from the set $\mathcal{Q}(P)$ of distributions $Q\in\mathcal{Q}$ that are such that, for each $j\in \mathcal{J}$, $Z_j\mid \bar{Z}_{j-1}$ under sampling from $Q$ has the same distribution as $Z_j\mid \bar{Z}_{j-1},S\in\mathcal{S}_j$ under sampling from $P$. Because $P\in\mathcal{P}$, there must be at least one distribution in $\mathcal{Q}(P)$. Moreover, the value in $\mathcal{Q}(P)$ selected when defining $\theta(P)$ is irrelevant for our purposes since, as is evident below, our identifiability result only concerns the value of $\psi\circ \theta(P^0)$, and this value does not depend on the conditional distributions of $Z_j\mid \bar{Z}_{j-1}$ under $\theta(P^0)$ for irrelevant indices $j$.
\begin{theorem}\label{thm:identifiability}
Let $\phi=\psi\circ \theta$. Under Condition~\ref{cond:identifiability}, $\psi(Q^0)=\phi(P^0)$. 
\end{theorem}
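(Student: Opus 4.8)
The plan is to reduce the identity to a statement about conditional laws and then verify it directly from Condition~\ref{cond:identifiability}. The only property of $\psi$ that I will use is the defining property of the irrelevant index set: since $\mathcal{J}=[d]\setminus\mathcal{I}$, we have $\psi(Q)=\psi(Q')$ for any $Q,Q'\in\mathcal{Q}$ with $Q_j=Q'_j$ for every $j\in\mathcal{J}$. Consequently, it suffices to show that $\theta(P^0)\in\mathcal{Q}$, $Q^0\in\mathcal{Q}$, and $\theta(P^0)_j=Q^0_j$ for all $j\in\mathcal{J}$; then $\phi(P^0)=\psi\circ\theta(P^0)=\psi(Q^0)$. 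Membership of $Q^0$ in $\mathcal{Q}$ is part of the setup, and membership of $\theta(P^0)$ in $\mathcal{Q}$ is immediate from the definition of $\theta$, so the work lies in matching the relevant conditional laws. The same argument with $Q^0$ replaced by an arbitrary element of $\mathcal{Q}(P^0)$ also justifies the remark preceding the theorem that the choice made in defining $\theta(P^0)$ does not affect $\psi\circ\theta(P^0)$.

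The central step is to show $Q^0\in\mathcal{Q}(P^0)$, i.e., that for each relevant $j$ the law of $Z_j\mid\bar Z_{j-1}$ under $Q^0$ agrees with the law of $Z_j\mid\bar Z_{j-1},S\in\mathcal{S}_j$ under $P^0$. Fix $j\in\mathcal{J}$. For each $s\in\mathcal{S}_j$, Condition~\ref{cond:identifiability}\ref{suff_overlap} ensures that $P^0_j(\,\cdot\mid\bar z_{j-1},s)$ is well defined for $\bar z_{j-1}$ off a $Q^0$-null set, and Condition~\ref{cond:identifiability}\ref{suff_alignment} gives $P^0_j(\,\cdot\mid\bar z_{j-1},s)=Q^0_j(\,\cdot\mid\bar z_{j-1})$ for $Q^0$-almost every $\bar z_{j-1}$. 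I would then disintegrate the law of $Z_j$ given $\bar Z_{j-1}=\bar z_{j-1}$ and $S\in\mathcal{S}_j$ under $P^0$ as the mixture $\sum_{s\in\mathcal{S}_j}P^0(S=s\mid\bar z_{j-1},S\in\mathcal{S}_j)\,P^0_j(\,\cdot\mid\bar z_{j-1},s)$, valid for $\bar z_{j-1}$ in the support of $\bar Z_{j-1}\mid S\in\mathcal{S}_j$ under $P^0$, a set that contains the support of $\bar Z_{j-1}$ under $Q^0$ by Condition~\ref{cond:identifiability}\ref{suff_overlap}. Substituting the preceding display and using that the mixture weights sum to one shows this conditional law equals $Q^0_j(\,\cdot\mid\bar z_{j-1})$ for $Q^0$-almost every $\bar z_{j-1}$. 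Since $Q^0\in\mathcal{Q}$, this is precisely the membership requirement, so $Q^0\in\mathcal{Q}(P^0)$.

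To conclude, I would use that $\theta(P^0)\in\mathcal{Q}(P^0)$ by construction: for each $j\in\mathcal{J}$ the law of $Z_j\mid\bar Z_{j-1}$ under $\theta(P^0)$ equals the law of $Z_j\mid\bar Z_{j-1},S\in\mathcal{S}_j$ under $P^0$, which by the previous paragraph equals the law of $Z_j\mid\bar Z_{j-1}$ under $Q^0$; here I would invoke the assumed mutual absolute continuity of the members of $\mathcal{Q}$ so that these kernels are compared on a common support, giving $\theta(P^0)_j=Q^0_j$. Combining with the first paragraph, $\psi(Q^0)=\psi\circ\theta(P^0)=\phi(P^0)$.

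I expect the main obstacle to be entirely measure-theoretic rather than conceptual. Condition~\ref{cond:identifiability}\ref{suff_overlap} is one-directional --- it permits $\bar Z_{j-1}\mid S=s$ under $P^0$ to charge regions not seen under $Q^0$ --- so one must be careful to assert the equalities of conditional distributions only on the support of $\bar Z_{j-1}$ under $Q^0$, which is all that $\psi$ depends on, and to track the ``$Q^0$-almost everywhere'' qualifiers when fusing the per-source equalities across $s\in\mathcal{S}_j$ through the mixture decomposition. Once the right null sets are handled, the remainder is bookkeeping.
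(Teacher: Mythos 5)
Your proposal is correct, and it is essentially the argument the paper relies on (the paper leaves this proof implicit, noting only that $Q^0\in\mathcal{Q}(P^0)$ and that the choice of $\theta(P^0)$ within $\mathcal{Q}(P^0)$ is irrelevant for $\psi$): you show $Q^0\in\mathcal{Q}(P^0)$ by mixing the per-source equalities of Condition~\ref{cond:identifiability}\ref{suff_alignment} over $s\in\mathcal{S}_j$, using Condition~\ref{cond:identifiability}\ref{suff_overlap} to control the null sets, and then conclude $\psi\circ\theta(P^0)=\psi(Q^0)$ from the defining property of $\mathcal{I}$ and the mutual absolute continuity of distributions in $\mathcal{Q}$. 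Your handling of the one-directional overlap condition and the $Q^0$-almost-everywhere qualifiers is exactly the care the argument requires.
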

Importantly, $\theta(P^0)$ can be evaluated without knowing the value of the true target distribution $Q^0$. Consequently, the above result shows that it is possible to learn the summary $\psi(Q^0)$ of the target distribution based only on the distribution of the observed data distribution $P^0$. This motivates estimating $\phi(P^0)$, and therefore $\psi(Q^0)$, based on a random sample drawn from $P^0$. Before presenting such estimation strategies, we will exhibit an example that fits within this data fusion framework.

\section{Example: longitudinal treatment effect}
\label{sec:examples}
While some clinical trials focus on evaluating a fixed treatment at a single time point, others involve longitudinal treatments that can vary over time. Let  $X = (U_1,A_1, \ldots, U_{T-1}, A_{T-1}, Y)$, where indices denote time, $A_t$ denotes the binary treatment at time $t$, and $U_t$ denotes the time-varying variable of interest at time $t$. Under this setup, we have $Z_1 = U_1$, $Z_2=A_1$, $Z_3=U_2$, \ldots, $Z_{2T-1}=Y$. We suppose that the final outcome of interest, $Y$, is real-valued. For ease of notation, we let $\Bar{H}_t=(U_1,A_1,\ldots, U_t)$ for each $t \in [T-1]$ denote the history up to time $t$. We consider three models $\mathcal{Q}$ for the unknown target distribution. The first is nonparametric, and consists of all distributions with some common support where treatment assignment satisfies the strong positivity condition that, conditionally on the past, each treatment is assigned with probability bounded away from zero. The second is semiparametric, and supposes that there is some unknown function $g : \prod_{j=1}^{2T-2}\mathcal{Z}_j\rightarrow\mathbb{R}$ such that the conditional distribution $Y \mid \bar{H}_{T-1}=h_{T-1},A_{T-1} =a_{T-1}$ is symmetric about $g(\bar{h}_{T-1},a_{T-1})$. When considering this semiparametric model, we suppose that, for each $Q\in\mathcal{Q}$, the conditional distribution $Q_{2T-1}$ has a corresponding conditional density $q_{2T-1}$ and that $q_{2T-1}(\,\cdot\mid \bar{H}_{T-1},A_{T-1})$ is almost surely differentiable. 
The third model we consider is also semiparametric and imposes that, under sampling from each $Q\in\mathcal{Q}$, $(\bar{H}_{T-1},A_{T-1})$ has support in $\mathbb{R}^p$ and there exists some vector of coefficients $\beta\in \mathbb{R}^p$ and error distribution $\tau_\alpha$ belonging to a regular parametric family $\{\tau_{\tilde{\alpha}} : \tilde{\alpha}\in\mathbb{R}^c\}$ of conditional distributions of a real-valued error $\epsilon$ given $(\bar{H}_{T-1},A_{T-1})$ such that $Y = \beta^\top \kappa(\bar{H}_{T-1},A_{T-1}) + \epsilon$, where $\kappa: \mathbb{R}^p \rightarrow \mathbb{R}^c$ is a known transformation of the history and treatment through time $T-1$ and $E_{\tau_\alpha}[\epsilon\mid \bar{H}_{T-1},A_{T-1}]=0$ almost surely.

\cite{hernan2020causal} discuss a range of causal parameters under such a longitudinal setting, including the mean outcome under a dynamic treatment regime, the parameters indexing a marginal structural mean model, and the average treatment effect of always being on treatment versus never being on treatment. For simplicity, here we focus on the final of these parameters. Under causal assumptions (Chapter 19.4 of \citealp{hernan2020causal}), this causal effect is identified with $\psi(Q)\equiv  E_{Q_1}\{L_1^1(\Bar{H_1})\} - E_{Q_1}\{L_1^0(\Bar{H_1})\}$, where, for $a\in\{0,1\}$, we define $L_T^a(\bar{h}_T)=y$ and, recursively from $t=T-1,\ldots,1$, define $L_t^a(\Bar{h}_t)=E_{Q_{2t+1}}\{L_{t+1}(\Bar{H}_{t+1})\mid \Bar{h}_t, A_t =a\}$. 
Because $\psi(Q)$ can be written as a function of $Q_1,Q_3,\ldots,Q_{2T-1}$, we see that we can take $\mathcal{I}=\{2,4,\ldots,2T-2\}$ in this example. This is consistent with the well known fact that the conditional average treatment effect does not depend on the treatment assignment probabilities, namely the distribution $Q_{2t}$ of $A_t|\bar{H}_t$ for $t\in [T-1]$.

We consider the scenario where we obtain data from $k$ sources. Some data sources contain observations from all $T$ time points --- for example, measurements of monthly CD4 count in HIV treatment trials or observational settings. Others may only contain such  measurements up to a time point $t<T$ such that $(U_1,A_1, \ldots, U_t, A_{t})$ is observed and $U_s$ and $A_s$ are missing for all $s>t$. We indicate missingness by writing that $U_s=A_s=\star$ in such cases. 
Such partial observations may still have valuable information, for example, about how longitudinal CD4 count responds to treatment shortly after the initiation of antiretroviral therapy. Under Condition~\ref{cond:identifiability}, 
$\psi(Q^0)$ can be identified as $\phi(P^0)  = E_{P^0}\{\Tilde{L}_1^1(\Bar{H_1})|S \in \mathcal{S}_1\}-E_{P^0}\{\Tilde{L}_1^0(\Bar{H_1})|S \in \mathcal{S}_1\}$, where, for $a\in\{0,1\}$, we define $\tilde{L}_T^a(\bar{h}_T)=y$ and, recursively from $t=T-1,\ldots,1$, define $\tilde{L}_t^a(\Bar{h}_t)=E_{P^0}\{\tilde{L}_{t+1}(\Bar{H}_{t+1})\mid \Bar{h}_t, A_t =a,S\in\mathcal{S}_{2t+1}\}$. When $T=1$, this estimand simplifies to the well-studied average treatment effect.

\section{Methods}
\label{sec:methods}
\subsection{Review of semiparametric theory}
\label{sec:efficiency theory}

We review some important aspects of semiparametric theory.  A more comprehensive review is given in  Appendix~\ref{sec:efficiency theory_appendix}. An estimator $\hat{\phi}$ of $\phi(P)$ is called asymptotically linear with influence function $D_P$ if it can be written as $\hat{\phi} - \phi(P) = n^{-1}\sum_{i=1}^n  D_P(X_i) + o_p(n^{-1/2})$, where $E_P\{D_P(X_i)\}=0$ and $\sigma_P^2\equiv E_P\{D_P(X_i)^2\}<\infty$. 
One reason such estimators are attractive is that they are consistent and asymptotically normal, in the sense that $\sqrt{n} \{ \hat{\phi} - \phi(P)\} \xrightarrow{d} N(0,\sigma_P^2)$ under sampling $n$ independent draws from $P$. This facilitates the construction of confidence intervals and hypothesis tests. If $\hat{\phi}$ is regular and asymptotically linear at $P$ \citep{bickel1993efficient}, then $\phi$ is pathwise differentiable and the influence function $D_P$ is a gradient of $\phi$, in the sense that, for all submodels $\{P^{(\epsilon)} : \epsilon\in [0,\delta)\}\in\mathscr{P}(P,\mathcal{P})$, $\frac{\partial}{\partial \epsilon} \phi(P_{\epsilon}) \mid_{\epsilon=0} = E_P\{D_P(X)h(X)\}$. The tangent set $\mathcal{T}(P,\mathcal{P})$ of $\mathcal{P}$ at $P$ is defined as the set of all scores of submodels in $\mathscr{P}(P,\mathcal{P})$. 
The canonical gradient $D_P^*$ corresponds to the $L_0^2(P)$-projection of any gradient $D_P$ onto the closure of the linear span of scores in $\mathcal{T}(P,\mathcal{P})$, where $L_0^2(P)$ is the Hilbert space of $P$-mean-zero functions, finite variance functions. 

One way to construct a regular asymptotically linear estimator with influence function $D_{P^0}$ is through one-step estimation \citep{bickel1982adaptive}. 
Given an estimate $\widehat{P}$ of $P^0$, 
the one-step estimator is given by $\hat{\phi} \equiv \phi(\widehat{P}) + \sum_{i=1}^n D_{\widehat{P}}(X_i)/n$. 
This estimator will be asymptotically linear with influence function $D_{P^0}$ if the remainder term $R(\widehat{P},P^0)\equiv \phi(\widehat{P}) - \phi(P^0) + E_{P^0}\{D_{\widehat{P}}(X)\}$ is $o_p(n^{-1/2})$ and the empirical mean of $D_{\widehat{P}}(X)-D_{P^0}(X)$ is within $o_p(n^{-1/2})$ of the mean of this term when $X\sim P^0$. The latter of these requirements will hold under an appropriate empirical process condition \citep{van1996weak}. Alternative approaches for constructing asymptotically linear estimators can be found in \cite{van2006targeted}, \cite{van2003unified}, and \cite{tsiatis2006semiparametric}. 

All of the results in this section extend naturally to the case where $\phi$ is $\mathbb{R}^b$-valued. In such cases, gradients (respectively, the canonical gradient) of $\phi$ are $\mathbb{R}^b$-valued functions whose $b$-th entry corresponds to a gradient (respectively, the canonical gradient) of the $b$-th coordinate projection of $\phi$. Estimators can similarly be constructed coordinatewise. Due to this straightforward extension from univariate to $b$-variate settings, the theoretical results in the next subsection focus on the special case where $b=1$.

\subsection{Derivation of canonical gradient of a general target parameter}
\label{sec:efficientIFs}
In this section, we provide approaches for obtaining the canonical gradient of $\phi$ in the model $\mathcal{P}$ implied by $\mathcal{Q}$ and the data fusion conditions.  
We will focus on settings where distributions in $Q$ can be separately defined via their conditional distributions, so that it is possible to modify a conditional distribution $Q_j$ under a distribution $Q\in\mathcal{Q}$, not modify any of the other conditional distributions $Q_{j'}$, $j'\not=j$, and still have it be the case that the resulting distribution belongs to $\mathcal{Q}$. This condition is formalized in the following.
\begin{condition}
\label{cond:var_indep}
    \textit{(Variation independence)}\; There exist sets $\mathcal{Q}_j$ of conditional distributions of $Z_j\mid \bar{Z}_{j-1}$, $j\in [d]$, such that $\mathcal{Q}$ is equal to the set of all distributions $Q$ such that, for all $j\in [d]$, the conditional distribution $Q_j$ belongs to $\mathcal{Q}_j$.
\end{condition}
The above condition is satisfied by the model $\mathcal{Q}$ described in each of our examples. It is also satisfied in many other interesting semiparametric examples, such as those where $E_{Q^0}(Z_1)$ or $E_{Q^0}(Z_2\mid Z_1)$ is known, but is not satisfied in some others, such as in cases where $E_{Q^0}(Z_2)$ is known --- in this case, knowing the marginal distribution $Q_1$ restricts the values that the conditional distribution $Q_2$ can take.

The upcoming results will provide forms of gradients of $\phi$ at $P^0$ in terms of gradients of $\psi$ at a generic distribution $\uQ^0\in \mathcal{Q}(P^0)$, where we recall that $\mathcal{Q}(P^0)$ is the set of distributions in $\mathcal{Q}$ whose relevant conditional distributions align with $P^0$. Since $Q^0\in\mathcal{Q}(P^0)$, all of these results are valid when $\uQ^0=Q^0$. However, since the distribution $Q^0$ is not generally identifiable from $P^0$ --- indeed, there may be no alignment for conditional distributions irrelevant to $\psi$ --- the particular value of $Q^0$ may be unknowable even given infinite data. In contrast, the set $\mathcal{Q}(P^0)$ would be knowable in such a setting. We, therefore, allow for the specification of an arbitrary distribution from the identifiable set $\mathcal{Q}(P^0)$ --- for example, it is always possible to take $\uQ^0=\theta(P^0)$, whose value depends only on $P^0$.

We require a strong overlap condition on the chosen $\uQ^0\in\mathcal{Q}(P^0)$ and the relevant conditional distributions of $P^0$. 
If $\uQ^0=Q^0$, then the upcoming condition strengthens the overlap condition (Condition~\ref{cond:identifiability}\ref{suff_overlap}) that was used to establish identifiability. To state this condition, for each $j\in \mathcal{J}$, we let $\lambda_{j-1}$ denote the Radon-Nikodym derivative of the marginal distribution of $\bar{Z}_{j-1}$ under sampling from $\uQ^0$ relative to the conditional distribution of $\bar{Z}_{j-1}\mid S\in\mathcal{S}_j$ under sampling from $P^0$.
\begin{condition}\label{cond:posPart}
\textit{(Strong overlap)}\; 
For each $j\in\mathcal{J}$, there exists $c_{j-1}>0$ such that $\uQ^0\{c_{j-1}^{-1}\le \lambda_{j-1}(\bar{Z}_{j-1})\le c_{j-1}\}=1$.
\end{condition}
Since $\bar{Z}_0=\emptyset$ almost surely under both of these distributions, $\lambda_{j-1}$ is the constant function that returns $1$ when $j=1$. Hence, the above condition only imposes a nontrivial requirement on $j\in\mathcal{J}\backslash \{1\}$. 

In the upcoming results, we suppose that the tangent set $\mathcal{T}(\uQ^0,\mathcal{Q})$ of $\mathcal{Q}$ at $\uQ^0$ is a closed linear subspace of $L_0^2(\uQ^0)$. We can therefore refer to $\mathcal{T}(\uQ^0,\mathcal{Q})$ as the tangent space without causing any confusion. The following lemma shows that, under the above conditions and the earlier stated data fusion condition, the pathwise differentiability of $\psi$ is equivalent to the pathwise differentiability of $\phi$. 
\begin{lemma}\label{lem:pd}
Suppose that Conditions~\ref{cond:identifiability}, \ref{cond:var_indep}, and \ref{cond:posPart} hold. Under these conditions, $\psi$ is pathwise differentiable at $\uQ^0$ relative to $\mathcal{Q}$ if and only if $\phi$ is pathwise differentiable at $P^0$ relative to $\mathcal{P}$.
\end{lemma}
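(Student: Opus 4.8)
The plan is to prove both implications by building an explicit two-way correspondence between submodels of $\mathcal{Q}$ passing through $\uQ^0$ and submodels of $\mathcal{P}$ passing through $P^0$ along which $\psi$ and $\phi$ agree, and then transferring pathwise differentiability across this correspondence. The structural facts I would lean on are that $\phi=\psi\circ\theta$, that $\psi(\theta(P))$ depends on $P\in\mathcal{P}$ only through the aligned relevant conditionals $P_j(\,\cdot\mid\bar z_{j-1},S\in\mathcal{S}_j)$, $j\in\mathcal{J}$, and that these coincide with the corresponding conditionals of $\uQ^0$ at $P=P^0$ because $\uQ^0\in\mathcal{Q}(P^0)$. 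Throughout, Condition~\ref{cond:var_indep} lets me split each tangent space into an orthogonal sum over $j\in[d]$ and restrict attention to submodels perturbing one conditional at a time, whose scores decompose accordingly.

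In one direction, given a submodel $\{P^{(\epsilon)}\}$ of $\mathcal{P}$ through $P^0$ with score $h=h_S+\sum_{j=1}^d h_j$, I would form the submodel $\{\tilde Q^{(\epsilon)}\}$ of $\mathcal{Q}$ whose relevant conditionals are $\tilde Q_j^{(\epsilon)}=P_j^{(\epsilon)}(\,\cdot\mid\bar z_{j-1},S\in\mathcal{S}_j)$ and whose irrelevant conditionals are frozen at those of $\uQ^0$; this lies in $\mathcal{Q}$ by Condition~\ref{cond:var_indep} (membership of $P^{(\epsilon)}$ in $\mathcal{P}$ forces these conditionals into the component sets $\mathcal{Q}_j$) and passes through $\uQ^0$ because $\uQ^0\in\mathcal{Q}(P^0)$. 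Since at every distribution in $\mathcal{P}$ the conditional $P_j(\,\cdot\mid\bar z_{j-1},s)$ is common to all $s\in\mathcal{S}_j$, the score of $\{\tilde Q^{(\epsilon)}\}$ is $\sum_{j\in\mathcal{J}}g_j$ with $g_j=h_j(\,\cdot\,,s_0)$ for any fixed $s_0\in\mathcal{S}_j$, and $\psi(\tilde Q^{(\epsilon)})=\psi(\theta(P^{(\epsilon)}))=\phi(P^{(\epsilon)})$. Conversely, given a submodel $\{Q^{(\epsilon)}\}$ of $\mathcal{Q}$ through $\uQ^0$ with score $\sum_{j=1}^d g_j$ (taking the $g_j$ bounded, which is harmless by density), I would lift it to $\{P^{(\epsilon)}\}$ by setting $P_j^{(\epsilon)}(\,\cdot\mid\bar z_{j-1},s)=Q_j^{(\epsilon)}(\,\cdot\mid\bar z_{j-1})$ for $j\in\mathcal{J}$, $s\in\mathcal{S}_j$, and $\bar z_{j-1}$ in the support of the $\bar Z_{j-1}$-marginal of $\uQ^0$, leaving all other conditionals and the law of $S$ fixed at $P^0$; this lies in $\mathcal{P}$ (witnessed by $Q^{(\epsilon)}$), passes through $P^0$, has score $\sum_{j\in\mathcal{J}}\mathbbm{1}(S\in\mathcal{S}_j)\,\mathbbm{1}(\bar Z_{j-1}\in\mathrm{supp}\,\uQ^0)\,g_j$, and satisfies $\phi(P^{(\epsilon)})=\psi(\theta(P^{(\epsilon)}))=\psi(Q^{(\epsilon)})$, where $\theta(P^{(\epsilon)})$ is chosen to agree with $Q^{(\epsilon)}$ on the relevant conditionals.

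With the correspondence in place, the two implications follow by tracking the gradient. If $\psi$ is pathwise differentiable at $\uQ^0$ with gradient $D^\psi_{\uQ^0}$, then evaluating it along $\{\tilde Q^{(\epsilon)}\}$ gives $\tfrac{\partial}{\partial\epsilon}\phi(P^{(\epsilon)})\big|_{\epsilon=0}=\sum_{j\in\mathcal{J}}E_{\uQ^0}\{D^\psi_{\uQ^0}(Z)\,g_j(\bar Z_j)\}$; rewriting each term as a $P^0$-expectation via the change of measure encoded by $\lambda_{j-1}$, together with the fact that $Z_j\mid\bar Z_{j-1},S\in\mathcal{S}_j$ has the same law under $P^0$ as under $\uQ^0$ (Condition~\ref{cond:identifiability}\ref{suff_alignment}), this equals $E_{P^0}\{D^\phi_{P^0}(X)\,h(X)\}$ with $D^\phi_{P^0}=\sum_{j\in\mathcal{J}}\mathbbm{1}(S\in\mathcal{S}_j)\,P^0(S\in\mathcal{S}_j)^{-1}\,\lambda_{j-1}(\bar Z_{j-1})\,\bar D^{\psi,j}_{\uQ^0}(\bar Z_j)$, where $\bar D^{\psi,j}_{\uQ^0}$ is the $\bar Z_j$-conditional mean of $D^\psi_{\uQ^0}$ recentered to have $\bar Z_{j-1}$-conditional mean zero. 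Condition~\ref{cond:posPart} places $D^\phi_{P^0}$ in $L^2_0(P^0)$, and orthogonality of the components of $h$ shows that this single $D^\phi_{P^0}$ reproduces the derivative along every $\mathcal{P}$-submodel, so $\phi$ is pathwise differentiable. The reverse implication is symmetric: starting from a gradient $D^\phi_{P^0}$ of $\phi$ and evaluating it along the lifts, one reads off that $\psi$ is pathwise differentiable at $\uQ^0$ with gradient $\sum_{j\in\mathcal{J}}P^0(S\in\mathcal{S}_j)\,\lambda_{j-1}(\bar Z_{j-1})^{-1}\,E_{P^0}\{D^\phi_{P^0}\mid\bar Z_j,S\in\mathcal{S}_j\}$, which Condition~\ref{cond:posPart} again puts in $L^2_0(\uQ^0)$.

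I expect the main obstacle to be the measure-theoretic mismatch created by the one-directional overlap in Condition~\ref{cond:identifiability}\ref{suff_overlap}: the law of $\bar Z_{j-1}$ under the aligned sources of $P^0$ may be supported more widely than under $\uQ^0$, so a lifted $\mathcal{Q}$-submodel only perturbs $P^0$ on the smaller support, and the transfers of scores and of gradients between $L^2_0(\uQ^0)$ and $L^2_0(P^0)$ must be shown to be bounded in both directions --- which is exactly what the two-sided bound on $\lambda_{j-1}$ in Condition~\ref{cond:posPart} supplies. The remaining steps --- verifying that the constructed families are genuine quadratic-mean-differentiable paths lying in their respective models after possibly shrinking the range of $\epsilon$, and the routine bookkeeping that perturbations of the law of $S$, of the non-aligned conditionals $P_j(\,\cdot\mid\cdot,s)$ for $s\notin\mathcal{S}_j$, and of the irrelevant conditionals $Q_j$, $j\in\mathcal{I}$, leave $\phi$ (respectively $\psi$) unchanged --- are not expected to cause difficulty.
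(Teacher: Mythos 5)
Your forward direction is essentially the paper's argument: decompose a $\mathcal{P}$-score via Condition~\ref{cond:var_indep}, pass to the induced $\mathcal{Q}$-submodel whose relevant conditionals are $P_j^{(\epsilon)}(\,\cdot\mid \bar z_{j-1},S\in\mathcal{S}_j)$ with irrelevant conditionals frozen, use $\psi(\tilde Q^{(\epsilon)})=\phi(P^{(\epsilon)})$, and transfer the derivative through the change of measure $\lambda_{j-1}$; your resulting $D^{\phi}_{P^0}$ is exactly \eqref{eq:phiGrad}, and you correctly identify that the two-sided bound in Condition~\ref{cond:posPart} is what legitimizes the score and gradient transfers (this quadratic-mean-differentiability bookkeeping, which you defer as routine, is in fact where most of the paper's technical work sits, in Lemma~\ref{lem:TPj}).

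The reverse direction, however, is not ``symmetric'' in the way you assert, and the explicit gradient you write down is wrong in general. Pairing your candidate $G=\sum_{j\in\mathcal{J}}P^0(S\in\mathcal{S}_j)\,\lambda_{j-1}(\bar Z_{j-1})^{-1}E_{P^0}\{D^{\phi}_{P^0}\mid \bar Z_j,S\in\mathcal{S}_j\}$ with a $\mathcal{Q}$-score $\sum_i g_i$ produces cross terms $E_{\uQ^0}\{G_j(\bar Z_j)\,g_i(\bar Z_i)\}$ with $i<j$ that need not vanish, because your $j$-th component is not conditionally centered given $\bar Z_{j-1}$ under $\uQ^0$; already for a score supported on a single early index $i$ the representation fails, since $E_{\uQ^0}[E_{\uQ^0}\{G_j\mid \bar Z_i\}\,g_i(\bar Z_i)]$ survives. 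The fix is to recenter: replace $E_{P^0}\{D^{\phi}_{P^0}\mid \bar Z_j,S\in\mathcal{S}_j\}$ by $E_{P^0}\{D^{\phi}_{P^0}\mid \bar Z_j,S\in\mathcal{S}_j\}-E_{P^0}\{D^{\phi}_{P^0}\mid \bar Z_{j-1},S\in\mathcal{S}_j\}$ (the operator $\Gamma_j$ of Lemma~\ref{lem:lambda_Q}). Subtracting a $\bar Z_{j-1}$-measurable function does not change the pairing with $g_j$, each recentered component lies in $L_0^2(\uQ^0_j)$ (square-integrability again via Condition~\ref{cond:posPart}), and orthogonality of the spaces $L_0^2(\uQ^0_j)$ then kills all cross terms, so the recentered sum is a genuine gradient of $\psi$. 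The paper reaches the same conclusion by a different device: it works with the canonical gradient $D^{*}_{P^0}$ and uses Lemmas~\ref{lem:TPj} and~\ref{lem:nTanSpace} to show it has the form $\sum_{j\in\mathcal{J}}\mathbbm{1}_{\mathcal{S}_j}(s)\mathbbm{1}_{\bar{\mathcal{Z}}_{j-1}^\dagger}(\bar z_{j-1})D_{\uQ^0,j}(\bar z_j)$ with each $D_{\uQ^0,j}$ conditionally centered, which is precisely what makes its cross terms vanish. So your plan is salvageable, but as stated the reverse implication has a genuine gap: you neither center the components nor argue that a single function reproduces the derivative along every $\mathcal{Q}$-score.
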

The next result provides a means to derive gradients of $\phi$. 
\begin{theorem}\label{thm:grad}
Suppose that Conditions~\ref{cond:identifiability}, \ref{cond:var_indep}, and \ref{cond:posPart} hold and that $\psi$ is pathwise differentiable at $\uQ^0$ relative to $\mathcal{Q}$ with gradient $D_{\uQ^0}$. Under these conditions, the following function is a gradient of $\phi$ at $P^0$ relative to $\mathcal{P}$:
\begin{align}
    D_{P^0}(z,s)&\equiv \sum_{j\in\mathcal{J}} \frac{\mathbbm{1}(s\in\mathcal{S}_j)}{P(S\in\mathcal{S}_j)} \lambda_{j-1}(\bar{z}_{j-1})D_{\uQ^0,j}(\bar{z}_j), \label{eq:phiGrad}
\end{align}
where $D_{\uQ^0,j}(\bar{z}_j)\equiv  E_{\uQ^0}\{D_{\uQ^0}(Z)\mid \bar{Z}_j=\bar{z}_j\}-E_{\uQ^0}\{D_{\uQ^0}(Z)\mid \bar{Z}_{j-1}=\bar{z}_{j-1}\}$.
\end{theorem}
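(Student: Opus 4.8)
The plan is to verify the two defining properties of a gradient of $\phi$ at $P^0$ relative to $\mathcal{P}$: that $D_{P^0}\in L_0^2(P^0)$, and that for every submodel $\{P^{(\epsilon)}\}\in\mathscr{P}(P^0,\mathcal{P})$ with score $h$ one has $\frac{\partial}{\partial\epsilon}\phi(P^{(\epsilon)})|_{\epsilon=0}=E_{P^0}\{D_{P^0}(X)h(X)\}$; pathwise differentiability of $\phi$ itself is already supplied by Lemma~\ref{lem:pd}. Square-integrability is easy: each summand of \eqref{eq:phiGrad} carries the bounded prefactor $\mathbbm{1}(s\in\mathcal{S}_j)/P(S\in\mathcal{S}_j)$ and, by Condition~\ref{cond:posPart}, the bounded factor $\lambda_{j-1}$, while $\uQ^0$-square-integrability of $D_{\uQ^0,j}(\bar Z_j)$ transfers to the law of $\bar Z_j\mid S\in\mathcal{S}_j$ implied by $P^0$ after changing measure via $\lambda_{j-1}\,\mathrm{d}P^0(\cdot\mid S\in\mathcal{S}_j)=\mathrm{d}\uQ^0$ on $\bar Z_{j-1}$ and using that $P^0_j(\cdot\mid\bar z_{j-1},S\in\mathcal{S}_j)=\uQ^0_j(\cdot\mid\bar z_{j-1})$ since $\uQ^0\in\mathcal{Q}(P^0)$. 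That each summand has mean zero follows by conditioning first on $\{S\in\mathcal{S}_j\}$ and then on $\bar Z_{j-1}$, invoking the same alignment together with $E_{\uQ^0}\{D_{\uQ^0,j}(\bar Z_j)\mid\bar Z_{j-1}\}=0$, which is immediate from the definition of $D_{\uQ^0,j}$.

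The substance is the derivative identity, and the central device is to lift any submodel of $\mathcal{P}$ through $P^0$ to a submodel of $\mathcal{Q}$ through $\uQ^0$. Given $\{P^{(\epsilon)}\}\in\mathscr{P}(P^0,\mathcal{P})$, define $Q^{(\epsilon)}$ by taking, for $j\in\mathcal{J}$, $Q^{(\epsilon)}_j$ to be the conditional law of $Z_j\mid\bar Z_{j-1},S\in\mathcal{S}_j$ under $P^{(\epsilon)}$, and, for $j\in\mathcal{I}$, $Q^{(\epsilon)}_j=\uQ^0_j$ for all $\epsilon$. Condition~\ref{cond:var_indep} guarantees $Q^{(\epsilon)}\in\mathcal{Q}$, and $Q^{(0)}=\uQ^0$ precisely because $\uQ^0\in\mathcal{Q}(P^0)$; routine arguments then show $\{Q^{(\epsilon)}\}\in\mathscr{P}(\uQ^0,\mathcal{Q})$. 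Since $\psi$ does not depend on conditional laws at irrelevant indices and $Q^{(\epsilon)}$ and $\theta(P^{(\epsilon)})$ share all relevant ones, $\phi(P^{(\epsilon)})=\psi(\theta(P^{(\epsilon)}))=\psi(Q^{(\epsilon)})$. To identify the score $g$ of $\{Q^{(\epsilon)}\}$, write the chain-rule decomposition $h(z,s)=h_S(s)+\sum_{l=1}^{d}h_l(\bar z_l,s)$ of the score of $\{P^{(\epsilon)}\}$; the common-conditional-distribution structure of $\mathcal{P}$ forces $p^{(\epsilon)}_j(z_j\mid\bar z_{j-1},s)$ to be free of $s$ on $\mathcal{S}_j$, whence $q^{(\epsilon)}_j(z_j\mid\bar z_{j-1})=p^{(\epsilon)}_j(z_j\mid\bar z_{j-1},s)$ for any $s\in\mathcal{S}_j$, so that $g=\sum_{j\in\mathcal{J}}g_j$ with $g_j(\bar z_j)=h_j(\bar z_j,s)$, $s\in\mathcal{S}_j$, and $E_{\uQ^0}\{g_j(\bar Z_j)\mid\bar Z_{j-1}\}=0$.

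With these pieces in hand, pathwise differentiability of $\psi$ at $\uQ^0$ gives $\frac{\partial}{\partial\epsilon}\phi(P^{(\epsilon)})|_{\epsilon=0}=E_{\uQ^0}\{D_{\uQ^0}(Z)g(Z)\}$, and by the tower property together with $E_{\uQ^0}\{g_j(\bar Z_j)\mid\bar Z_{j-1}\}=0$ this equals $\sum_{j\in\mathcal{J}}E_{\uQ^0}\{D_{\uQ^0,j}(\bar Z_j)g_j(\bar Z_j)\}$. For the other side I would expand $E_{P^0}\{D_{P^0}(X)h(X)\}$ summand by summand, condition on $\{S\in\mathcal{S}_j\}$, and substitute the decomposition of $h$: the $h_S$ term and the $h_l$ terms with $l\le j-1$ vanish because conditioning on $(\bar Z_{j-1},S)$ leaves $E_{\uQ^0}\{D_{\uQ^0,j}(\bar Z_j)\mid\bar Z_{j-1}\}=0$, and the $h_l$ terms with $l\ge j+1$ vanish because $\lambda_{j-1}(\bar Z_{j-1})D_{\uQ^0,j}(\bar Z_j)$ is $\bar Z_{l-1}$-measurable while $E_{P^0}\{h_l(\bar Z_l,S)\mid\bar Z_{l-1},S\}=0$; only the $h_j$ term survives, and on $\{S\in\mathcal{S}_j\}$ it equals $g_j(\bar Z_j)$. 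Changing measure through $\lambda_{j-1}\,\mathrm{d}P^0(\cdot\mid S\in\mathcal{S}_j)=\mathrm{d}\uQ^0$ on $\bar Z_{j-1}$, together with the alignment $P^0_j(\cdot\mid\bar z_{j-1},S\in\mathcal{S}_j)=\uQ^0_j(\cdot\mid\bar z_{j-1})$, then collapses the $j$-th summand to $E_{\uQ^0}\{D_{\uQ^0,j}(\bar Z_j)g_j(\bar Z_j)\}$, matching the previous display term by term. I expect the main obstacles to be bookkeeping rather than conceptual: rigorously producing the lifted submodel $\{Q^{(\epsilon)}\}$ and its score (including the requisite regularity and the constancy of $h_l$ on the fusion sets), and carefully organizing the cross-term cancellations in the expansion of $E_{P^0}\{D_{P^0}(X)h(X)\}$; the measure-change steps, while essential, are routine once Condition~\ref{cond:posPart} is available.
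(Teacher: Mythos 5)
Your proposal is correct and follows essentially the same route as the paper's proof: you lift the submodel $\{P^{(\epsilon)}\}$ to a submodel $\{Q^{(\epsilon)}\}$ of $\mathcal{Q}$ via the conditionals given $S\in\mathcal{S}_j$ (using Condition~\ref{cond:var_indep} for the irrelevant indices), use $\phi(P^{(\epsilon)})=\psi(Q^{(\epsilon)})$, identify the lifted score through the constancy of the score components across sources in $\mathcal{S}_j$ on the support (the content of the paper's Lemma~\ref{lem:TPj}), and then change measure via $\lambda_{j-1}$ under Conditions~\ref{cond:identifiability} and \ref{cond:posPart}, exactly as the paper does. The only differences are organizational — you verify membership of $D_{P^0}$ in $L_0^2(P^0)$ separately and match the two sides of the derivative identity term by term, whereas the paper transforms the $\uQ^0$-expectation directly into the $P^0$-inner product — and the QMD bookkeeping you defer is precisely what the paper delegates to the second part of its proof of Lemma~\ref{lem:TPj}.
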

Given any gradient of $\phi$, the canonical gradient can be derived by projecting that gradient onto the tangent space of $\mathcal{P}$ at $P^0$. The form of this projection is provided in Lemma~\ref{lem:Tproj} in Appendix~\ref{app:projTanSpace}. Applying this projection to a gradient of the form in \eqref{eq:phiGrad} provides a form for the canonical gradient. In what follows we use $\Pi_{\uQ^0}\{\,\cdot\mid \mathcal{A}\}$ to denote the $L_0^2(\uQ^0)$-projection operator onto a subspace $\mathcal{A}$ of $L_0^2(\uQ^0)$.
\begin{corollary}\label{cor:canGrad}
Suppose the conditions of Theorem~\ref{thm:grad} hold and also that $\Pi_{\uQ^0}\{\lambda_{j-1}D_{\uQ^0,j}\mid \mathcal{T}(\uQ^0,\mathcal{Q})\}(\bar{z}_j) = \lambda_{j-1}(\bar{z}_{j-1})\Pi_{\uQ^0}\{D_{\uQ^0,j}\mid \mathcal{T}(\uQ^0,\mathcal{Q})\}(\bar{z}_j)$ $P^0$-a.s. for all $j\in\mathcal{J}$. Then, the canonical gradient of $\phi$ relative to $\mathcal{P}$ is
\begin{align}
    D_{P^0}^*(z,s)&\equiv  \sum_{j\in\mathcal{J}} \mathbbm{1}(\bar{z}_{j-1}\in\bar{\mathcal{Z}}_{j-1}^\dagger) \frac{\mathbbm{1}(s\in\mathcal{S}_j)}{P(S\in\mathcal{S}_j)} \Pi_{\uQ^0}\{r_j\mid \mathcal{T}(\uQ^0,\mathcal{Q})\}(\bar{z}_j), \label{eq:phiCanGrad}
\end{align}
where $\bar{\mathcal{Z}}_{j-1}^\dagger$ denotes the support of $\bar{Z}_{j-1}$ under sampling from $\uQ^0$ and $r_j\in L_0^2(\uQ_j^0)$ is such that $r_j(\bar{z}_j)=\lambda_{j-1}(\bar{z}_{j-1})D_{\uQ^0,j}(\bar{z}_j)$.
\end{corollary}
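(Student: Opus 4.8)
The plan is to obtain the canonical gradient of $\phi$ by projecting the gradient $D_{P^0}$ from Theorem~\ref{thm:grad} onto the tangent space $\mathcal{T}(P^0,\mathcal{P})$. By the characterization of the canonical gradient recalled in Section~\ref{sec:efficiency theory}, $D_{P^0}^* = \Pi_{P^0}\{D_{P^0}\mid \mathcal{T}(P^0,\mathcal{P})\}$, and since $D_{P^0}$ is already a valid gradient, this projection is well defined and independent of which gradient we started from. So the entire argument reduces to (i) having an explicit description of $\mathcal{T}(P^0,\mathcal{P})$, and (ii) computing the projection of the specific function in \eqref{eq:phiGrad} onto it.

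For step (i), I would invoke Lemma~\ref{lem:Tproj} in Appendix~\ref{app:projTanSpace}, which the excerpt tells us gives the form of the projection onto $\mathcal{T}(P^0,\mathcal{P})$. Under Condition~\ref{cond:var_indep}, the model $\mathcal{P}$ factorizes across the conditional distributions indexed by $j$, so the tangent space decomposes as an orthogonal sum $\mathcal{T}(P^0,\mathcal{P}) = \bigoplus_{j\in[d]} \mathcal{T}_j$, where $\mathcal{T}_j$ consists of mean-zero scores that are functions of $(\bar{z}_j,s)$, are ``conditionally mean-zero'' given $(\bar{z}_{j-1},s)$, and — because of the alignment constraint in Condition~\ref{cond:identifiability}\ref{suff_alignment} tying together the conditional laws $P_j(\cdot\mid \bar z_{j-1},s)$ for $s\in\mathcal{S}_j$ — have a restricted dependence on $s$ on the fused block: on $\{s\in\mathcal{S}_j\}$ the $\bar z_{j-1}$-slice of such a score must, after reweighting by $\lambda_{j-1}$, lie in the tangent space $\mathcal{T}(\uQ^0,\mathcal{Q})$ of the target model. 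The orthogonality of the $\mathcal{T}_j$'s means the projection of $D_{P^0}$ can be done summand-by-summand, and because the $j$-th term of \eqref{eq:phiGrad} already lives in the ``$\le j$'' part of the filtration and is conditionally centered given $\bar z_{j-1}$ (this centering is exactly why $D_{\uQ^0,j}$ was defined as a difference of conditional expectations), only the $\mathcal{T}_j$ component survives for each $j$.

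For step (ii), I would then show that projecting the $j$-th term $\tfrac{\mathbbm{1}(s\in\mathcal{S}_j)}{P(S\in\mathcal{S}_j)}\lambda_{j-1}(\bar z_{j-1})D_{\uQ^0,j}(\bar z_j)$ onto $\mathcal{T}_j$ amounts to projecting, in $L_0^2(\uQ^0_j)$ and for each fixed $\bar z_{j-1}$ in the support of $\uQ^0$, the function $r_j$ onto $\mathcal{T}(\uQ^0,\mathcal{Q})$; the indicator $\mathbbm{1}(\bar z_{j-1}\in\bar{\mathcal{Z}}_{j-1}^\dagger)$ appears because outside the $\uQ^0$-support of $\bar Z_{j-1}$ there is no constraint linking the $P^0$-conditional law to the $\mathcal{Q}$-model, so the projection contributes nothing there, while the weight $\tfrac{\mathbbm{1}(s\in\mathcal{S}_j)}{P(S\in\mathcal{S}_j)}$ is an $s$-measurable multiplier that passes through the projection untouched. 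A change-of-measure computation using the definition of $\lambda_{j-1}$ as a Radon--Nikodym derivative, together with Condition~\ref{cond:posPart} to guarantee that $\lambda_{j-1}$ and its reciprocal are bounded (so that $r_j\in L_0^2(\uQ^0_j)$ and the reweighting between the $P^0$- and $\uQ^0$-inner products is a norm equivalence), converts the projection on the $P^0$ side into the stated projection $\Pi_{\uQ^0}\{r_j\mid\mathcal{T}(\uQ^0,\mathcal{Q})\}$ on the $\uQ^0$ side. Summing over $j\in\mathcal{J}$ yields \eqref{eq:phiCanGrad}.

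The main obstacle I anticipate is step (i): pinning down precisely how the alignment constraint across the fusion set $\mathcal{S}_j$ carves out $\mathcal{T}_j$ as a subspace of the ``free'' tangent space, and verifying that the resulting decomposition is genuinely orthogonal so that the projection is separable in $j$. In particular one must be careful that the overlap is only one-directional (Condition~\ref{cond:identifiability}\ref{suff_overlap} does not assume absolute continuity in the other direction), which is exactly what forces the support indicator $\mathbbm{1}(\bar z_{j-1}\in\bar{\mathcal{Z}}_{j-1}^\dagger)$ into the final expression and is the subtlety distinguishing this from the classical single-source case; checking that $r_j$ as reweighted still sits in the right $L_0^2$ space, and that the projection commutes with the $s$-dependent weights, are the technical points to get right, but they follow from Condition~\ref{cond:posPart} and the block structure once Lemma~\ref{lem:Tproj} is in hand.
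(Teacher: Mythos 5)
Your proposal follows essentially the same route as the paper's proof: the canonical gradient is obtained by projecting the gradient $D_{P^0}$ from Theorem~\ref{thm:grad} onto $\mathcal{T}(P^0,\mathcal{P})$ via Lemma~\ref{lem:Tproj}, with the $j=0$ and $j\in\mathcal{I}$ components vanishing because gradients are orthogonal to the nuisance tangent space (the paper's Lemma~\ref{lem:nTanSpace}) and, for $j\in\mathcal{J}$, $\Gamma_j(D_{P^0})=r_j/P^0(S\in\mathcal{S}_j)$ (the paper's Lemma~\ref{lem:LambdajIdent}), so linearity of the projection gives \eqref{eq:phiCanGrad}. One small correction to your description of the tangent space: by Lemma~\ref{lem:TPj}, a score in $\mathcal{T}(P^0,\mathcal{P}_j)$ must, on $\mathcal{S}_j\times\bar{\mathcal{Z}}_{j-1}^\dagger$, itself coincide with an element of $\mathcal{T}(\uQ^0,\mathcal{Q}_j)$ (no $\lambda_{j-1}$-reweighting in the membership condition); the density ratio enters only through the change of measure between the $P^0$- and $\uQ^0$-inner products in the projection computation, exactly as in your step (ii).
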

Corollary~\ref{cor:canGrad} generalizes previous works in data fusion, which mainly focus on particular parameters in nonparametric models, to arbitrary parameters in both nonparametric and semiparametric models for which $\mathcal{T}(Q^0,\mathcal{Q}_j)$ is sufficiently flexible. The condition on $\Pi_{\uQ^0}\{\lambda_{j-1}D_{\uQ^0,j}\mid \mathcal{T}(\uQ^0,\mathcal{Q})\}(\bar{z}_j)$ in Corollary~\ref{cor:canGrad} holds under arbitrary semiparametric $\mathcal{Q}$ when the fusion sets are nested, i.e. $\mathcal{S}_j \subseteq \mathcal{S}_m$ for any $m < j$, $m,j \in \mathcal{J}$.  Moreover, for each $j$ and any fusion sets, this condition holds when $\mathcal{Q}_j$ is nonparametric, has a symmetric conditional density as in Section~3, or satisfies the conditional moment restriction from Supplementary Appendix B. Furthermore, this condition holds provided the conditional distribution $Q_j$ of $Z_j\mid \bar{Z}_{j-1}$ satisfies a variation independence condition across possible values of $\bar{Z}_{j-1}$. Stated with $\bar{Z}_{j-1}$ discrete to avoid measurability considerations, this condition takes the form $\mathcal{Q}_j=\{Q_j : Q_j(\,\cdot\mid \bar{z}_{j-1})\in \mathcal{Q}_{j|\bar{z}_{j-1}}\,\forall \bar{z}_{j-1}\in\bar{\mathcal{Z}}_{j-1}\}$, where each $\mathcal{Q}_{j|\bar{z}_{j-1}}$ is a (possibly restricted) model for the distribution of $\bar{Z}_j\mid \bar{Z}_{j-1}=\bar{z}_{j-1}$.

We propose to construct either a one-step estimator using the canonical gradients derived using the procedure above. Under regularity conditions (outlined in Section~\ref{sec:efficiency theory}), the resulting estimator will be efficient among all regular and asymptotically linear estimators.

Because the canonical gradient is unique, the right-hand side of \eqref{eq:phiCanGrad} will be the same regardless of the chosen value of $\uQ^0\in\mathcal{Q}(P^0)$ that satisfies Condition~\ref{cond:posPart}. However, the calculations required to simplify that expression may differ. Indeed, that expression depends on $\uQ^0$ through the definitions of $r_j$ and through the projection operator $\Pi_{\uQ^0}\{\,\cdot\mid \mathcal{T}(\uQ^0,\mathcal{Q})\}$. 

Computing the projection in \eqref{eq:phiCanGrad} may be challenging in some semiparametric models, though there is substantial existing work providing the form of this projection in a variety of interesting examples \citep{pfanzagl1990estimation,bickel1993efficient,van2003unified,tsiatis2006semiparametric}. 
In contrast, computing this projection is necessarily trivial when $\mathcal{Q}$ is locally nonparametric, since in this case the tangent space of $\mathcal{Q}$ at $\uQ^0$ is equal to $L_0^2(\uQ^0)$ and the projection operator is the identity operator. Hence, in this special case, \eqref{eq:phiGrad} and \eqref{eq:phiCanGrad} are equal, and applying Theorem~\ref{thm:grad} to the one gradient $D_{\uQ^0}$ for $\psi$ that can possibly exist relative to a locally nonparametric model for $\mathcal{Q}$ necessarily yields the canonical gradient relative to $\mathcal{P}$. 
In semiparametric models, where there is more than one possible initial candidate gradient $D_{\uQ^0}$ to plug into \eqref{eq:phiGrad}, it is natural to wonder whether there is any such candidate for which \eqref{eq:phiGrad} and \eqref{eq:phiCanGrad} coincide. In general, this will fail to hold unless there is a gradient $D_{\uQ^0}$ of $\psi$ for which $z\mapsto \lambda_{j-1}(\bar{z}_{j-1})D_{\uQ^0,j}(\bar{z}_j)$ belongs to $\mathcal{T}(\uQ^0,\mathcal{Q})$ for all $j\in\mathcal{J}$. 
This does not hold in general. One example of a case where this typically fails to hold occurs in a model where it is known that $Z_j\perp \bar{Z}_{j-2}\mid Z_{j-1}$ for some $j\in\mathcal{J}$. 

In certain cases, there is a close connection between data fusion problems and missing data problems. In particular, if the fusion sets are nested such that $\mathcal{S}_d \subseteq \mathcal{S}_{d-1} \subseteq \ldots \subseteq \mathcal{S}_1$ with $\mathcal{S}_d$ nonempty, then the fusion framework can be framed as a monotone missing data problem. In this problem, the full data structure $Z^{F}$ is drawn from $Q^0$. The observed variable $Z$ is such that $Z_j=Z_j^F$ for all $j$ such that $S\in\mathcal{S}_j$ and $Z_j$ is coded as missing ($Z_j=\star$) otherwise. Under Condition~\ref{cond:identifiability}, this results in a missing data problem that satisfies coarsening at random \citep{heitjan1991ignorability}. Existing works provide general approaches to derive canonical gradients in these models and, when the fusion sets are nested, our results emerge as a special case of them. For example, our Theorem~\ref{thm:grad} can be derived as a special case of Theorem~1.1 of \cite{van2003unified} in this case. Moreover, the canonical gradient presented in Corollary~\ref{cor:canGrad} can be derived using a general strategy for deriving efficient estimators in monotone missing data problems \citep{robins1994estimation,tsiatis2006semiparametric}. To ease notation, when illustrating this we focus on the case where $\mathcal{J}=[d]$ and the support of $\bar{Z}_{j-1}\mid S\in\mathcal{S}_j$ under $P^0$ is equal to $\bar{\mathcal{Z}}_{j-1}^\dagger$. Letting $C=\max\{j : S\in\mathcal{S}_j\}$, with the maximum over the empty set being equal to $0$, it can be verified that the canonical gradient $D_{P^0}^*$ from Corollary~\ref{cor:canGrad} satisfies
\begin{align*}
    D_{P^0}^*(z,s)&= \frac{\mathbbm{1}(c=d)}{P^0(C\ge d)}\sum_{j=1}^d \Pi_{\uQ^0}\{r_j\mid \mathcal{T}(\uQ^0,\mathcal{Q})\}(\bar{z}_j) \\
    &\quad+ \sum_{\ell=1}^{d-1} \frac{\mathbbm{1}(c= \ell)-\mathbbm{1}(c\ge \ell)P^0(C=\ell\mid C\ge \ell)}{P^0(C\ge \ell+1)} \sum_{j=1}^{\ell} \Pi_{\uQ^0}\{r_j\mid \mathcal{T}(\uQ^0,\mathcal{Q})\}(\bar{z}_j).
\end{align*}
Noting that $f_{Q^0}(z):= \sum_{j=1}^d \Pi_{\uQ^0}\{r_j\mid \mathcal{T}(\uQ^0,\mathcal{Q})\}(\bar{z}_j)$ is a gradient of $\psi$ at $\uQ^0$ relative to $\mathcal{Q}$, the above identity shows that $D_{P^0}^*$ is the influence function of an augmented inverse probability weighted complete-case estimator derived from the full-data influence function $f_{Q^0}$ \citep[see Theorem~10.1 of][]{tsiatis2006semiparametric}. Consequently, the form of this canonical gradient could be identified or numerically approximated using the general strategies outlined in Chapter~11 of \cite{tsiatis2006semiparametric}. As is noted in \cite{tsiatis2006semiparametric}, the strategies outlined in that chapter are quite complex, often involving some form of numerical approximation, and so ``the actual implementation needs to be considered on a case-by-case basis.'' Corollary~\ref{cor:canGrad} demonstrates that, in the special case of data fusion with nested fusion sets, a simple form for the canonical gradient is available. Moreover, this gradient can be computed explicitly whenever the projections used to define it have a known form, which will typically be the case when a closed-form efficient estimator for $\psi(Q^0)$ is known in the non-data fusion setting where a random sample from $Q^0$ is directly observed. 

In settings where the fusion sets are not nested, it is not clear that it is even possible to write our data fusion problem as a missing data problem that satisfies coarsening at random. Hence, Corollary~\ref{cor:canGrad} presents what is, to our knowledge, the first general strategy for deriving the canonical gradient in data fusion problems. We conclude by noting that non-nested cases are far from the exception. Indeed, many previously studied data fusion problems involve non-nested fusion sets \citep[e.g.,][]{rudolph2017robust,westreich2017transportability,dahabreh2019extending,kallus2020optimal}.

\subsection{Canonical Gradient in Our Example}
\label{sec:estimation}

We now derive the canonical gradient of the longitudinal treatment effect described in each of the three semiparametric models described in Section~\ref{sec:examples}. 
An initial gradient $D_{Q^0}$ to plug into Theorem~\ref{thm:grad} can be found in Theorem~1 of \cite{van2012targeted} \citep[see also][]{bang2005doubly}. 
Following notations introduced in Section~\ref{sec:examples} and the results from Corollary~\ref{cor:canGrad}, we can use this initial gradient to show that the canonical gradient of $\phi$ under a locally nonparametric model is $D_{P^0}(x) = D^1_{P^0}(x)-D^0_{P^0}(x)$, where, for $a' \in \{0,1\}$ and letting $\bar{\mathcal{U}}^{\dagger}_{t}$ denote the support of $\bar{U}_{t}$ under sampling from $Q^0$, $D^{a'}_{P^0}=\sum_{t=1}^{T}D^{a'}_{P_{2t-1}^0}$ with
\begin{align}
   D^{a'}_{P_{2t-1}^0}(x) & \equiv \mathbbm{1}(\bar{u}_{t-1} \in \bar{\mathcal{U}}^{\dagger}_{t-1})\frac{\mathbbm{1}(s \in \mathcal{S}_{2t-1})}{\textnormal{pr}(S \in \mathcal{S}_{2t-1}) } \left\{\prod_{m=1}^{t-1} \frac{\mathbbm{1}(a_m=a')}{\textnormal{pr}(A_m=a'\mid \bar{u}_{m},\bar{A}_{m-1}=a', S\in \mathcal{S}_{2t-1})}\right\} \nonumber\\
    & \quad \cdot\left\{\prod_{m=1}^{t-1} \frac{ dP^0(u_m \mid  \bar{u}_{m-1},\bar{A}_{m-1} = a',  S\in \mathcal{S}_{2m-1}) }{dP^0(u_m \mid  \bar{u}_{m-1},\bar{A}_{m-1} = a', S\in \mathcal{S}_{2t-1})}  \right\}\{\Tilde{L}^{a'}_{t}(\Bar{h}_{t},s) - \Tilde{L}^{a'}_{t-1}(\Bar{h}_{t-1},s)\}, \label{eq:IF_long_nonpar}
\end{align}
where we abuse notation and write $\bar{A}_{m-1} = a'$ to mean that $A_j = a'$ for all $j \in [m-1]$. Compared to the canonical gradient $D_{Q^0}$ under no data fusion, the above uses all available data sources that are valid, as reflected by the indicator terms, and then corrects for the variable shifts between the target population and the observed as reflected by the first two terms inside the curly brackets.

For the symmetric location semiparametric model described in Section~\ref{sec:examples}, we provide the canonical gradient of $\phi$ in Appendix~\ref{example:longitudinal treatment effect appendix}. For the linear semiparametric model described in Section~\ref{sec:examples}, when the fusion sets are nested, the canonical gradient of $\phi$ takes the form of $D^{\dagger}_{P^0} = D^{\dagger 1}_{P^0}-D^{\dagger 0}_{P^0}$, where
\begin{align}
    D^{\dagger a'}_{P^0}(x) & = D^{a'}_{P^0}(x) - D^{a'}_{P_{2T-1}^0}(x) +\mathbbm{1}(\bar{u}_{T-1} \in \bar{\mathcal{U}}^{\dagger}_{T-1})\frac{\mathbbm{1}(s \in \mathcal{S}_{2T-1})}{P^0(S \in \mathcal{S}_{2T-1})} \nonumber \\
    &\quad \cdot  \left[E_{\uQ^0}\left\{\ell(X)\ell(X)^\top\right\}^{-1} E_{\uQ^0}\left\{\ell(X) \lambda_{2T-2}(\bar{H}_{T-1},A_{T-1}) D^{a'}_{\uQ^0_{2T-1}}(X)\right\}\right]^\top \ell(x),\label{eq:IF_long_tau}
\end{align}
where $\uQ^0$ is a generic element of $\mathcal{Q}(P^0)$, $\ell = (\ell_{\beta}, \ell_{\alpha})$ with $\ell_{\beta}(z) \equiv \nabla_{\beta} \log \tilde{\tau}_\alpha\{y - \beta^{\top}\kappa(\bar{h}_{T-1}, a_{T-1})\mid \bar{h}_{T-1}, a_{T-1}\}$ and $\ell_{\alpha}(z) \equiv \nabla_{\alpha} \log \tilde{\tau}_\alpha\{y - \beta^{\top}\kappa(\bar{h}_{T-1}, a_{T-1})\mid \bar{h}_{T-1}, a_{T-1}\}$, where $\tilde{\tau}_\alpha$ denotes the conditional density function of the error distribution $\tau_{\alpha}.$ The key distinction between \eqref{eq:IF_long_tau} and \eqref{eq:IF_long_nonpar} lies in the subspace $\mathcal{T}(\uQ^0_{2T-1},\mathcal{Q}_{2T-1})$, where the projection used in \eqref{eq:IF_long_tau} can be derived via a score function argument for finite-dimensional parameters. A detailed roadmap is at  Appendix~\ref{example:longitudinal treatment effect appendix}.

\section{Experiment and data illustration}
\subsection{Experiment}
\label{sec:simulations}
We simulated $k=9$ data sources with $T=4$ and fixed data sizes as specified in Table~\ref{tab:Long_setup}. The variable $\bar{U}_{3}$ of the target population follows a multivariate normal distribution as specified in Table~\ref{tab:Long_setup_distributions}, while treatments $\bar{A}_3$ are independent Bernoulli($0.5$). The outcome variable $Y$ is such that $Y = \beta^{\top}\kappa(\bar{H}_{3}, A_{3}) + \epsilon $ and the heteroskedastic error $\epsilon$ satisfies $\epsilon\mid \bar{H}_{3}=\bar{h}_3, A_{3}=a_3\sim \tau_{\alpha}(\,\cdot\mid \bar{h}_3,a_3)$, where, for $\tilde{\alpha}>0$, $\tau_{\tilde{\alpha}}(\,\cdot\mid \bar{h}_3,a_3)$ denotes the distribution of $\tilde{\alpha} u_3$ times a random variable following a student's t-distribution with 3 degrees of freedom. The indexing parameter $\alpha$ equals 0.1 and the values of $\beta$ and the form of $\kappa$ are specified in Appendix~\ref{app:sim_long}. The underlying true distribution belongs to the two models mentioned in Section~\ref{sec:examples}, where, for the second semiparametric model, the error distribution is known to belong to $\{\tau_{\tilde{\alpha}} : \tilde{\alpha}>0\}$. We evaluated the treatment effect of always being on treatment versus never being on treatment. Under this setup, data source 9 aligns perfectly with the target population distribution and it is possible to provide valid inferences for $\psi(Q^0)$ using this data source alone.  We compared three one-step estimators that were constructed via the canonical gradients under these models respectively, and under three scenarios: (1) no data fusion with $\mathcal{S}_7 = \mathcal{S}_5 =\mathcal{S}_3 = \mathcal{S}_1 = \{9\}$, (2) partial data fusion with $\mathcal{S}_7 = \{6,9\}$, $\mathcal{S}_5 = \{5,9\}$, $\mathcal{S}_3 = \{3,9\}$ and $\mathcal{S}_1 = \{1,3,9\}$, and (3) complete data fusion with $\mathcal{S}_7 = \{6,8,9\}$, $\mathcal{S}_5 = \{5,6,8,9\}$, $\mathcal{S}_3 = \{3,5,7,9\}$ and $\mathcal{S}_1 = \{1,3,9\}$. 

The outcome regressions and the propensity scores were estimated via SuperLearner \citep{van2007super} with a library containing a generalized linear model with interaction terms and general additive model under their default settings in the \texttt{SuperLearner} R package \citep{polley2010super}. Each density in the ratios that appear in the second line of Equation~\ref{eq:IF_long_nonpar} was estimated via kernel density estimation using a normal scale bandwidth. Details on the estimation of the conditional density of the regression error in the semiparametric model where this density is known to be symmetric are given in  Appendix~\ref{app:sim}. 
For each simulation study presented in this work, 1000 Monte Carlo replications were conducted.

Using data fusion yields around $10 \%$ and $20\%$ efficiency gains for partial and complete fusion respectively in the nonparametric case. Compared to the nonparametric estimator that was constructed using only data source 9, the semiparametric estimators gained approximately $40$\%, $50$\%, and $60$\% efficiency under no, partial and complete data fusion. Coverage was near nominal for all estimators (93\%-98\%), and the widths of intervals decreased along the same lines as the mean squared error did, with more data fusion and more restrictive statistical models each leading to tighter intervals (Table~\ref{tab:Long_numbers}).

\begin{table}[htb]
\centering
\caption{Bias, variance and coverage of the estimated longitudinal treatment effect. }
\begin{tabular}{lrrrrrrrrr}
\toprule
   & \multicolumn{3}{c}{No Data Fusion} 
& \multicolumn{3}{c}{Partial Data Fusion}&
\multicolumn{3}{c}{Complete Data Fusion}  \\
\cmidrule(l){2-4} \cmidrule(l){5-7}\cmidrule(l){8-10}
 &Bias& Var& Cov\%& Bias& Var& Cov \% &  Bias& Var &Cov \% \\
\midrule
Nonparametric & -0.003 & 0.170 & 97 & -0.004 & 0.154  & 98 & 0.003 & 0.137 & 98 \\
Symmetric & -0.004 & 0.106  & 93& 0.064 & 0.087  & 94& 0.049 & 0.076  & 94\\
\bottomrule
\end{tabular}
\label{tab:Long_numbers}
\end{table}

\subsection{Data illustration: evaluating the immunogenicity of an HIV vaccine}
\label{sec:real data analysis}
The STEP study and the Phambili study were two phase IIb trials that evaluated the safety and efficacy of the same HIV vaccine regimen in different populations \citep{buchbinder2008efficacy,gray2011safety}. The STEP study was conducted at 34 sites in multiple continents, whereas the Phambili study tested the same vaccine at 5 sites in South Africa.  Although the two studies enrolled in distinct geographic areas, there was overlap in the overall demographic characteristics (Table~\ref{tab:HVTN}). Both studies suggested that the vaccine did not prevent HIV-1 infection, although most vaccinees developed an HIV-specific immune response to Clade B as measured by interferon-$\gamma$ ELISpot. 
We studied the associations between immune responses and baseline covariates in Kullback-Leibler projections onto univariate working logistic regression models (details in  Appendix \ref{sec: real_data_appendix}). Studying the associations of these immune responses with baseline covariates is important because, for future vaccines, these responses may be a correlate of protection that can be used to bridge vaccine efficacy estimates to new populations \citep{plotkin2012nomenclature}. We separately treated each of the study populations from the two studies as the target population and compared the estimation results generated by using one single dataset with using both datasets.

The Phambili ELISpot data consist of measurements of Gag, Nef, and Pol immune responses for 93 vaccinees, and the STEP immunogenicity data consist of measurements for 722 vacinees and 257 placebo participants. 
The two trials used different sampling schemes. In Phambili, the immunogenicity assessment was conducted on the first 93 vaccinees who were HIV-1 antibody negative at the week 12 visit and had received the second injection \citep{gray2011safety}. In STEP, a two-phase sampling scheme was adopted to oversample HIV cases \citep{huang2014immune}. To account for this two-phase sampling scheme, we weighted the STEP data by the inverse probability of being sampled given infection status and treatment group. We aimed to evaluate the three HIV-specific immune responses for the vaccine group, namely Gag, Nef, and Pol to clade-B. We used the same criteria as \cite{huang2014immune} for defining a positive immune response. Regardless of the target population, we assumed that the conditional distribution of immune response for the vaccine group between STEP and Phambili given baseline covariates are the same (Condition~\ref{cond:identifiability}\ref{suff_alignment}). These baseline covariates consisted of baseline adenovirus serotype-5 positivity along with age, body mass index, sex, and circumcision status. 
We combined sex and circumcision status into a single 3-level categorical variable to differentiate uncircumcised men from circumcised men and women. 
Data from the HVTN 204 phase II trial support the plausibility of Condition~\ref{cond:identifiability}\ref{suff_alignment}. In particular, they suggest that HIV-specific immune response profiles do not differ by geographic region, whereas baseline adenovirus serotype-5 neutralizing antibodies are strongly associated with HIV-specific immune responses \citep{churchyard2011phase}. 
To examine if Condition~\ref{cond:identifiability}\ref{suff_alignment} is reasonable, we followed the approach proposed by \cite{luedtke2019omnibus} and found no strong evidence against it (Gag: p=0.31; Nef: p=0.46; Pol: p=0.98). In addition, we observed reasonable overlap between the distributions of covariates (Table~\ref{tab:HVTN} and Figure~\ref{fig:density}).

\begin{table}[tb]
\centering
\caption{Estimated coefficient in univariate working logistic model between baseline covariates and Gag using the STEP and Phambili study data. Estimation results are presented as estimates  (standard errors). }
\begin{tabular}{lrrrr}
\toprule
   & \multicolumn{2}{c}{Augmenting STEP} 
& \multicolumn{2}{c}{Augmenting Phambili} \\
\cmidrule(l){2-3} \cmidrule(l){4-5}
 &  STEP (N=979) & Both (N=1072)  &  Phambili (N=93)  & Both (N=1072) \\
\midrule
\hspace{1em}Age & 0.04 (0.01) & 0.04 (0.01) & -0.04 (0.06) & -0.05 (0.04)\\
\hspace{1em}Circumcised male& 1.82 (0.15) & 1.82 (0.14) & 1.01 (0.58) & 0.93 (0.40)\\
\hspace{1em}Uncircumcised male& -0.70 (0.21) & -0.68 (0.20) & 0.24 (0.75) & 0.07 (0.57)\\
\hspace{1em}Female & 1.04 (0.53) & 1.29 (0.40) & 0.51 (0.69) & 0.38 (0.47)\\
\hspace{1em}BMI & 0.01 (0.02) & 0.01 (0.02) & -0.04 (0.05) & -0.05 (0.03)\\
\hspace{1em}Ad-5 positivity & -1.47 (0.25) & -1.47 (0.24) & -0.87 (0.80) & -0.60 (0.62)\\
\bottomrule
\end{tabular}
\label{tab:beta_gag}
\end{table}

We studied coefficients in univariate logistic working models for baseline covariates age, sex, body mass index and baseline adenovirus serotype-5 positivity for each of the three HIV-specific immune markers. We used data from both STEP and Phambili to estimate the conditional expectation of immune response given the set of covariates using SuperLearner \citep{van2007super,polley2010super} with a library containing a random forest, generalized additive model, and elastic net. Results for immune marker Gag are presented in Table~\ref{tab:beta_gag} below and results for Nef and Pol can be found at  Appendix~\ref{sec: real_data_appendix}. All estimators that make use of data fusion gave estimates that were very close to the estimators that only used data from one trial. In contrast, the corresponding standard errors were reduced by more than 30\% for each immune response when data from Phambili were augmented with the STEP study data. The proposed methods also brought efficiency gains when data from STEP were augmented, though these gains were more modest due to the smaller sample size of the Phambili study.

\section*{Acknowledgements}
\label{sec:acknowledgement}
The authors are grateful to Peter Gilbert for helpful discussions and to the participants, investigators, and sponsors of the STEP and Phambili trials. This work was supported by the NIH through award numbers DP2-LM013340 and 5UM1AI068635-09. We thank Ellen Graham, Marco Carone and Andrea Rotnitzky for helpful discussions.

\section*{Appendix}

\begin{appendices}
The appendices are organized as follows. Appendix~\ref{app:tanspace} provides a characterization of the tangent space of $\mathcal{P}$. Appendix~\ref{app:projTanSpace} provides a means to project onto this tangent space. Appendix~\ref{sec:equivDiff} establishes the equivalence of the pathwise differentiability of $\psi$ and $\phi$, and proves the results from Section~\ref{sec:efficientIFs} from the main text. Appendix~\ref{sec: projections_examples} lists out some interesting examples on how to perform projections for different semiparametric models.  Appendix~\ref{example:longitudinal treatment effect appendix} provides further details on the longitudinal treatment effect example from the main text, as well as five additional examples in Appendix~\ref{additional examples}. Appendix~\ref{sec:efficiency theory_appendix} gives a review of nonparametric and semiparametric efficiency theory, followed by a discussion on the construction of estimators in Appendix~\ref{app:contruct_estimators}. Appendix~\ref{app:sim} provides additional information about and results from our simulation study. Appendix~\ref{sec:miscellanea} provides more discussion on potential extensions and implementation of the proposed methods. 

\section{Deriving the canonical gradient}\label{app:deriveGrad}

\subsection{Characterizing the tangent space of $\mathcal{P}$}\label{app:tanspace}
Throughout this appendix we let $\uQ^0$ denote a generic element of $\mathcal{Q}(P^0)$ and $\bar{\mathcal{Z}}_{j-1}^\dagger$ denote the support of $\bar{Z}_{j-1}$ under sampling from $\uQ^0$. Because the distributions in $\mathcal{Q}$ are mutually absolutely continuous, $\bar{\mathcal{Z}}_{j-1}^\dagger$ is the same regardless of the particular value of $\uQ^0\in\mathcal{Q}(P^0)$. 
Let $\mathcal{T}(P^0,\mathcal{P})$ denote the tangent set of model $\mathcal{P}$ at $P^0$.
Because we have assumed that $\mathcal{T}(\uQ^0,\mathcal{Q})$ is a closed linear subspace of $L_0^2(\uQ^0)$, it can be verified that $\mathcal{T}(P^0,\mathcal{P})$, which is the tangent set of a model that is nonparametric model up to the restriction imposed by the data fusion alignment condition, is itself a closed linear subspace of $L_0^2(P^0)$. Therefore, we also refer to $\mathcal{T}(P^0,\mathcal{P})$ as the tangent space of $\mathcal{P}$ at $P^0$. 
Let $L_0^2(\uQ_j^0)$ denote the subspace of $L_0^2(\uQ^0)$ consisting of all functions $f$ for which there exists a function $g : \prod_{i=1}^j \mathcal{Z}_i\rightarrow\mathbb{R}$ that is such that $f(z)=g(\bar{z}_j)$ for all $z=(z_1,\ldots,z_d)\in\mathcal{Z}$ and $E_{\uQ^0}[g(\bar{Z}_j)\mid \bar{Z}_{j-1}]=0$ with $\uQ^0$-probability one. In an abuse of notation, when $f\in L_0^2(\uQ_j^0)$ we let $f(\bar{z}_j)$ denote the unique value that $f(z')$ takes for all $z'$ that are such that $\bar{z}_j'=\bar{z}_j$, so that $f(\bar{z}_j)=f(z)$. Similarly let $L_0^2(P_j^0)$ denote the subspace of $L_0^2(P^0)$ consisting of all functions $f$ for which there exists a function $g : (\prod_{i=1}^j \mathcal{Z}_i)\times \mathcal{S}\rightarrow\mathbb{R}$ that is such that $f(z,s)=g(\bar{z}_j,s)$ for all $z=(z_1,\ldots,z_d)\in\mathcal{Z}$ and $s\in [k]$ and $E_{P^0}[g(\bar{Z}_j,S)\mid \bar{Z}_{j-1},S]=0$ with $P^0$-probability one, and define $L_0^2(P_0^0)$ to be the subspace of $L_0^2(P^0)$ consisting of all functions $f$ for which there exists $g : \mathcal{S}\rightarrow\mathbb{R}$ that is such that $f(z,s)=g(s)$ for all $z\in\mathcal{Z}$ and $s\in [k]$ and that is such that $E_{P^0}[g(S)]=0$. In a similar abuse of notation as that noted earlier, we let $f(\bar{z}_j,s)= f(z,s)$ when $f\in L_0^2(P_j^0)$ and $f(s)= f(z,s)$ when $f\in L_0^2(P_0^0)$.

For each $j\in [d]$, let $\mathcal{T}(\uQ^0,\mathcal{Q}_j)$ be the subspace of $L_0^2(\uQ_j^0)$ that consists of all $f_j\in L_0^2(\uQ_j^0)$ that arise as scores of univariate submodels $\{Q^{(\epsilon)} : \epsilon\in [0,\delta)\}$ for which  $Q_i^{(\epsilon)}=\uQ_i^0$ for all $\epsilon\in [0,\delta)$ and $i\not=j$ and for which $Q^{(\epsilon)}=\uQ^0$ when $\epsilon=0$, where here and throughout score functions are defined in a quadratic mean differentiability sense \citep[Section~7.2 of][]{van2000asymptotic}. Similarly, for $j\in \{0\}\cup [d]$, let $\mathcal{T}(P^0,\mathcal{P}_j)$ denote the subspace of $L_0^2(P_j^0)$ that consists of all $f_j\in L_0^2(P_j^0)$ that arise as scores of univariate submodels $\{P^{(\epsilon)} : \epsilon\in [0,\delta)\}$ for which  $P_i^{(\epsilon)}=P_i^0$ for all $\epsilon\in [0,\delta)$ and $i\not=j$ and for which $P^{(\epsilon)}=P^0$ when $\epsilon=0$. By Condition~\ref{cond:var_indep} from the main text and the definition of $\mathcal{P}$, $\mathcal{P}$ is variation independent in the sense that there exist sets $\mathcal{P}_j$ of conditional distributions of $Z_j\mid \bar{Z}_{j-1},S$, $j\in [d]$, such that $\mathcal{P}$ is equal to the set of all distributions $P$ such that, for all $j\in [d]$, the conditional distribution $P_j$ belongs to $\mathcal{P}_j$ and the marginal distribution of $S$ belongs to the collection of categorical distributions on $[k]$. Hence, by Lemma~1.6 of \cite{van2003unified} and the fact that the tangent set of $\mathcal{P}$ at $P^0$ is a closed linear space, the tangent space $\mathcal{T}(P^0,\mathcal{P})$ of $\mathcal{P}$ at $P^0$ takes the form $\bigoplus_{j=0}^d \mathcal{T}(P^0,\mathcal{P}_j)\equiv \{\sum_{j=0}^d f_j : f_j\in \mathcal{T}(P^0,\mathcal{P}_j)\}$, and the $L_0^2(P)$ projection of a function onto $\mathcal{T}(P^0,\mathcal{P}_j)$ is equal to the sum of the projections onto $\mathcal{T}(P^0,\mathcal{P}_j)$, $j=0,1,\ldots,d$.

Since the marginal distribution of $S$ is unrestricted, $\mathcal{T}(P_0^0,\mathcal{P}_0)=L_0^2(P_0^0)$. Moreover, if $j\in\mathcal{I}$, then the conditional distribution of $Z_j\mid \bar{Z}_{j-1},S$ is also unrestricted, and so $\mathcal{T}(P^0,\mathcal{P}_j)=L_0^2(P_j^0)$. The following result characterizes the other tangent spaces that appear in the direct sum defining $\mathcal{T}(P^0,\mathcal{P})$.
\begin{lemma}\label{lem:TPj}
If Conditions~\ref{cond:identifiability}, \ref{cond:var_indep}, and \ref{cond:posPart} from the main text hold and $j\in\mathcal{J}$, then
\begin{align}
    \mathcal{T}(P^0,\mathcal{P}_j)\equiv \Big\{&(z,s)\mapsto h_j(\bar{z}_j,s) + \mathbbm{1}_{\mathcal{S}_j}(s) \mathbbm{1}_{\bar{\mathcal{Z}}_{j-1}^\dagger}(\bar{z}_{j-1}) [f_j(\bar{z}_j)-h_j(\bar{z}_j,s)] \nonumber \\
    &: f_j\in\mathcal{T}(\uQ^0,\mathcal{Q}_j),h_j\in  L_0^2(P_j^0)\Big\}. \label{eq:TPj}
\end{align}
\end{lemma}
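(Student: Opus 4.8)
The plan is to establish the displayed identity by verifying the two inclusions separately, after first pinning down the precise restriction that the data fusion conditions place on the $j$-th conditional for a relevant index $j\in\mathcal{J}$. Writing $\chi_j(\bar z_{j-1},s):=\mathbbm{1}_{\mathcal{S}_j}(s)\,\mathbbm{1}_{\bar{\mathcal{Z}}_{j-1}^\dagger}(\bar z_{j-1})$ for the indicator of the ``aligned region'', I would first argue from the definition of $\mathcal{P}$, Condition~\ref{cond:var_indep}, and the variation-independent representation of $\mathcal{P}$ recalled just before the lemma that the set $\mathcal{P}_j$ appearing in that representation consists exactly of those conditional distributions of $Z_j\mid\bar Z_{j-1},S$ that, on $\{\chi_j=1\}$, do not depend on $s$ and equal $Q_j(\,\cdot\mid\bar z_{j-1})$ for a common $Q_j\in\mathcal{Q}_j$, and, on $\{\chi_j=0\}$, are arbitrary conditional distributions. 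This uses Condition~\ref{cond:identifiability}\ref{suff_alignment} together with the exchangeability it implies (so that conditioning on $S=s$ for $s\in\mathcal{S}_j$, or on $S\in\mathcal{S}_j$, yields the same conditional), and the fact that for a relevant index all members of $\mathcal{Q}(P^0)$ share their $j$-th conditional, so $\uQ_j^0$ is unambiguous; Condition~\ref{cond:posPart} will enter only to guarantee that $\{\chi_j=1\}$ has positive $P^0$-mass.

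For the inclusion ``$\supseteq$'', I would fix $f_j\in\mathcal{T}(\uQ^0,\mathcal{Q}_j)$ and $h_j\in L_0^2(P_j^0)$ and build a submodel realizing the claimed score. Choose a quadratic-mean-differentiable submodel $\{q_j^{(\epsilon)}\}$ of $\mathcal{Q}_j$ through $\uQ_j^0$ with score $f_j$, and a quadratic-mean-differentiable submodel $\{\tilde p_j^{(\epsilon)}\}$ of the unrestricted family of conditional distributions of $Z_j\mid\bar Z_{j-1},S$ through $P_j^0$ with score $(1-\chi_j)h_j$; the latter lies in $L_0^2(P_j^0)$ since $1-\chi_j$ is $(\bar Z_{j-1},S)$-measurable and $h_j$ is conditionally mean zero. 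Let $P^{(\epsilon)}$ agree with $P^0$ in every coordinate except the $j$-th, whose conditional equals $q_j^{(\epsilon)}$ on $\{\chi_j=1\}$ and $\tilde p_j^{(\epsilon)}$ on $\{\chi_j=0\}$. To see $P^{(\epsilon)}\in\mathcal{P}$, I would exhibit as witness the $Q^{(\epsilon)}\in\mathcal{Q}$ that agrees with $\uQ^0$ except in its $j$-th conditional $q_j^{(\epsilon)}$ (which lies in $\mathcal{Q}$ by variation independence): Condition~\ref{cond:identifiability}\ref{suff_alignment} holds for $(Q^{(\epsilon)},P^{(\epsilon)})$ by construction at index $j$ and by the defining property of $\uQ^0\in\mathcal{Q}(P^0)$ at the other indices, and Condition~\ref{cond:identifiability}\ref{suff_overlap} is inherited from $(\uQ^0,P^0)$ because perturbing only the $j$-th conditional changes the relevant marginals of $\bar Z$ and conditionals of $\bar Z\mid S$ only by mutually absolutely continuous perturbations. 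Since only $P_j$ is perturbed, the score of $\{P^{(\epsilon)}\}$ at $\epsilon=0$ equals $f_j$ on $\{\chi_j=1\}$ and $h_j$ on $\{\chi_j=0\}$, i.e. $\chi_jf_j+(1-\chi_j)h_j=h_j+\chi_j(f_j-h_j)$, which therefore lies in $\mathcal{T}(P^0,\mathcal{P}_j)$.

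For the inclusion ``$\subseteq$'', let $f\in\mathcal{T}(P^0,\mathcal{P}_j)$, so $f\in L_0^2(P_j^0)$ and $f(z,s)=\partial_\epsilon\log p_j^{(\epsilon)}(z_j\mid\bar z_{j-1},s)|_{\epsilon=0}$ for some submodel $\{P^{(\epsilon)}\}$ perturbing only $P_j$. Since $P^{(\epsilon)}\in\mathcal{P}$, on $\{\chi_j=1\}$ the conditional $p_j^{(\epsilon)}(\,\cdot\mid\bar z_{j-1},s)$ is forced to equal a common $q_j^{(\epsilon)}(\,\cdot\mid\bar z_{j-1})\in\mathcal{Q}_j$ with $q_j^{(0)}=\uQ_j^0$; hence the restriction of $f$ to $\{\chi_j=1\}$ does not depend on $s$ and coincides with the score $f_j$ of $\{q_j^{(\epsilon)}\}$, whose quadratic-mean differentiability is inherited from that of $\{P_j^{(\epsilon)}\}$ over the positive-mass set $\{\chi_j=1\}$, so $f_j\in\mathcal{T}(\uQ^0,\mathcal{Q}_j)$. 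Taking $h_j:=f\in L_0^2(P_j^0)$, the identity $f=h_j+\chi_j(f_j-h_j)$ holds pointwise --- on $\{\chi_j=1\}$ the right side is $f+(f_j-f)=f_j=f$, and on $\{\chi_j=0\}$ it is $f$ --- so $f$ belongs to the right-hand side of \eqref{eq:TPj}.

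I expect the main obstacle to be the construction in the ``$\supseteq$'' direction: checking carefully that splicing a submodel of $\mathcal{Q}_j$ on $\{\chi_j=1\}$ with an unrestricted submodel on $\{\chi_j=0\}$ yields a genuine quadratic-mean-differentiable path that stays inside $\mathcal{P}$ (i.e.\ does not violate Condition~\ref{cond:identifiability}), and that the score of this path is precisely the prescribed piecewise function. The remaining steps are routine bookkeeping: first treating bounded $f_j$ and $h_j$ and then passing to the general case via the assumed closedness of $\mathcal{T}(P^0,\mathcal{P}_j)$ and $\mathcal{T}(\uQ^0,\mathcal{Q}_j)$, and invoking the strong overlap of Condition~\ref{cond:posPart} to ensure $\{\chi_j=1\}$ carries positive $P^0$-probability so that the score decomposition is not vacuous.
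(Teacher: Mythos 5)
Your two-inclusion structure and the spliced-submodel construction mirror the paper's own proof, but there is a genuine gap at exactly the point you set aside as ``routine bookkeeping'': the transfer of quadratic mean differentiability between $L^2(\uQ^0)$ and $L^2(P^0)$, which is where Condition~\ref{cond:posPart} actually does its work. You assert that strong overlap enters ``only to guarantee that $\{\chi_j=1\}$ has positive $P^0$-mass'', but that is not its role. In your ``$\supseteq$'' direction, the path $\{Q_j^{(\epsilon)}\}$ is quadratic mean differentiable with remainder small in $L^2$ of $\uQ^0$, whereas the spliced path $\{P^{(\epsilon)}\}$ must be shown quadratic mean differentiable with remainder small in $L^2$ of $P^0$: on the aligned region the remainder is integrated against $P_j^0(dz_j\mid\bar z_{j-1},\mathcal{S}_j)\,\bar P^0_{j-1}(d\bar z_{j-1}\mid\mathcal{S}_j)$, and converting this to an integral under $\uQ^0$ uses Condition~\ref{cond:identifiability} to replace $P_j^0(\cdot\mid\bar z_{j-1},\mathcal{S}_j)$ by $\uQ^0_j(\cdot\mid\bar z_{j-1})$ together with $\bar P^0_{j-1}(d\bar z_{j-1}\mid\mathcal{S}_j)=\lambda_{j-1}(\bar z_{j-1})^{-1}\bar{\uQ}^0_{j-1}(d\bar z_{j-1})$ on the support and the bound $\lambda_{j-1}^{-1}\le c_{j-1}$ (the lower half of Condition~\ref{cond:posPart}). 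In your ``$\subseteq$'' direction the same issue appears in reverse: ``quadratic-mean differentiability inherited over the positive-mass set $\{\chi_j=1\}$'' is not an argument. The $o(\epsilon^2)$ bound you start from is under $P^0$, and to conclude that the induced path through $\uQ_j^0$ is quadratic mean differentiable in $L^2(\uQ^0)$ — hence that $f_j\in\mathcal{T}(\uQ^0,\mathcal{Q}_j)$ — you must use $\lambda_{j-1}\le c_{j-1}$ (the upper half of the condition) so that the $\uQ^0$-integral of the squared remainder is dominated by a constant times the $P^0$-integral. If $\lambda_{j-1}$ were unbounded this step can fail outright, which is precisely why the lemma assumes strong overlap rather than mere positivity; supplying these two change-of-measure estimates is the analytic core of the paper's proof.

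A smaller omission in the ``$\subseteq$'' direction: the claim that the score cannot depend on $s$ on $\{\chi_j=1\}$ does hold, but it requires noting that $dP^{(\epsilon)}/dP^0(z,s)$ takes a common value for all $s\in\mathcal{S}_j$ there (because $P_j^{(\epsilon)}(\cdot\mid\bar z_{j-1},s)$ equals a single $Q_j^{(\epsilon)}(\cdot\mid\bar z_{j-1})$ by membership in $\mathcal{P}$), and then decomposing the quadratic-mean remainder over $s\in\mathcal{S}_j$ into nonnegative terms, each of which must separately be $o(\epsilon^2)$; the paper spells this out. With those steps filled in, your outline matches the paper's argument.
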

\begin{proof}[Proof of Lemma~\ref{lem:TPj}]
Fix $j\in\mathcal{J}$ and let $\mathcal{A}_j$ denote the right-hand side of \eqref{eq:TPj}. We first show that $\mathcal{A}_j\subseteq \mathcal{T}(P^0,\mathcal{P}_j)$, and then we show $\mathcal{T}(P^0,\mathcal{P}_j)\subseteq\mathcal{A}_j$.\\[1em]
\textbf{Part 1 of proof: $\mathcal{A}_j\subseteq \mathcal{T}(P^0,\mathcal{P}_j)$.} Fix $f_j\in\mathcal{T}(\uQ^0,\mathcal{Q}_j)$ and $h_j\in  L_0^2(P_j^0)$. As $f_j\in \mathcal{T}(\uQ_j^0,\mathcal{Q}_j)$, there exists a univariate submodel $\{Q^{(\epsilon)} : \epsilon\in [0,\delta)\}$ with score $f_j$ at $\epsilon=0$ for which  $Q_i^{(\epsilon)}=\uQ_i^0$ for all $\epsilon\in [0,\delta)$ and $i\not=j$ and for which $Q^{(\epsilon)}=\uQ^0$ when $\epsilon=0$. For each $\epsilon\in [0,\delta)$, we let $P^{(\epsilon)}$ be the distribution whose Radon-Nikodym derivative satisfies
\begin{align*}
    &\frac{dP^{(\epsilon)}}{dP^0}(z,s) \\
    &=  \left(\frac{dQ_j^{(\epsilon)}}{d\uQ_j^0}(z_j\mid \bar{z}_{j-1})\right)^{\mathbbm{1}_{\mathcal{S}_j}(s)\mathbbm{1}_{\bar{\mathcal{Z}}_{j-1}^\dagger}(\bar{z}_{j-1})} [C_j^{(\epsilon)}(\bar{z}_{j-1},s)\kappa(\epsilon h_j(\bar{z}_j,s))]^{\mathbbm{1}\left(\{s\not\in \mathcal{S}_j\}\cup \{\bar{z}_{j-1}\not\in\bar{\mathcal{Z}}_{j-1}^\dagger\}\right)},
\end{align*}
where $\kappa(x)=2[1+\exp(-2x)]^{-1}$, $\frac{dQ_j^{(\epsilon)}}{d\uQ_j^0}(\cdot\mid \bar{z}_{j-1})$ denotes the Radon-Nikodym derivative of $Q_j^{(\epsilon)}(\cdot\mid \bar{z}_{j-1})$ relative to $\uQ_j^0(\cdot\mid \bar{z}_{j-1})$ and $c_j^{(\epsilon)}(\bar{z}_{j-1},s)\equiv 1/\int \kappa(\epsilon h_j(\bar{z}_j,s)) P^0(dz_j\mid \bar{z}_{j-1},s)$. It can be readily shown that $P^{(\epsilon)}$ belongs to $\mathcal{P}$ since, for all $j'\in \mathcal{J}$ and $s\in \mathcal{S}_{j'}$, (a) the marginal distribution of $\bar{Z}_{j'-1}$ under sampling from $Q^{(\epsilon)}$ is absolutely continuous with respect to the conditional distribution of $\bar{Z}_{j'-1}\mid S=s$ under sampling from $P^{(\epsilon)}$, and (b) $P_{j'}^{(\epsilon)}(\cdot\mid \bar{z}_{j'-1},s)=Q_{j'}^{(\epsilon)}(\cdot\mid \bar{z}_{j'-1})$ $Q^{(\epsilon)}$-almost everywhere. Indeed, (b) can be seen to hold by inspecting the definition of $P^{(\epsilon)}$, and (a) can be seen to hold for all $j'\not=j$ since Condition~\ref{cond:identifiability} holds and for $j'=j$ by the following observations: the marginal distribution of $\bar{Z}_{j-1}$ under $Q^{(\epsilon)}$ is absolutely continuous with respect to the analogous marginal distribution under $\uQ^0$ since distributions in $\mathcal{Q}$ are mutually absolutely continuous; the marginal distribution of $\bar{Z}_{j-1}$ under $\uQ^0$ is absolutely continuous with respect to the distribution of $\bar{Z}_{j-1}\mid S=s$ under $P^0$ by Condition~\ref{cond:identifiability}; and the distribution of $\bar{Z}_{j-1}\mid S=s$ under $P^0$ can be seen to be absolutely continuous with respect to the analogous distribution under $P^{(\epsilon)}$ by inspecting the definition of $P^{(\epsilon)}$. We will now also show that $\{P^{(\epsilon)} : \epsilon\in [0,\delta)\}$ is quadratic mean differentiable at $\epsilon=0$ with score $(z,s)\mapsto h_j(\bar{z}_j,s) + \mathbbm{1}_{\mathcal{S}_j}(s) \mathbbm{1}_{\bar{\mathcal{Z}}_{j-1}^\dagger}(\bar{z}_{j-1}) [f_j(\bar{z}_j)-h_j(\bar{z}_j,s)]$. In particular, we will show that $r(\epsilon)=o(\epsilon^2)$, where
\begin{align*}
    &r(\epsilon) \\
    &\equiv \int \left(\frac{dP^{(\epsilon)}}{dP^0}(z,s)^{1/2}- 1 - \frac{1}{2}\epsilon \left\{h_j(\bar{z}_j,s) + \mathbbm{1}_{\mathcal{S}_j}(s) \mathbbm{1}_{\bar{\mathcal{Z}}_{j-1}^\dagger}(\bar{z}_{j-1}) [f_j(\bar{z}_j)-h_j(\bar{z}_j,s)]\right\}\right)^2 dP^0(z,s).
\end{align*}
Here we have chosen to use $P^0$ as the dominating measure (this choice simplifies our calculations, but has no bearing on the quadratic mean differentiability property since this property is invariant to the choice of dominating measure). To show the above, we start by noting that
\begin{align}
    r(\epsilon)&= \int \mathbbm{1}_{\mathcal{S}_j}(s)\mathbbm{1}_{\bar{\mathcal{Z}}_{j-1}^\dagger}(\bar{z}_{j-1})\left(\frac{dQ_j^{(\epsilon)}}{d\uQ_j^0}(z_j\mid \bar{z}_{j-1})^{1/2}- 1 - \frac{1}{2}\epsilon f_j(\bar{z}_j)\right)^2 dP^0(z,s) \label{eq:rEps} \\
    &\quad + \int \mathbbm{1}\left(\{s\not\in \mathcal{S}_j\}\cup \{\bar{z}_{j-1}\not\in\bar{\mathcal{Z}}_{j-1}^\dagger\}\right) \left([C_j^{(\epsilon)}(\bar{z}_{j-1},s)\kappa(\epsilon h_j(\bar{z}_j,s))]^{1/2}- 1 - \frac{1}{2}\epsilon h_j(\bar{z}_j,s)\right)^2 \nonumber \\
    &\quad\hspace{2em}\cdot dP^0(z,s). \nonumber
\end{align}
Straightforward calculations show that the second term on the right is $o(\epsilon^2)$ (cf. Example~25.16 and Lemma~7.6 in \cite{van2000asymptotic}). We now argue that the first term is $o(\epsilon^2)$. To do this, we let $\bar{P}_{j-1}^0(\cdot\mid \mathcal{S}_j)$ denote conditional distribution of $\bar{Z}_{j-1}$ given that $S\in\mathcal{S}_j$ under $P^0$ and $P_j^0(\cdot\mid \bar{z}_{j-1},\mathcal{S}_j)$ the conditional distribution of $Z_j$ given that $\bar{Z}_{j-1}=\bar{z}_{j-1}$ and $S\in \mathcal{S}_j$. We note that
\begin{align*}
    &\int \mathbbm{1}_{\mathcal{S}_j}(s)\mathbbm{1}_{\bar{\mathcal{Z}}_{j-1}^\dagger}(\bar{z}_{j-1}) \left(\frac{dQ_j^{(\epsilon)}}{d\uQ_j^0}(z_j\mid \bar{z}_{j-1})^{1/2}- 1 - \frac{1}{2}\epsilon f_j(\bar{z}_j)\right)^2 dP^0(z,s) \\
    &= P^0(S\in\mathcal{S}_j)\int \mathbbm{1}_{\bar{\mathcal{Z}}_{j-1}^\dagger}(\bar{z}_{j-1}) \left(\frac{dQ_j^{(\epsilon)}}{d\uQ_j^0}(z_j\mid \bar{z}_{j-1})^{1/2}- 1 - \frac{1}{2}\epsilon f_j(\bar{z}_j)\right)^2 P^0(dz\mid \mathcal{S}_j) \\
    &= P^0(S\in\mathcal{S}_j)\int_{\bar{\mathcal{Z}}_{j-1}^\dagger} \int_{\mathcal{Z}_j} \left(\frac{dQ_j^{(\epsilon)}}{d\uQ_j^0}(z_j\mid \bar{z}_{j-1})^{1/2}- 1 - \frac{1}{2}\epsilon f_j(\bar{z}_j)\right)^2  \\
    &\hspace{11em}\cdot P_j^0(dz_j\mid \bar{z}_{j-1},\mathcal{S}_j)\bar{P}_{j-1}^0(d\bar{z}_{j-1}\mid \mathcal{S}_j) \\
    &= P^0(S\in\mathcal{S}_j)\int_{\bar{\mathcal{Z}}_{j-1}^\dagger} \int_{\mathcal{Z}_j} \left(\frac{dQ_j^{(\epsilon)}}{d\uQ_j^0}(z_j\mid \bar{z}_{j-1})^{1/2}- 1 - \frac{1}{2}\epsilon f_j(\bar{z}_j)\right)^2  \uQ_j^0(dz_j\mid \bar{z}_{j-1})\bar{P}_{j-1}^0(d\bar{z}_{j-1}\mid \mathcal{S}_j). \\
    \intertext{Letting $\bar{\uQ}_{j-1}^0$ denote the marginal distribution of $\bar{Z}_{j-1}$ under sampling from $\uQ^0$ and letting $c_{j-1}<\infty$ denote the constant guaranteed to hold by Condition~\ref{cond:posPart}, the above display continues as}
    &= P^0(S\in\mathcal{S}_j)\int_{\bar{\mathcal{Z}}_{j-1}^\dagger} \int_{\mathcal{Z}_j} \left(\frac{dQ_j^{(\epsilon)}}{d\uQ_j^0}(z_j\mid \bar{z}_{j-1})^{1/2}- 1 - \frac{1}{2}\epsilon f_j(\bar{z}_j)\right)^2  \\
    &\hspace{11em}\cdot \uQ_j^0(dz_j\mid \bar{z}_{j-1})\lambda_{j-1}(\bar{z}_{j-1})^{-1}\bar{\uQ}_{j-1}^0(d\bar{z}_{j-1}) \\
    &\le c_{j-1} P^0(S\in\mathcal{S}_j)\int_{\bar{\mathcal{Z}}_{j-1}^\dagger} \int_{\mathcal{Z}_j} \left(\frac{dQ_j^{(\epsilon)}}{d\uQ_j^0}(z_j\mid \bar{z}_{j-1})^{1/2}- 1 - \frac{1}{2}\epsilon f_j(\bar{z}_j)\right)^2  \uQ_j^0(dz_j\mid \bar{z}_{j-1}) \bar{\uQ}_{j-1}^0(d\bar{z}_{j-1}) \\
    &= c_{j-1} P^0(S\in\mathcal{S}_j)\int \left(\frac{dQ_j^{(\epsilon)}}{d\uQ_j^0}(z_j\mid \bar{z}_{j-1})^{1/2}- 1 - \frac{1}{2}\epsilon f_j(\bar{z}_j)\right)^2  d\uQ^0(z),
\end{align*}
where we used that, on $\bar{\mathcal{Z}}_{j-1}^\dagger$, $1/\lambda_{j-1}(\cdot)$ is the Radon-Nikodym derivative of the absolutely continuous part of the conditional distribution of $\bar{Z}_{j-1}\mid S\in\mathcal{S}_j$ under sampling from $P^0$ relative to the marginal distribution of $\bar{Z}_{j-1}$ under sampling from $\uQ^0$, where this absolutely continuous part defined via Lebesgue's decomposition theorem. 
The right-hand side above is $o(\epsilon^2)$ since $\{Q^{(\epsilon)} : \epsilon\in [0,\delta)\}$ is quadratic mean differentiable. Returning to \eqref{eq:rEps}, this shows that $r(\epsilon)=o(\epsilon^2)$. Hence, $\{P^{(\epsilon)} : \epsilon\in [0,\delta)\}$ is a submodel of $\mathcal{P}$ that is such that $P^{(\epsilon)}=P^0$ when $\epsilon=0$ and that is quadratic mean differentiable at $\epsilon=0$ with score $(z,s)\mapsto h_j(\bar{z}_j,s) + \mathbbm{1}_{\mathcal{S}_j}(s) \mathbbm{1}_{\bar{\mathcal{Z}}_{j-1}^\dagger}(\bar{z}_{j-1}) [f_j(\bar{z}_j)-h_j(\bar{z}_j,s)]$. As $f_j\in\mathcal{T}(\uQ^0,\mathcal{Q}_j)$ and $h_j\in  L_0^2(P_j^0)$ were arbitrary, $\mathcal{A}_j\subseteq \mathcal{T}(P^0,\mathcal{P}_j)$.\\[1em]
\textbf{Part 2 of proof: $\mathcal{T}(P^0,\mathcal{P}_j)\subseteq \mathcal{A}_j$.}  Fix $g_j\in \mathcal{T}(P^0,\mathcal{P}_j)$ and let $\{P^{(\epsilon)} : \epsilon\in [0,\delta)\}$ be a submodel of $\mathcal{P}$ that is such that $P^{(\epsilon)}=P^0$ when $\epsilon=0$ and that has score $g_j$ at $\epsilon=0$. By the variation independence of $P_i^0$ and $P_j^0$, $i\not=j$, we can suppose without loss of generality that $P_i^{(\epsilon)}=P_i^0$ for all $i\not=j$ and also that the marginal distribution of $S$ under $P^{(\epsilon)}$ is equal to the marginal distribution of $S$ under $P^0$. We will show that there exist $f_j\in\mathcal{T}(\uQ^0,\mathcal{Q}_j)$ and $h_j\in  L_0^2(P_j^0)$ such that $g_j(z,s)=h_j(\bar{z}_j,s) + \mathbbm{1}_{\mathcal{S}_j}(s) \mathbbm{1}_{\bar{\mathcal{Z}}_{j-1}^\dagger}(\bar{z}_{j-1}) [f_j(\bar{z}_j)-h_j(\bar{z}_j,s)]$ $P^0$-almost everywhere. Since $\mathcal{T}(P^0,\mathcal{P}_j)$ is necessarily a subset of the maximal tangent space $L_0^2(P_j^0)$, we can let $h_j=g_j$. It remains to show that there exists an $f_j\in\mathcal{T}(\uQ^0,\mathcal{Q}_j)$ that is such that $g_j(z,s)=f_j(z)$ for $P^0$-almost all $z\in \bar{\mathcal{Z}}_{j-1}^\dagger$ and all $s\in \mathcal{S}_j$. Since $g_j\in L_0^2(P_j^0)$, we recall that $g_j(z,s)$ does not depend on $(z_{j+1},\ldots,z_d)$, we continue our earlier convention and write $g_j(\bar{z}_j,s)$ to denote unique value that $g_j(z',s)$ takes for all $z'$ that are such that $\bar{z}_j'=\bar{z}_j$. By the quadratic mean differentiability of $\{P^{(\epsilon)} : \epsilon\in [0,\delta)\}$,
\begin{align*}
    o(\epsilon^2)&= \int \left(\frac{dP^{(\epsilon)}}{dP^0}(z,s)^{1/2}- 1 - \frac{1}{2}\epsilon g_j(\bar{z}_j,s)\right)^2 dP^0(z,s) \\ 
    &= \int \mathbbm{1}_{\mathcal{S}_j}(s) \mathbbm{1}_{\bar{\mathcal{Z}}_{j-1}^\dagger}(\bar{z}_{j-1}) \left(\frac{dP^{(\epsilon)}}{dP^0}(z,s)^{1/2}- 1 - \frac{1}{2}\epsilon g_j(\bar{z}_j,s)\right)^2 dP^0(z,s) \\
    &\quad+ \int \mathbbm{1}\left(\{s\not\in \mathcal{S}_j\}\cup \{\bar{z}_{j-1}\not\in\bar{\mathcal{Z}}_{j-1}^\dagger\}\right)\left(\frac{dP^{(\epsilon)}}{dP^0}(z,s)^{1/2}- 1 - \frac{1}{2}\epsilon g_j(\bar{z}_j,s)\right)^2 dP^0(z,s).
\end{align*}
Since both terms on the right are nonnegative, they must both be $o(\epsilon^2)$. This is true, in particular, for the first term, yielding:
\begin{align*}
    o(\epsilon^2)&= \int \mathbbm{1}_{\mathcal{S}_j}(s) \mathbbm{1}_{\bar{\mathcal{Z}}_{j-1}^\dagger}(\bar{z}_{j-1}) \left(\frac{dP^{(\epsilon)}}{dP^0}(z,s)^{1/2}- 1 - \frac{1}{2}\epsilon g_j(\bar{z}_j,s)\right)^2 dP^0(z,s) \\
    &= \sum_{s : s\in\mathcal{S}_j} P^0(S=s)\int  \mathbbm{1}_{\bar{\mathcal{Z}}_{j-1}^\dagger}(\bar{z}_{j-1}) \left(\frac{dP^{(\epsilon)}}{dP^0}(z,s)^{1/2}- 1 - \frac{1}{2}\epsilon g_j(\bar{z}_j,s)\right)^2 P^0(dz\mid s).
\end{align*}
For each $\epsilon\in [0,\delta)$, let $Q^{(\epsilon)}\in\mathcal{Q}$ be such that $Q_j^{(\epsilon)}(\cdot\mid \bar{z}_{j-1})=P_j^{(\epsilon)}(\cdot\mid \bar{z}_{j-1}, \mathcal{S}_j)$ for $\uQ^0$-almost all $\bar{z}_{j-1}\in\bar{\mathcal{Z}}_{j-1}^\dagger$ and $Q_i^{(\epsilon)}=Q_i^{(\epsilon)}$ for all $i\not= j$. It can then be verified that $\frac{dP^{(\epsilon)}}{dP^0}(z,s)=\frac{dQ^{(\epsilon)}}{d\uQ^0}(z)$ for all $s\in\mathcal{S}_j$ and $z$ that are such that $\bar{z}_{j-1}\in \bar{\mathcal{Z}}_{j-1}^\dagger$. Combining the fact that $\frac{dP^{(\epsilon)}}{dP^0}(z,s)$ does not depend on the particular value of $s\in\mathcal{S}_j$ with the fact that all $k$ of the (nonnegative) terms in the sum above are $o(\epsilon^2)$, it must be the case that there exists some function $f_j : \prod_{i=1}^j \mathcal{Z}_i\rightarrow\mathbb{R}$ such that $f_j(\bar{z}_j)=g_j(\bar{z}_j,s)$ for $P^0$-almost all $(\bar{z}_j,s)$ that are such that $\bar{z}_{j-1}\in\bar{\mathcal{Z}}_{j-1}^\dagger$ and $s\in\mathcal{S}_j$. Plugging these observations into the above yields that
\begin{align*}
    o(\epsilon^2)&= \sum_{s : s\in\mathcal{S}_j} P^0(S=s) \int \mathbbm{1}_{\bar{\mathcal{Z}}_{j-1}^\dagger}(\bar{z}_{j-1})  \left(\frac{dQ^{(\epsilon)}}{d\uQ^0}(z)^{1/2}- 1 - \frac{1}{2}\epsilon f_j(\bar{z}_j)\right)^2 P^0(dz\mid s) \\
    &=P^0(S\in\mathcal{S}_j) \int \mathbbm{1}_{\bar{\mathcal{Z}}_{j-1}^\dagger}(\bar{z}_{j-1})   \left(\frac{dQ^{(\epsilon)}}{d\uQ^0}(z)^{1/2}- 1 - \frac{1}{2}\epsilon f_j(\bar{z}_j)\right)^2 P^0(dz\mid \mathcal{S}_j) \\
    &=P^0(S\in\mathcal{S}_j) \int_{\bar{\mathcal{Z}}_{j-1}^\dagger} \int_{\mathcal{Z}_j}   \left(\frac{dQ^{(\epsilon)}}{d\uQ^0}(z)^{1/2}- 1 - \frac{1}{2}\epsilon f_j(\bar{z}_j)\right)^2 P_j^0(dz_j\mid \bar{z}_{j-1},\mathcal{S}_j) \bar{P}_{j-1}^0(d\bar{z}_{j-1}\mid \mathcal{S}_j) \\
    &= P^0(S\in\mathcal{S}_j) \int_{\bar{\mathcal{Z}}_{j-1}^\dagger} \int_{\mathcal{Z}_j}   \left(\frac{dQ^{(\epsilon)}}{d\uQ^0}(z)^{1/2}- 1 - \frac{1}{2}\epsilon f_j(\bar{z}_j)\right)^2 \uQ_j^0(dz_j\mid \bar{z}_{j-1}) \bar{P}_{j-1}^0(d\bar{z}_{j-1}\mid \mathcal{S}_j) \\
    &= P^0(S\in\mathcal{S}_j) \int_{\bar{\mathcal{Z}}_{j-1}^\dagger} \int_{\mathcal{Z}_j}   \left(\frac{dQ^{(\epsilon)}}{d\uQ^0}(z)^{1/2}- 1 - \frac{1}{2}\epsilon f_j(\bar{z}_j)\right)^2 \\
    &\hspace{11em}\cdot \uQ_j^0(dz_j\mid \bar{z}_{j-1}) \lambda_{j-1}(\bar{z}_{j-1})^{-1} \bar{\uQ}_{j-1}^0(d\bar{z}_{j-1}) \\
    &\ge P^0(S\in\mathcal{S}_j) c_{j-1}^{-1} \int_{\bar{\mathcal{Z}}_{j-1}^\dagger} \int_{\mathcal{Z}_j}   \left(\frac{dQ^{(\epsilon)}}{d\uQ^0}(z)^{1/2}- 1 - \frac{1}{2}\epsilon f_j(\bar{z}_j)\right)^2 \uQ_j^0(dz_j\mid \bar{z}_{j-1})  \bar{\uQ}_{j-1}^0(d\bar{z}_{j-1}) \\
    &= P^0(S\in\mathcal{S}_j) c_{j-1}^{-1} \int   \left(\frac{dQ^{(\epsilon)}}{d\uQ^0}(z)^{1/2}- 1 - \frac{1}{2}\epsilon f_j(\bar{z}_j)\right)^2 d\uQ^0(z) \\
\end{align*}
where we used Condition~\ref{cond:identifiability} to replace $P_j^0(\cdot\mid \bar{z}_{j-1},s)$ by $\uQ_j^0(\cdot \mid \bar{z}_{j-1})$ and Condition~\ref{cond:posPart} to lower bound $\lambda_j(\bar{z}_{j-1})^{-1}$ by $c_{j-1}^{-1}$. As $P^0(S\in\mathcal{S}_j) c_{j-1}^{-1}>0$, the above is only possible if the integral above is $o(\epsilon^2)$. This implies that $\{Q^{(\epsilon)} : \epsilon\in [0,\delta)\}$ is a submodel of $\mathcal{Q}$ that is quadratic mean differentiable at $\epsilon=0$ with score $f_j$ and $Q_i^{(\epsilon)}=\uQ_i^0$ for all $i\not=j$. It is also readily verified that $Q^{(\epsilon)}=\uQ^0$ when $\epsilon=0$. Hence, it must be the case that $f_j\in \mathcal{T}(\uQ^0,\mathcal{Q}_j)$. Finally, noting that $f_j(\bar{z}_j)=g_j(\bar{z}_j,s)$ for $P^0$-almost all $(\bar{z}_j,s)$ that are such that $\bar{z}_{j-1}\in\bar{\mathcal{Z}}_{j-1}^\dagger$ and $s\in\mathcal{S}_j$, we see that $g_j(z,s)=h_j(\bar{z}_j,s) + \mathbbm{1}_{\mathcal{S}_j}(s) \mathbbm{1}_{\bar{\mathcal{Z}}_{j-1}^\dagger}(\bar{z}_{j-1}) [f_j(\bar{z}_j)-h_j(\bar{z}_j,s)]$ $P^0$-almost everywhere. Hence, $g_j\in\mathcal{A}_j$. As $g_j$ was an arbitrary element of $\mathcal{T}(P^0,\mathcal{P}_j)$, we see that $\mathcal{T}(P^0,\mathcal{P}_j)\subseteq\mathcal{A}_j$.
\end{proof}

\subsection{Projecting onto the tangent space of $\mathcal{P}$}\label{app:projTanSpace}

For a subspace $\mathcal{A}$ of $L_0^2(\uQ^0)$, we let $\Pi_{\uQ^0}(\,\cdot\mid \mathcal{A})$ denote the projection operator onto $\mathcal{A}$. We define $\Pi_{P^0}(\,\cdot\mid \mathcal{A})$ similarly for subspaces $\mathcal{A}$ of $L_0^2(P^0)$. We begin with a lemma that will prove useful later when we establish the form of $\Pi_{P^0}\{\,\cdot\mid \mathcal{T}(P^0,\mathcal{P})\}$.

\begin{lemma}
\label{lem:lambda_Q}
Let $f\in L_0^2(P^0)$ and $j\in\mathcal{J}$. If Conditions~\ref{cond:identifiability} and \ref{cond:posPart} hold, then the following function is contained in $L_0^2(\uQ_j^0)$:
\begin{align*}
    \Gamma_j(f) : \bar{z}_j&\mapsto \mathbbm{1}_{\bar{\mathcal{Z}}_{j-1}^\dagger}(\bar{z}_{j-1}) E_{P^0}\left[E_{P^0}\left\{f(Z,S)\mid \bar{Z}_j,S\right\} - E_{P^0}\left\{f(Z,S)\mid \bar{Z}_{j-1},S\right\}\mid \bar{Z}_j=\bar{z}_j,S\in\mathcal{S}_j\right].
\end{align*}
\end{lemma}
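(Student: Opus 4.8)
The plan is to verify the three properties that define membership in $L_0^2(\uQ_j^0)$: that $\Gamma_j(f)$ agrees $\uQ^0$-almost surely with a function of $\bar z_j$ alone, that $E_{\uQ^0}\{\Gamma_j(f)(\bar Z_j)^2\}<\infty$, and that $E_{\uQ^0}\{\Gamma_j(f)(\bar Z_j)\mid \bar Z_{j-1}\}=0$ with $\uQ^0$-probability one. The first is immediate from the displayed formula; the nested conditional expectations are well defined $\uQ^0$-a.s.\ because, on $\bar{\mathcal{Z}}_{j-1}^\dagger$, Condition~\ref{cond:identifiability}\ref{suff_overlap} together with $\uQ^0\in\mathcal{Q}(P^0)$ ensures that $\bar z_j$ lies in the support of $\bar Z_j\mid S\in\mathcal{S}_j$ under $P^0$. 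For the remaining two steps I would abbreviate $g_j(\bar z_j,s)\equiv E_{P^0}\{f(Z,S)\mid\bar Z_j=\bar z_j,S=s\}$, $g_{j-1}(\bar z_{j-1},s)\equiv E_{P^0}\{f(Z,S)\mid\bar Z_{j-1}=\bar z_{j-1},S=s\}$, and $\tilde f(\bar z_j,s)\equiv g_j(\bar z_j,s)-g_{j-1}(\bar z_{j-1},s)$, so that $\Gamma_j(f)(\bar z_j)=\mathbbm{1}_{\bar{\mathcal{Z}}_{j-1}^\dagger}(\bar z_{j-1})\,E_{P^0}\{\tilde f(\bar Z_j,S)\mid\bar Z_j=\bar z_j,S\in\mathcal{S}_j\}$. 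Working on the event $\{S=s\}$, the tower property gives $E_{P^0}\{g_j(\bar Z_j,s)\mid\bar Z_{j-1},S=s\}=g_{j-1}(\bar Z_{j-1},s)$, hence $E_{P^0}\{\tilde f(\bar Z_j,S)\mid\bar Z_{j-1},S=s\}=0$ for each $s$; averaging over $s\in\mathcal{S}_j$ then yields the key identity $E_{P^0}\{\tilde f(\bar Z_j,S)\mid\bar Z_{j-1},S\in\mathcal{S}_j\}=0$.

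For square-integrability, conditional Jensen gives $\Gamma_j(f)(\bar z_j)^2\le\mathbbm{1}_{\bar{\mathcal{Z}}_{j-1}^\dagger}(\bar z_{j-1})\,E_{P^0}\{\tilde f(\bar Z_j,S)^2\mid\bar Z_j=\bar z_j,S\in\mathcal{S}_j\}$. Integrating under $\uQ^0$ and first conditioning on $\bar Z_{j-1}$, I would use that on $\bar{\mathcal{Z}}_{j-1}^\dagger$ the conditional law of $Z_j$ given $\bar Z_{j-1}=\bar z_{j-1}$ under $\uQ^0$ coincides with that of $Z_j$ given $\bar Z_{j-1}=\bar z_{j-1},S\in\mathcal{S}_j$ under $P^0$ (because $\uQ^0\in\mathcal{Q}(P^0)$), and then apply the tower property to collapse the nested conditioning, obtaining $E_{\uQ^0}\{\Gamma_j(f)(\bar Z_j)^2\}\le E_{\uQ^0}\big[E_{P^0}\{\tilde f(\bar Z_j,S)^2\mid\bar Z_{j-1},S\in\mathcal{S}_j\}\big]$, where the outer expectation is over the $\uQ^0$-marginal of $\bar Z_{j-1}$. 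Changing this outer measure to the $P^0$-conditional law of $\bar Z_{j-1}$ given $S\in\mathcal{S}_j$ introduces the factor $\lambda_{j-1}$, which Condition~\ref{cond:posPart} bounds by $c_{j-1}$; combining this with $P^0(S\in\mathcal{S}_j)>0$ and the Jensen bounds $E_{P^0}\{g_j^2\},E_{P^0}\{g_{j-1}^2\}\le E_{P^0}\{f^2\}<\infty$ shows that $E_{\uQ^0}\{\Gamma_j(f)(\bar Z_j)^2\}$ is finite.

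For the conditional-mean-zero property, conditioning on $\bar Z_{j-1}=\bar z_{j-1}$ under $\uQ^0$, the indicator equals one on $\bar{\mathcal{Z}}_{j-1}^\dagger$ and the same alignment of conditional laws turns $E_{\uQ^0}\{\Gamma_j(f)(\bar Z_j)\mid\bar Z_{j-1}=\bar z_{j-1}\}$ into $E_{P^0}\big[E_{P^0}\{\tilde f(\bar Z_j,S)\mid\bar Z_j,S\in\mathcal{S}_j\}\mid\bar Z_{j-1}=\bar z_{j-1},S\in\mathcal{S}_j\big]$, which collapses by the tower property to $E_{P^0}\{\tilde f(\bar Z_j,S)\mid\bar Z_{j-1}=\bar z_{j-1},S\in\mathcal{S}_j\}$, and this vanishes by the key identity from the first paragraph for $\uQ^0$-a.e.\ $\bar z_{j-1}$. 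The one point that demands care throughout is the direction of the measure changes: Condition~\ref{cond:identifiability}\ref{suff_overlap} (equivalently, the existence of $\lambda_{j-1}$ in Condition~\ref{cond:posPart}) only controls $\bar Z_{j-1}$-null sets of $P^0(\cdot\mid S\in\mathcal{S}_j)$ relative to $\uQ^0$ and not conversely, so one must restrict to $\bar{\mathcal{Z}}_{j-1}^\dagger$ whenever passing from $\uQ^0$ back to $P^0$ and invoke boundedness of $\lambda_{j-1}$ whenever passing from $P^0$ to $\uQ^0$; keeping track of which expectation is taken under which law, and on which set, is where the alignment and strong-overlap conditions genuinely do their work.
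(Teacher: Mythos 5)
Your proposal is correct and follows essentially the same route as the paper's proof: verify the conditional-mean-zero property by using Condition~\ref{cond:identifiability} to identify the $\uQ^0$-law of $Z_j\mid\bar Z_{j-1}$ with the $P^0$-law given $S\in\mathcal{S}_j$ and collapsing with the tower property, and verify square-integrability by the change of measure on $\bar Z_{j-1}$ from $\uQ^0$ to $P^0(\cdot\mid S\in\mathcal{S}_j)$ bounded by $c_{j-1}$ under Condition~\ref{cond:posPart}, together with Jensen bounds reducing everything to $E_{P^0}\{f^2\}<\infty$. The only differences (establishing the $P^0$-level identity per data source first, and applying conditional Jensen to $\Gamma_j(f)^2$ before the measure change rather than after) are cosmetic reorganizations of the same argument.
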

\begin{proof}[Proof of Lemma~\ref{lem:lambda_Q}]
To ease notation, we let $f_j\equiv \Gamma_j(f)$. 
Condition~\ref{cond:identifiability} ensures that $f_j$ is uniquely defined up to $\uQ^0$-null sets. Now, again using Condition~\ref{cond:identifiability} and also applying Condition~\ref{cond:posPart}, we see that the following holds $\uQ^0$-almost surely:
\begin{align*}
    &E_{\uQ^0}[f_j(\bar{Z}_j)\mid \bar{Z}_{j-1}] \\
    &= E_{\uQ^0}\left[E_{P^0}\left[E_{P^0}\left\{f(Z,S)\mid \bar{Z}_j,S\right\} - E_{P^0}\left\{f(Z,S)\mid \bar{Z}_{j-1},S\right\}\mid \bar{Z}_j,S\in\mathcal{S}_j\right]\mid \bar{Z}_{j-1} \right] \\
    &= E_{P^0}\left(E_{P^0}\left[E_{P^0}\left\{f(Z,S)\mid \bar{Z}_j,S\right\} - E_{P^0}\left\{f(Z,S)\mid \bar{Z}_{j-1},S\right\}\mid \bar{Z}_j,S\in\mathcal{S}_j\right]\mid \bar{Z}_{j-1},S\in\mathcal{S}_j \right) \\
    &= E_{P^0}\left\{f(Z,S)\mid \bar{Z}_{j-1},S\in\mathcal{S}_j\right\} - E_{P^0}\left\{f(Z,S)\mid \bar{Z}_{j-1},S\in\mathcal{S}_j\right\} \\
    &= 0.
\end{align*}
We now show that $E_{\uQ^0}[f_j(\bar{Z}_j)^2]<\infty$. This can be seen to hold since
\begin{align*}
    E_{\uQ^0}[f_j(\bar{Z}_j)^2]&= \int \int f_j(\bar{z}_j)^2 \uQ_j^0(dz_j\mid \bar{z}_{j-1}) d\bar{\uQ}_{j-1}^0(\bar{z}_{j-1}) \\
    &= \int_{\bar{\mathcal{Z}}_{j-1}^\dagger} \int_{\mathcal{Z}_j} f_j(\bar{z}_j)^2 \uQ_j^0(dz_j\mid \bar{z}_{j-1}) d\bar{\uQ}_{j-1}^0(\bar{z}_{j-1}) \\
    &= \int_{\bar{\mathcal{Z}}_{j-1}^\dagger} \int_{\mathcal{Z}_j} f_j(\bar{z}_j)^2 P_j^0(dz_j\mid \bar{z}_{j-1},\mathcal{S}_j) d\bar{\uQ}_{j-1}^0(\bar{z}_{j-1}) \\
    &= \int_{\bar{\mathcal{Z}}_{j-1}^\dagger} \int_{\mathcal{Z}_j} f_j(\bar{z}_j)^2 P_j^0(dz_j\mid \bar{z}_{j-1},\mathcal{S}_j) \lambda_{j-1}(\bar{z}_{j-1}) \bar{P}_{j-1}^0(d\bar{z}_{j-1}\mid \mathcal{S}_j) \\
    &\le c_{j-1} \int_{\bar{\mathcal{Z}}_{j-1}^\dagger} \int_{\mathcal{Z}_j} f_j(\bar{z}_j)^2 P_j^0(dz_j\mid \bar{z}_{j-1},\mathcal{S}_j)  \bar{P}_{j-1}^0(d\bar{z}_{j-1}\mid \mathcal{S}_j) \\
    &= c_{j-1}E_{P^0}\left[f_j(\bar{Z}_j)^2\mid \mathcal{S}_j\right] \\
    &\le \frac{c_{j-1}}{P^0(S\in\mathcal{S}_j)} E_{P^0}\left[f_j(\bar{Z}_j)^2\right].
\end{align*}
It is also readily verified that $E_{P^0}\left[f_j(\bar{Z}_j)^2\right]$ is upper bounded by a $P^0$-dependent constant times $E_{P^0}\left[f(\bar{Z}_j)^2\right]$, which is finite since $f\in L_0^2(P^0)$.
\end{proof}

The above ensures that, for any $f\in L_0^2(P^0)$ and $j\in\mathcal{J}$, the $L_0^2(\uQ^0)$-projection of $\Gamma_j(f)$ onto $\mathcal{T}(\uQ^0,\mathcal{Q})$ is well-defined. This proves useful in a result to follow (Lemma~\ref{lem:Tproj}), which characterizes the $L_0^2(P^0)$-projection operator onto $\mathcal{T}(P^0,\mathcal{P})$. Before presenting that result, we provide another characterization of $\Gamma_j$. In the proof of the next result, we let $\bar{P}_{j}^0(\cdot \mid \mathcal{S}_j)$ denote the conditional distribution of $\bar{Z}_j$ given $S\in\mathcal{S}_j$ under sampling from $P^0$.
\begin{lemma}\label{lem:LambdajIdent}
Suppose that Condition~\ref{cond:identifiability} holds. 
For any $f\in L_0^2(P^0)$ and $j\in\mathcal{J}$, the following holds for $\bar{P}_{j}^0(\cdot\mid \mathcal{S}_j)$-almost all $\bar{z}_j$:
\begin{align*}
    \Gamma_j(f)(\bar{z}_j)=\mathbbm{1}_{\bar{\mathcal{Z}}_{j-1}^\dagger}(\bar{z}_{j-1}) \left(E_{P^0}\left[f(Z,S)\mid \bar{Z}_j=\bar{z}_j,S\in\mathcal{S}_j\right] - E_{P^0}\left[f(Z,S)\mid \bar{Z}_{j-1}=\bar{z}_{j-1},S\in\mathcal{S}_j\right]\right).
\end{align*}
\end{lemma}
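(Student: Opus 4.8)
The plan is to expand $\Gamma_j(f)$ by linearity of conditional expectation into two pieces and simplify each by the law of iterated expectations, with the exchangeability consequence of Condition~\ref{cond:identifiability}\ref{suff_alignment} doing the real work in the second piece. Concretely, write $\Gamma_j(f)(\bar z_j) = \mathbbm{1}_{\bar{\mathcal{Z}}_{j-1}^\dagger}(\bar z_{j-1})\{A_j(\bar z_j) - B_j(\bar z_j)\}$, where
\[
A_j(\bar z_j) \equiv E_{P^0}\big[E_{P^0}\{f(Z,S)\mid \bar Z_j, S\}\,\big|\,\bar Z_j = \bar z_j,\, S\in\mathcal{S}_j\big], \qquad B_j(\bar z_j) \equiv E_{P^0}\big[E_{P^0}\{f(Z,S)\mid \bar Z_{j-1}, S\}\,\big|\,\bar Z_j = \bar z_j,\, S\in\mathcal{S}_j\big].
\]
Since $\mathbbm{1}_{\bar{\mathcal{Z}}_{j-1}^\dagger}(\bar z_{j-1})$ multiplies both $\Gamma_j(f)$ and the claimed expression, both sides vanish when $\bar z_{j-1}\notin\bar{\mathcal{Z}}_{j-1}^\dagger$, so it suffices to show that, for $\bar P_j^0(\cdot\mid\mathcal{S}_j)$-almost every $\bar z_j$ with $\bar z_{j-1}\in\bar{\mathcal{Z}}_{j-1}^\dagger$, one has $A_j(\bar z_j) = E_{P^0}[f(Z,S)\mid \bar Z_j = \bar z_j, S\in\mathcal{S}_j]$ and $B_j(\bar z_j) = E_{P^0}[f(Z,S)\mid \bar Z_{j-1} = \bar z_{j-1}, S\in\mathcal{S}_j]$.

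The identity for $A_j$ is just the tower property: the events $\{\bar Z_j = \bar z_j\}$ and $\{S\in\mathcal{S}_j\}$ are determined by $(\bar Z_j, S)$, so conditioning the $\sigma(\bar Z_j, S)$-measurable version $E_{P^0}\{f\mid \bar Z_j, S\}$ of $f$ further down onto $(\bar Z_j = \bar z_j, S\in\mathcal{S}_j)$ returns $E_{P^0}[f(Z,S)\mid \bar Z_j = \bar z_j, S\in\mathcal{S}_j]$. All conditional expectations appearing here are integrable because $f\in L_0^2(P^0)$, as already used in the proof of Lemma~\ref{lem:lambda_Q}.

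For $B_j$, set $\varphi(\bar z_{j-1}, s) \equiv E_{P^0}\{f(Z,S)\mid \bar Z_{j-1} = \bar z_{j-1}, S = s\}$, so that $B_j(\bar z_j) = E_{P^0}[\varphi(\bar z_{j-1}, S)\mid \bar Z_j = \bar z_j, S\in\mathcal{S}_j]$ is a conditional expectation of a function of $S$ alone with $\bar z_{j-1}$ held fixed. On $\bar{\mathcal{Z}}_{j-1}^\dagger$, Condition~\ref{cond:identifiability}\ref{suff_alignment} makes the conditional law of $Z_j$ given $(\bar Z_{j-1} = \bar z_{j-1}, S = s)$ the same for every $s\in\mathcal{S}_j$, which is the exchangeability relation $Z_j\indep S\mid(\bar Z_{j-1}, S\in\mathcal{S}_j)$ noted after that condition; hence the conditional law of $S$ given $(\bar Z_j = \bar z_j, S\in\mathcal{S}_j)$ agrees with its conditional law given $(\bar Z_{j-1} = \bar z_{j-1}, S\in\mathcal{S}_j)$ for $\bar P_j^0(\cdot\mid\mathcal{S}_j)$-almost every such $\bar z_j$. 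Therefore $B_j(\bar z_j) = E_{P^0}[\varphi(\bar z_{j-1}, S)\mid \bar Z_{j-1} = \bar z_{j-1}, S\in\mathcal{S}_j]$, and a second application of the tower property --- now with $\{\bar Z_{j-1} = \bar z_{j-1}\}$ and $\{S\in\mathcal{S}_j\}$ determined by $(\bar Z_{j-1}, S)$ --- rewrites this as $E_{P^0}[f(Z,S)\mid \bar Z_{j-1} = \bar z_{j-1}, S\in\mathcal{S}_j]$. Subtracting the $B_j$ identity from the $A_j$ identity and restoring the leading indicator gives the lemma.

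The \emph{main obstacle} is purely measure-theoretic bookkeeping: one must verify that $\Gamma_j(f)$, the two pieces $A_j$ and $B_j$, and the two right-hand side terms are each well defined and that the asserted equality is stated relative to $\bar P_j^0(\cdot\mid\mathcal{S}_j)$ (and its $\bar Z_{j-1}$-marginal $\bar P_{j-1}^0(\cdot\mid\mathcal{S}_j)$) rather than the $\uQ^0$-marginals used in Lemma~\ref{lem:lambda_Q}, and that the exchangeability consequence of Condition~\ref{cond:identifiability}\ref{suff_alignment} is invoked only on $\bar{\mathcal{Z}}_{j-1}^\dagger$, which is exactly where the leading indicator confines us. Once these qualifiers are tracked, the argument amounts to two applications of iterated expectations around a single use of conditional independence.
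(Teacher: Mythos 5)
Your proposal is correct and takes essentially the same route as the paper: both handle the first piece of $\Gamma_j(f)$ by the tower property and reduce the lemma to showing $E_{P^0}\left[E_{P^0}\{f(Z,S)\mid \bar{Z}_{j-1},S\}\mid \bar{Z}_j=\bar{z}_j,S\in\mathcal{S}_j\right]=E_{P^0}\left[f(Z,S)\mid \bar{Z}_{j-1}=\bar{z}_{j-1},S\in\mathcal{S}_j\right]$, which is then obtained from iterated expectations combined with the exchangeability implied by Condition~\ref{cond:identifiability}. The only cosmetic difference is that you invoke that conditional independence in its symmetric form, equating the conditional law of $S$ given $(\bar{Z}_j,S\in\mathcal{S}_j)$ with that given $(\bar{Z}_{j-1},S\in\mathcal{S}_j)$, whereas the paper inserts the nested conditional expectation $E_{P^0}[f(Z,S)\mid \bar{Z}_j,S]$ and swaps the inner conditioning from $(\bar{Z}_{j-1},S)$ to $(\bar{Z}_{j-1},S\in\mathcal{S}_j)$.
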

\begin{proof}[Proof of Lemma~\ref{lem:LambdajIdent}]
By applying the law of total expectation to the first term in the definition of $\Gamma_j(f)$ from Lemma~\ref{lem:lambda_Q}, it suffices to show that, for $\bar{P}_{j}^0(\cdot\mid \mathcal{S}_j)$-almost all $\bar{z}_j$,
\begin{align*}
    &\mathbbm{1}_{\bar{\mathcal{Z}}_{j-1}^\dagger}(\bar{z}_{j-1})E_{P^0}\left[f(Z,S)\mid \bar{Z}_{j-1}=\bar{z}_{j-1},S\in\mathcal{S}_j\right] \\
    &= \mathbbm{1}_{\bar{\mathcal{Z}}_{j-1}^\dagger}(\bar{z}_{j-1})E_{P^0}\left[E_{P^0}\left\{f(Z,S)\mid \bar{Z}_{j-1},S\right\}\mid \bar{Z}_j=\bar{z}_j,S\in\mathcal{S}_j\right].
\end{align*}
This can be seen to hold since, for $\bar{P}_{j}^0(\cdot\mid \mathcal{S}_j)$-almost all $\bar{z}_j$,
\begin{align*}
    &\mathbbm{1}_{\bar{\mathcal{Z}}_{j-1}^\dagger}(\bar{z}_{j-1})E_{P^0}\left[E_{P^0}\left\{f(Z,S)\mid \bar{Z}_{j-1},S\right\}\mid \bar{Z}_j=\bar{z}_j,S\in\mathcal{S}_j\right] \\
    &= \mathbbm{1}_{\bar{\mathcal{Z}}_{j-1}^\dagger}(\bar{z}_{j-1})E_{P^0}\left[E_{P^0}\left\{E_{P^0}[f(Z,S)\mid \bar{Z}_j,S]\mid \bar{Z}_{j-1},S\right\}\mid \bar{Z}_j=\bar{z}_j,S\in\mathcal{S}_j\right] \\
    &= \mathbbm{1}_{\bar{\mathcal{Z}}_{j-1}^\dagger}(\bar{z}_{j-1})E_{P^0}\left[E_{P^0}\left\{E_{P^0}[f(Z,S)\mid \bar{Z}_j,S]\mid \bar{Z}_{j-1},S\in\mathcal{S}_j\right\}\mid \bar{Z}_j=\bar{z}_j,S\in\mathcal{S}_j\right] \\
    &= \mathbbm{1}_{\bar{\mathcal{Z}}_{j-1}^\dagger}(\bar{z}_{j-1})E_{P^0}\left[E_{P^0}\left\{f(Z,S)\mid \bar{Z}_{j-1},S\in\mathcal{S}_j\right\}\mid \bar{Z}_j=\bar{z}_j,S\in\mathcal{S}_j\right] \\
    &= \mathbbm{1}_{\bar{\mathcal{Z}}_{j-1}^\dagger}(\bar{z}_{j-1})E_{P^0}\left\{f(Z,S)\mid \bar{Z}_{j-1}=\bar{z}_{j-1},S\in\mathcal{S}_j\right\},
\end{align*}
where the first and third equalities above hold by the law of total expectation, the second holds by Condition~\ref{cond:identifiability}, and the fourth holds by properties of conditional expectations and the fact that $\bar{Z}_j=(\bar{Z}_{j-1},Z_j)$.
\end{proof}

\begin{lemma}\label{lem:Tproj}
Suppose that Conditions~\ref{cond:identifiability}, \ref{cond:var_indep}, and \ref{cond:posPart} hold. For any $f\in L_0^2(P^0)$ satisfying $\Pi_{\uQ^0}\{\Gamma_j(f)\mid \mathcal{T}(\uQ^0,\mathcal{Q})\}(\bar{z}_j) = \lambda_{j-1}(\bar{z}_{j-1})\Pi_{\uQ^0}\{\Gamma_j(f)/\lambda_{j-1}\mid \mathcal{T}(\uQ^0,\mathcal{Q})\}(\bar{z}_j)$ $P^0$-a.s. for all $j\in\mathcal{J}$, it holds that
\begin{align}
    \Pi_{P^0}&\{f \mid \mathcal{T}(P^0,\mathcal{P})\}(z,s) \nonumber \\
    &= \Pi_{P^0}\{f\mid L_0^2(P_0^0)\}(s) + \sum_{j=1}^d \Pi_{P^0}\{f\mid L_0^2(P_j^0)\}(\bar{z}_j,s) \nonumber \\
    &\quad+ \sum_{j\in\mathcal{J}} \mathbbm{1}_{\mathcal{S}_j}(s) \mathbbm{1}_{\bar{\mathcal{Z}}_{j-1}^\dagger}(\bar{z}_{j-1})\left[\Pi_{\uQ^0}\left\{\Gamma_j(f)\mid \mathcal{T}(\uQ^0,\mathcal{Q})\right\}(\bar{z}_j)-\Pi_{P^0}\{f\mid L_0^2(P_j^0)\}(\bar{z}_j,s)\right]. \label{eq:PifTP}
\end{align}
\end{lemma}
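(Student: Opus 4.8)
The plan is to exploit the orthogonal direct‑sum decomposition $\mathcal{T}(P^0,\mathcal{P})=\bigoplus_{j=0}^d\mathcal{T}(P^0,\mathcal{P}_j)$ recorded above: since projection onto an orthogonal direct sum is the sum of the projections onto the summands, it suffices to compute $\Pi_{P^0}\{f\mid\mathcal{T}(P^0,\mathcal{P}_j)\}$ for each $j$ and add. The easy summands are $j=0$ and $j\in\mathcal{I}$, where $\mathcal{T}(P^0,\mathcal{P}_0)=L_0^2(P_0^0)$ and $\mathcal{T}(P^0,\mathcal{P}_j)=L_0^2(P_j^0)$, so the corresponding projections are simply $\Pi_{P^0}\{f\mid L_0^2(P_j^0)\}$; these already account for the first two sums in \eqref{eq:PifTP} once we note $\sum_{j=0}^d\Pi_{P^0}\{f\mid L_0^2(P_j^0)\}=f$ and separate the $j=0$ term.

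The heart of the proof is the projection onto $\mathcal{T}(P^0,\mathcal{P}_j)$ for $j\in\mathcal{J}$. Writing $E_j$ for the event $\{S\in\mathcal{S}_j\}\cap\{\bar{Z}_{j-1}\in\bar{\mathcal{Z}}_{j-1}^\dagger\}$, which is measurable with respect to $(\bar{Z}_{j-1},S)$, I would first rewrite the generic element of \eqref{eq:TPj} as $h_j\mathbbm{1}_{E_j^c}+\mathbbm{1}_{E_j}f_j$, which displays $\mathcal{T}(P^0,\mathcal{P}_j)$ as the orthogonal direct sum of (i) the elements of $L_0^2(P_j^0)$ that vanish on $E_j$ and (ii) the set $\{\mathbbm{1}_{E_j}f_j:f_j\in\mathcal{T}(\uQ^0,\mathcal{Q}_j)\}$; orthogonality is immediate from the disjoint supports, and each summand lies in $L_0^2(P_j^0)$ by the same use of Condition~\ref{cond:identifiability} as in the proof of Lemma~\ref{lem:TPj} (so that pre‑projecting $f$ onto $L_0^2(P_j^0)$ is harmless). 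Because $E_j$ is $\sigma(\bar{Z}_{j-1},S)$‑measurable, multiplying a function in $L_0^2(P_j^0)$ by $\mathbbm{1}_{E_j^c}$ stays in $L_0^2(P_j^0)$, so the projection of $f$ onto summand (i) is $\mathbbm{1}_{E_j^c}\,\Pi_{P^0}\{f\mid L_0^2(P_j^0)\}$, which combined with the bookkeeping term $-\mathbbm{1}_{E_j}\Pi_{P^0}\{f\mid L_0^2(P_j^0)\}$ reproduces the second piece inside the bracket of \eqref{eq:PifTP}.

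It remains to project $f$ onto summand (ii). Writing this projection as $\mathbbm{1}_{E_j}f_j^\star$ with $f_j^\star\in\mathcal{T}(\uQ^0,\mathcal{Q}_j)$, I would impose the normal equations $E_{P^0}[\mathbbm{1}_{E_j}(f-f_j^\star)\,\tilde f_j]=0$ for all $\tilde f_j\in\mathcal{T}(\uQ^0,\mathcal{Q}_j)$ and show they force $f_j^\star=\Pi_{\uQ^0}\{\Gamma_j(f)\mid\mathcal{T}(\uQ^0,\mathcal{Q})\}$. The steps: on $E_j$, Condition~\ref{cond:identifiability} lets me replace $P^0_j(\,\cdot\mid\bar{z}_{j-1},s)$ by $\uQ^0_j(\,\cdot\mid\bar{z}_{j-1})$ inside the conditional expectations that appear; since $\tilde f_j$ integrates to zero against $\uQ^0_j(\,\cdot\mid\bar{z}_{j-1})$, only the $\bar{Z}_j$‑measurable part of $f$ survives, and Lemma~\ref{lem:LambdajIdent} identifies the resulting telescoping combination $E_{P^0}[f\mid\bar{Z}_j,S\in\mathcal{S}_j]-E_{P^0}[f\mid\bar{Z}_{j-1},S\in\mathcal{S}_j]$ on $E_j$ with $\Gamma_j(f)$; finally, Condition~\ref{cond:posPart} provides the control on the Radon–Nikodym derivative $\lambda_{j-1}$ needed to move between the conditional law of $\bar{Z}_{j-1}\mid S\in\mathcal{S}_j$ under $P^0$ and the marginal of $\bar{Z}_{j-1}$ under $\uQ^0$ on $\bar{\mathcal{Z}}_{j-1}^\dagger$, with Lemma~\ref{lem:lambda_Q} ensuring that every $\uQ^0$‑integral in sight is finite. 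Assembling summands (i) and (ii), writing $\mathbbm{1}_{E_j}=\mathbbm{1}_{\mathcal{S}_j}(s)\mathbbm{1}_{\bar{\mathcal{Z}}_{j-1}^\dagger}(\bar{z}_{j-1})$, and summing over $j$ gives \eqref{eq:PifTP}.

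I expect the main obstacle to be precisely this last reduction: one must verify that the normal equations on the event $E_j$, which live in $L_0^2(P^0)$, collapse exactly onto the normal equations defining $\Pi_{\uQ^0}\{\Gamma_j(f)\mid\mathcal{T}(\uQ^0,\mathcal{Q})\}$ in $L_0^2(\uQ^0)$, and to do so while correctly discarding the part of the conditional law of $\bar{Z}_{j-1}\mid S\in\mathcal{S}_j$ that is supported outside $\bar{\mathcal{Z}}_{j-1}^\dagger$ — this is what produces the indicator $\mathbbm{1}_{\bar{\mathcal{Z}}_{j-1}^\dagger}$ in \eqref{eq:PifTP}, and it is handled by the Lebesgue decomposition argument already used in the proof of Lemma~\ref{lem:TPj}, since Condition~\ref{cond:posPart} only constrains $\lambda_{j-1}$ on that support set.
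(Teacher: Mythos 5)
Your overall strategy is sound and is essentially the constructive dual of the paper's argument: the paper guesses the right-hand side of \eqref{eq:PifTP} and verifies membership in $\mathcal{T}(P^0,\mathcal{P})$ plus orthogonality of the residual, whereas you decompose $\mathcal{T}(P^0,\mathcal{P})=\bigoplus_{j=0}^d\mathcal{T}(P^0,\mathcal{P}_j)$ and compute each projection, splitting $\mathcal{T}(P^0,\mathcal{P}_j)$ for $j\in\mathcal{J}$ into the orthogonal pieces $\mathbbm{1}_{E_j^c}L_0^2(P_j^0)$ and $\{\mathbbm{1}_{E_j}f_j : f_j\in\mathcal{T}(\uQ^0,\mathcal{Q}_j)\}$. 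Your treatment of summand (i) is correct (since $E_j$ is $\sigma(\bar{Z}_{j-1},S)$-measurable, multiplication by $\mathbbm{1}_{E_j^c}$ preserves $L_0^2(P_j^0)$, so the projection onto (i) is $\mathbbm{1}_{E_j^c}\Pi_{P^0}\{f\mid L_0^2(P_j^0)\}$), and the reduction of the normal equations for summand (ii) to a statement about $\Gamma_j(f)$ via Condition~\ref{cond:identifiability} and Lemma~\ref{lem:LambdajIdent} is also fine.

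The gap sits exactly at the step you yourself flag as the main obstacle, and the tools you cite do not close it. After conditioning and applying Lemma~\ref{lem:LambdajIdent}, the normal equations for summand (ii) read $E_{P^0}[\mathbbm{1}_{E_j}\{\Gamma_j(f)-f_j^\star\}\tilde f_j]=0$ for all $\tilde f_j\in\mathcal{T}(\uQ^0,\mathcal{Q}_j)$; when you convert the outer law $\bar{P}_{j-1}^0(\cdot\mid\mathcal{S}_j)$ on $\bar{\mathcal{Z}}_{j-1}^\dagger$ into $\bar{\uQ}_{j-1}^0$, the change of measure introduces the factor $\lambda_{j-1}(\bar{z}_{j-1})^{-1}$, so what you actually obtain is $E_{\uQ^0}[\lambda_{j-1}(\bar{Z}_{j-1})^{-1}\{\Gamma_j(f)-f_j^\star\}(\bar{Z}_j)\tilde f_j(\bar{Z}_j)]=0$ — the normal equations of a $\lambda_{j-1}^{-1}$-\emph{weighted} projection, not those defining $\Pi_{\uQ^0}\{\Gamma_j(f)\mid\mathcal{T}(\uQ^0,\mathcal{Q})\}$ in $L_0^2(\uQ^0)$. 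Orthogonality of $u\equiv\Gamma_j(f)-\Pi_{\uQ^0}\{\Gamma_j(f)\mid\mathcal{T}(\uQ^0,\mathcal{Q})\}$ to $\mathcal{T}(\uQ^0,\mathcal{Q}_j)$ under $\uQ^0$ does not give $E_{\uQ^0}[\lambda_{j-1}^{-1}u\,\tilde f_j]=0$ unless either $\lambda_{j-1}^{-1}\tilde f_j$ again lies in $\mathcal{T}(\uQ^0,\mathcal{Q})$ (as when $\mathcal{T}(\uQ^0,\mathcal{Q}_j)$ is stable under multiplication by bounded functions of $\bar{z}_{j-1}$, e.g.\ the locally nonparametric case) or one has the stronger conditional statement $E_{\uQ^0}[u(\bar{Z}_j)\tilde f_j(\bar{Z}_j)\mid\bar{Z}_{j-1}]=0$ almost surely. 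The paper's proof finishes precisely by asserting this conditional orthogonality (the display stating that the inner $\mathcal{Z}_j$-integral of $\{\Gamma_j(f)-\Pi_{\uQ^0}\{\Gamma_j(f)\mid\mathcal{T}(\uQ^0,\mathcal{Q})\}\}r_j$ against $\uQ_j^0(\cdot\mid\bar{z}_{j-1})$ vanishes for $\bar{\uQ}_{j-1}^0$-almost all $\bar{z}_{j-1}$), so that the weight $\lambda_{j-1}^{-1}$ is irrelevant; your sketch never states or establishes it, and boundedness of $\lambda_{j-1}$ (Condition~\ref{cond:posPart}), Lemma~\ref{lem:lambda_Q}, and the Lebesgue-decomposition/support argument from Lemma~\ref{lem:TPj} only deliver integrability and the indicator $\mathbbm{1}_{\bar{\mathcal{Z}}_{j-1}^\dagger}$, not removal of the weight. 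To complete your argument you must prove this conditional orthogonality (or otherwise show the weighted and unweighted projections of $\Gamma_j(f)$ onto $\mathcal{T}(\uQ^0,\mathcal{Q}_j)$ coincide); note these are genuinely different objects when $\lambda_{j-1}$ is nonconstant and $\mathcal{T}(\uQ^0,\mathcal{Q}_j)$ is not closed under multiplication by functions of $\bar{z}_{j-1}$, so the step cannot be waved through by norm-equivalence of the two measures.
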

It is worth noting that Lemma~\ref{lem:Tproj} provides a means to compute the canonical gradient of arbitrary pathwise differentiable functional $\eta : \mathcal{P}\rightarrow\mathbb{R}$ within the data fusion model $\mathcal{P}$. Hence, this lemma may be of independent interest beyond the special setting that we consider in this paper, namely functionals of the form $\psi\circ \theta$ whose evaluation at $P^0$ corresponds to a summary of the target distribution $\uQ^0$. 
To see why Lemma~\ref{lem:Tproj} makes consideration of such functionals possible, note that this lemma provides a means to project an arbitrary function $f\in L_0^2(P^0)$ onto the tangent space of $\mathcal{P}$ at $P^0$. 
Therefore, given an arbitrary initial gradient of $\eta$, the canonical gradient of $\eta$ relative to $\mathcal{P}$ can be computed by projecting that gradient onto the tangent space of $\mathcal{P}$. 
A simple example of a parameter for which Lemma~\ref{lem:Tproj} can be a useful tool for computing the canonical gradient is the functional $P\mapsto E_P[Z_d]$. This functional does not generally take the form $\psi\circ \theta$ unless $\mathcal{J}=[d]$ and $\mathcal{S}_j=[k]$ for all $j$. Nevertheless, $z\mapsto z_d-E_P[Z_d]$ is an initial gradient, and so Lemma~\ref{lem:Tproj} provides a means to compute the canonical gradient of this functional in the data fusion model $\mathcal{P}$.
\begin{proof}[Proof of Lemma~\ref{lem:Tproj}]
Let $g : \mathcal{Z}\times [k]\rightarrow\mathbb{R}$ denote the function defined pointwise so that $g(z,s)$ is equal to the right-hand side of \eqref{eq:PifTP}. We will show that $g\in \mathcal{T}(P^0,\mathcal{P})$ and also that $\langle f-g, h\rangle_{P^0}=0$ for all $h\in \mathcal{T}(P^0,\mathcal{P})$, where $\langle\cdot,\cdot\rangle_{P^0}$ is the inner product in $L_0^2(P^0)$. We first show that $g\in \mathcal{T}(P^0,\mathcal{P})$. For all $j\in \{0\}\cup [d]$, it holds that $\Pi_{P^0}\{f \mid L_0^2(P_j^0)\}\in L_0^2(P_j^0)$. Similarly, for each $j\in\mathcal{J}$, $\Pi_{\uQ^0}\left\{\Gamma_j(f)\mid \mathcal{T}(\uQ^0,\mathcal{Q})\right\}\in \mathcal{T}(\uQ^0,\mathcal{Q})$. Recalling that $\mathcal{T}(P^0,\mathcal{P})=\bigoplus_{j=0}^d \mathcal{T}(P^0,\mathcal{P}_j)$, we see that $g\in \mathcal{T}(P^0,\mathcal{P})$.

The remainder of this proof shows that, for any $h\in \mathcal{T}(P^0,\mathcal{P})$, $\langle f-g, h\rangle_{P^0}=0$. Fix $h\in \mathcal{T}(P^0,\mathcal{P})$. As $L_0^2(P_j^0)$, $j=0,1,\ldots,d$, are orthogonal subspaces of $L_0^2(P^0)$ that are such that $L_0^2(P^0)=\bigoplus_{j=0}^d L_0^2(P_j^0)$, it holds that
\begin{align*}
    \langle f-g, h\rangle_{P^0}&= \int [f(z,s)-g(z,s)]h(z,s) dP^0(z,s) \\
    &= \sum_{j=0}^d \int \Pi_{P^0}\{f-g\mid L_0^2(P_j^0)\}(\bar{z}_j,s) \Pi_{P^0}\{h\mid L_0^2(P_j^0)\}(\bar{z}_j,s) dP^0(z,s).
\end{align*}
We show that each of the terms in the sum above is zero. If $j\in \{0\}\cup \mathcal{I}$, then this follows immediately from the fact that $\Pi_{P^0}\{g\mid L_0^2(P_j^0)\}=\Pi_{P^0}\{f\mid L_0^2(P_j^0)\}$, and so the corresponding term in the above sum is zero. Now suppose that $j\in \mathcal{J}$. We have that
\begin{align*}
    &\int \Pi_{P^0}\{f-g\mid L_0^2(P_j^0)\}(\bar{z}_j,s) \Pi_{P^0}\{h\mid L_0^2(P_j^0)\}(\bar{z}_j,s) dP^0(z,s) \\
    &= \int [1-\mathbbm{1}_{\mathcal{S}_j}(s) \mathbbm{1}_{\bar{\mathcal{Z}}_{j-1}^\dagger}(\bar{z}_{j-1})]\Pi_{P^0}\{f-g\mid L_0^2(P_j^0)\}(\bar{z}_j,s) \Pi_{P^0}\{h\mid L_0^2(P_j^0)\}(\bar{z}_j,s) dP^0(z,s) \\
    &\quad + \int \mathbbm{1}_{\mathcal{S}_j}(s) \mathbbm{1}_{\bar{\mathcal{Z}}_{j-1}^\dagger}(\bar{z}_{j-1})\Pi_{P^0}\{f-g\mid L_0^2(P_j^0)\}(\bar{z}_j,s) \Pi_{P^0}\{h\mid L_0^2(P_j^0)\}(\bar{z}_j,s) dP^0(z,s).
\end{align*}
When $(s,z)\not\in\mathcal{S}_j\times\bar{\mathcal{Z}}_{j-1}^\dagger$, it is straightforward to show that $\Pi_{P^0}\{g \mid L_0^2(P_j^0)\}(z,s)=\Pi_{P^0}\{f \mid L_0^2(P_j^0)\}(z,s)$. Hence, the first term on the right-hand side above is zero. We now study the second term. We begin by noting that $\mathcal{T}(P^0,\mathcal{P}_j)$ is a subspace of $L_0^2(P_j^0)$ and, for all $i\not=j$, $\mathcal{T}(P^0,\mathcal{P}_i)\cap L_0^2(P_j^0)=\{0\}$.
Hence, $\Pi_{P^0}\{h\mid L_0^2(P_j^0)\}\in \mathcal{T}(P^0,\mathcal{P}_j)$. Consequently, by \eqref{eq:TPj}, there exists an $r_j\in\mathcal{T}(\uQ^0,\mathcal{Q}_j)$ such that, whenever $(s,\bar{z}_{j-1})\in\mathcal{S}_j\times\bar{\mathcal{Z}}_{j-1}^\dagger$, $\Pi_{P^0}\{h\mid L_0^2(P_j^0)\}(\bar{z}_j,s)=r_j(\bar{z}_j)$. Thus, the second term above, which we refer to as (II), rewrites as
\begin{align*}
    \text{(II)}&= \int \mathbbm{1}_{\mathcal{S}_j}(s) \mathbbm{1}_{\bar{\mathcal{Z}}_{j-1}^\dagger}(\bar{z}_{j-1})\Pi_{P^0}\{f-g\mid L_0^2(P_j^0)\}(\bar{z}_j,s) r_j(\bar{z}_j) dP^0(z,s).
\end{align*}
Noting that $\Pi_{P^0}\{f-g\mid L_0^2(P_j^0)\}=\Pi_{P^0}\{f\mid L_0^2(P_j^0)\}-\Pi_{P^0}\{g\mid L_0^2(P_j^0)\}$ and using that $\Pi_{P^0}\{g\mid L_0^2(P_j^0)\}(\bar{z}_j,s)=\Pi_{\uQ^0}\left\{\Gamma_j(f)\mid \mathcal{T}(\uQ^0,\mathcal{Q})\right\}(\bar{z}_j)$ whenever $(s,\bar{z}_{j-1})\in\mathcal{S}_j\times\bar{\mathcal{Z}}_{j-1}^\dagger$, we see that
\begin{align*}
    \text{(II)}&= \int \mathbbm{1}_{\mathcal{S}_j}(s) \mathbbm{1}_{\bar{\mathcal{Z}}_{j-1}^\dagger}(\bar{z}_{j-1})\left[\Pi_{P^0}\{f\mid L_0^2(P_j^0)\}(\bar{z}_j,s) - \Pi_{\uQ^0}\left\{\Gamma_j(f)\mid \mathcal{T}(\uQ^0,\mathcal{Q})\right\}(\bar{z}_j)\right] \nonumber \\ 
    &\hspace{2.5em}\cdot r_j(\bar{z}_j) dP^0(z,s) \nonumber \\
    &= \int \mathbbm{1}_{\mathcal{S}_j}(s) \mathbbm{1}_{\bar{\mathcal{Z}}_{j-1}^\dagger}(\bar{z}_{j-1})\Big\{E_{P^0}\left[f(Z,S)\mid \bar{Z}_j=\bar{z}_j,S=s\right]-E_{P^0}\left[f(Z,S)\mid \bar{Z}_{j-1}=\bar{z}_{j-1},S=s\right]\nonumber \\
    &\hspace{11em}- \Pi_{\uQ^0}\left\{\Gamma_j(f)\mid \mathcal{T}(\uQ^0,\mathcal{Q})\right\}(\bar{z}_j)\Big\} r_j(\bar{z}_j) dP^0(z,s) \\
    &= \sum_{s\in\mathcal{S}_j} \int  \mathbbm{1}_{\bar{\mathcal{Z}}_{j-1}^\dagger}(\bar{z}_{j-1})\Big\{E_{P^0}\left[f(Z,S)\mid \bar{Z}_j=\bar{z}_j,S=s\right]-E_{P^0}\left[f(Z,S)\mid \bar{Z}_{j-1}=\bar{z}_{j-1},S=s\right] \\ 
    &\hspace{10em}- \Pi_{\uQ^0}\left\{\Gamma_j(f)\mid \mathcal{T}(\uQ^0,\mathcal{Q})\right\}(\bar{z}_j)\Big\} r_j(\bar{z}_j) P^0(S=s\mid \bar{Z}_j=\bar{z}_j) P_{\bar{Z}_j}^0(d\bar{z}_j),
\end{align*}
where $P_{\bar{Z}_j}^0$ denotes the marginal distribution of $\bar{Z}_j$ under sampling from $P^0$. Noting that, for all $s\in\mathcal{S}_j$,
\begin{align*}
    E_{P^0}\left[f(Z,S)\mid \bar{Z}_{j-1}=\bar{z}_{j-1},S=s\right]&=E_{P^0}\left[E_{P^0}\left\{f(Z,S)\mid \bar{Z}_{j},S\right\}\mid \bar{Z}_{j-1}=\bar{z}_{j-1},S\in\mathcal{S}_j\right] \\
    &= E_{P^0}\left[f(Z,S)\mid \bar{Z}_{j-1}=\bar{z}_{j-1},S\in\mathcal{S}_j\right]
\end{align*}
by the law of total expectation and Condition~\ref{cond:identifiability}, and also that
\begin{align*}
    &\sum_{s\in\mathcal{S}_j}  E_{P^0}\left[f(Z,S)\mid  \bar{Z}_j=\bar{z}_j,S=s\right] P^0(S=s\mid \bar{Z}_j=\bar{z}_j) \\
    &= E_{P^0}\left[f(Z,S)\mid  \bar{Z}_j=\bar{z}_j,S\in\mathcal{S}_j\right]P^0(S\in\mathcal{S}_j\mid \bar{Z}_j=\bar{z}_j),
\end{align*}
and $\sum_{s\in\mathcal{S}_j} P^0(S=s\mid \bar{Z}_j=\bar{z}_j) = P^0(S\in\mathcal{S}_j\mid \bar{Z}_j=\bar{z}_j)$, we see that the most recent expression for $\text{(II)}$ rewrites as
\begin{align*}
    \text{(II)}&= \int  \mathbbm{1}_{\bar{\mathcal{Z}}_{j-1}^\dagger}(\bar{z}_{j-1})\Big\{ E_{P^0}\left[f(Z,S)\mid \bar{Z}_j=\bar{z}_j,S\in\mathcal{S}_j\right] -E_{P^0}\left[f(Z,S)\mid \bar{Z}_{j-1}=\bar{z}_{j-1},S\in\mathcal{S}_j\right] \\ 
    &\hspace{9em}- \Pi_{\uQ^0}\left\{\Gamma_j(f)\mid \mathcal{T}(\uQ^0,\mathcal{Q})\right\}(\bar{z}_j)\Big\} r_j(\bar{z}_j) P^0(S\in\mathcal{S}_j\mid \bar{Z}_j=\bar{z}_j) P_{\bar{Z}_j}^0(d\bar{z}_j).
\end{align*}
Let $p_j^0\equiv P^0(S\in\mathcal{S}_j)$ and $\bar{P}_{j}^0(\cdot \mid \mathcal{S}_j)$ denotes the conditional distribution of $\bar{Z}_j$ given $S\in\mathcal{S}_j$ under sampling from $P^0$. Employing Bayes rule and Lemma~\ref{lem:LambdajIdent}, we see that
\begin{align*}
    \text{(II)}&= p_j^0 \int  \mathbbm{1}_{\bar{\mathcal{Z}}_{j-1}^\dagger}(\bar{z}_{j-1})\Big\{ E_{P^0}\left[f(Z,S)\mid \bar{Z}_j=\bar{z}_j,S\in\mathcal{S}_j\right] -E_{P^0}\left[f(Z,S)\mid \bar{Z}_{j-1}=\bar{z}_{j-1},S\in\mathcal{S}_j\right] \\ 
    &\hspace{9em}- \Pi_{\uQ^0}\left\{\Gamma_j(f)\mid \mathcal{T}(\uQ^0,\mathcal{Q})\right\}(\bar{z}_j)\Big\} r_j(\bar{z}_j)  \bar{P}_{j}^0(d\bar{z}_{j}\mid \mathcal{S}_j) \\
    &= p_j^0 \int  \mathbbm{1}_{\bar{\mathcal{Z}}_{j-1}^\dagger}(\bar{z}_{j-1})\Big\{\Gamma_j(f)(\bar{z}_j) - \Pi_{\uQ^0}\left\{\Gamma_j(f)\mid \mathcal{T}(\uQ^0,\mathcal{Q})\right\}(\bar{z}_j)\Big\} r_j(\bar{z}_j)  \bar{P}_{j}^0(d\bar{z}_{j}\mid \mathcal{S}_j) \\
    &= p_j^0 \int_{\bar{\mathcal{Z}}_{j-1}^\dagger} \int_{\mathcal{Z}_j}   \Big\{\Gamma_j(f)(\bar{z}_j) - \Pi_{\uQ^0}\left\{\Gamma_j(f)\mid \mathcal{T}(\uQ^0,\mathcal{Q})\right\}(\bar{z}_j)\Big\} \\
    &\hspace{7em}\times r_j(\bar{z}_j)  P_j^0(dz_j\mid \bar{z}_{j-1},\mathcal{S}_j)\bar{P}_{j-1}^0(d\bar{z}_{j-1}\mid \mathcal{S}_j),
\end{align*}
where we recall that $\bar{P}_{j-1}^0(\cdot\mid \mathcal{S}_j)$ is the conditional distribution of $\bar{Z}_{j-1}$ under $P^0$ given that $S\in \mathcal{S}_j$. Applying Conditions~\ref{cond:identifiability} and \ref{cond:posPart}, we see that
\begin{align*}
    \text{(II)}&= p_j^0 \int_{\bar{\mathcal{Z}}_{j-1}^\dagger} \int_{\mathcal{Z}_j}   \Big\{\Gamma_j(f)(\bar{z}_j) - \Pi_{\uQ^0}\left\{\Gamma_j(f)\mid \mathcal{T}(\uQ^0,\mathcal{Q})\right\}(\bar{z}_j)\Big\} r_j(\bar{z}_j)  \\
    &\hspace{7em}\cdot \uQ_j^0(dz_j\mid \bar{z}_{j-1})\bar{P}_{j-1}^0(d\bar{z}_{j-1}\mid \mathcal{S}_j) \\
    &= p_j^0 \int_{\bar{\mathcal{Z}}_{j-1}^\dagger} \int_{\mathcal{Z}_j}   \Big\{\Gamma_j(f)(\bar{z}_j) - \Pi_{\uQ^0}\left\{\Gamma_j(f)\mid \mathcal{T}(\uQ^0,\mathcal{Q})\right\}(\bar{z}_j)\Big\} \\
    &\hspace{7em} \times r_j(\bar{z}_j)  \uQ_j^0(dz_j\mid \bar{z}_{j-1})\lambda_{j-1}(\bar{z}_{j-1})^{-1}\bar{\uQ}_{j-1}^0(d\bar{z}_{j-1}).
\end{align*}
Using that $\Pi_{\uQ^0}\{\Gamma_j(f)\mid \mathcal{T}(\uQ^0,\mathcal{Q})\}(\bar{z}_j) = \lambda_{j-1}(\bar{z}_{j-1})\Pi_{\uQ^0}\{\Gamma_j(f)/\lambda_{j-1}\mid \mathcal{T}(\uQ^0,\mathcal{Q})\}(\bar{z}_j)$ and properties of orthogonal projections, we have
\begin{align*}
    \text{(II)}
    & = p_j^0 \int_{\bar{\mathcal{Z}}_j} \Big\{\lambda_{j-1}(\bar{z}_{j-1})^{-1} \Gamma_j(f)(\bar{z}_j) - \lambda_{j-1}(\bar{z}_{j-1})^{-1} \Pi_{\uQ^0}\left\{\Gamma_j(f)\mid \mathcal{T}(\uQ^0,\mathcal{Q})\right\}(\bar{z}_j)\Big\}r_j(\bar{z}_j) \uQ^0(d\bar{z}_{j})\\
      & = p_j^0 \int_{\bar{\mathcal{Z}}_j} \Big\{\lambda_{j-1}(\bar{z}_{j-1})^{-1} \Gamma_j(f)(\bar{z}_j) -\Pi_{\uQ^0}\left\{ \lambda_{j-1}(\bar{z}_{j-1})^{-1} \Gamma_j(f)\mid \mathcal{T}(\uQ^0,\mathcal{Q})\right\}(\bar{z}_j)\Big\}r_j(\bar{z}_j) \uQ^0(d\bar{z}_{j}) \\
      & =0.
\end{align*}
\end{proof}

\subsection{Equivalence of pathwise differentiability of $\psi$ and $\phi$}\label{sec:equivDiff}

We begin with a lemma regarding the nuisance tangent spaces $\mathcal{T}^\ddagger(\uQ^0,\mathcal{Q})$ and $\mathcal{T}^\ddagger(P^0,\mathcal{P})$ of $\mathcal{T}(\uQ^0,\mathcal{Q})$ and $\mathcal{T}(P^0,\mathcal{P})$ relative to $\psi$ and $\phi$, respectively. 
The nuisance tangent space $\mathcal{T}^\ddagger(\uQ^0,\mathcal{Q})$ is defined as the set of scores in  $\mathcal{T}(\uQ^0,\mathcal{Q})$ for which the target estimand remains constant, in first order, along a quadratic differentiable mean submodel with that score --- more formally, $\mathcal{T}^\ddagger(\uQ^0,\mathcal{Q})$ consists of all $h \in \mathcal{T}(\uQ^0,\mathcal{Q})$ for which there exists a univariate submodel $\{Q^{(\epsilon)} : \epsilon\in [0,\delta)\}$ with $Q^{(\epsilon)}=\uQ^0$ when $\epsilon=0$, score $h$ at $\epsilon=0$, and $\frac{\partial}{\partial \epsilon} \psi(Q^{(\epsilon)}) \mid_{\epsilon=0}=0$. Similarly, the nuisance tangent space $\mathcal{T}^\ddagger(P^0,\mathcal{P})$ consists of all $f\in \mathcal{T}(P^0,\mathcal{P})$ for which there exists a univariate submodel $\{P^{(\epsilon)} : \epsilon\in [0,\delta)\}$ with $P^{(\epsilon)}=P^0$ when $\epsilon=0$, score $f$ at $\epsilon=0$, and $\frac{\partial}{\partial\epsilon} \phi(P^{(\epsilon)})\mid_{\epsilon=0}=0$. Finally, for $j\in\mathcal{J}$, we let $\mathcal{T}(P^0,\mathcal{P}_j)$ be consist of all $f\in \mathcal{T}(P^0,\mathcal{P}_j)$ for which the restriction of $f$ to $\mathcal{Z}_j\times \bar{\mathcal{Z}}_{j-1}^\dagger\times \mathcal{S}_j$ is equal to zero.

In the upcoming results, we will use the fact that Condition~\ref{cond:var_indep} and Lemma~1.6 of \cite{van2003unified} imply that $\mathcal{T}(\uQ^0,\mathcal{Q})=\bigoplus_{i=1}^d \mathcal{T}(\uQ^0,\mathcal{Q}_i)$. We also define $\mathcal{U}(P^0,\mathcal{P}_j)$ to be the set of all $h_j\in L_0^2(P_j^0)$ that are such that $h(\bar{z}_j,s)=0$ for $P^0$-almost all $(\bar{z}_j,s)$ that are such that $(\bar{z}_{j-1},s)\in\bar{\mathcal{Z}}_{j-1}^\dagger\times\mathcal{S}_j$.

\begin{lemma}\label{lem:nTanSpace}
Suppose that Conditions~\ref{cond:identifiability} and \ref{cond:var_indep} hold. 
All of the following hold:
\begin{enumerate}[label=(\roman*)]
    \item\label{it:nTanSpaceQI} if $i\in\mathcal{I}$, then $\mathcal{T}(\uQ^0,\mathcal{Q}_i)\subseteq \mathcal{T}^\ddagger(\uQ^0,\mathcal{Q})$,
    \item\label{it:nTanSpacePI} if $i\in\{0\}\cup \mathcal{I}$, then  $\mathcal{T}(P^0,\mathcal{P}_i)\subseteq\mathcal{T}^\ddagger(P^0,\mathcal{P})$, and
    \item\label{it:nTanSpacePJ} if $j\in\mathcal{J}$, then $\mathcal{U}(P^0,\mathcal{P}_j)\subseteq \mathcal{T}^\ddagger(P^0,\mathcal{P})$.
\end{enumerate}
\end{lemma}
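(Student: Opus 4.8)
The plan is to handle the three inclusions separately, in each case exhibiting an explicit one-dimensional submodel that realizes the given score and along which the relevant functional ($\psi$ for~\ref{it:nTanSpaceQI}, $\phi$ for~\ref{it:nTanSpacePI} and~\ref{it:nTanSpacePJ}) is \emph{exactly} constant, so that its derivative at $\epsilon=0$ vanishes and the score lies in the corresponding nuisance tangent space. For~\ref{it:nTanSpaceQI}: fix $i\in\mathcal{I}$ and $h_i\in\mathcal{T}(\uQ^0,\mathcal{Q}_i)$. By the definition of $\mathcal{T}(\uQ^0,\mathcal{Q}_i)$ there is a submodel $\{Q^{(\epsilon)}:\epsilon\in[0,\delta)\}$ of $\mathcal{Q}$ with $Q^{(\epsilon)}=\uQ^0$ at $\epsilon=0$, score $h_i$, and $Q_{i'}^{(\epsilon)}=\uQ_{i'}^0$ for all $i'\neq i$. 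Since $i\in\mathcal{I}$ we have $i\notin\mathcal{J}$, so $Q_j^{(\epsilon)}=\uQ_j^0$ for every $j\in\mathcal{J}=[d]\setminus\mathcal{I}$; the defining property of $\mathcal{I}$ then forces $\psi(Q^{(\epsilon)})=\psi(\uQ^0)$ for all $\epsilon$, hence $\tfrac{d}{d\epsilon}\psi(Q^{(\epsilon)})\big|_{\epsilon=0}=0$ and $h_i\in\mathcal{T}^\ddagger(\uQ^0,\mathcal{Q})$.

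For~\ref{it:nTanSpacePI} and~\ref{it:nTanSpacePJ} the common engine is the observation that, because every $P\in\mathcal{P}$ satisfies the analogue of Condition~\ref{cond:identifiability}, the value $\phi(P)=\psi\circ\theta(P)$ depends on $P$ only through the conditional laws $P_j(\,\cdot\mid\bar z_{j-1},s_j)$ for $j\in\mathcal{J}$ and $\bar z_{j-1}\in\bar{\mathcal{Z}}_{j-1}^\dagger$, where $s_j\in\mathcal{S}_j$ is any fixed representative: indeed, these laws determine the relevant conditional distributions of every element of $\mathcal{Q}(P)$, and $\psi$ is a function of relevant conditionals alone (this is essentially the content of Theorem~\ref{thm:identifiability} and the discussion following it). Consequently, any submodel $\{P^{(\epsilon)}\}$ of $\mathcal{P}$ that leaves all of those conditional laws unchanged has $\phi(P^{(\epsilon)})$ constant, so its score is nuisance. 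For~\ref{it:nTanSpacePI} with $i\in\mathcal{I}$, the submodel realizing $f\in\mathcal{T}(P^0,\mathcal{P}_i)$ perturbs only $P_i$, and $i\notin\mathcal{J}$, so none of the listed laws changes. For $i=0$, the submodel perturbs only the marginal of $S$: by Condition~\ref{cond:identifiability} the conditionals $P^{(\epsilon)}_j(\,\cdot\mid\bar z_{j-1},s)$, $s\in\mathcal{S}_j$, all coincide, so the pooled law of $Z_j\mid\bar Z_{j-1}=\bar z_{j-1},S\in\mathcal{S}_j$ is a mixture of identical conditionals and is therefore invariant to the mixing weights, i.e.\ to the $S$-marginal; hence $\phi$ is again constant, giving $\mathcal{T}(P^0,\mathcal{P}_0)\subseteq\mathcal{T}^\ddagger(P^0,\mathcal{P})$.

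For~\ref{it:nTanSpacePJ}, fix $j\in\mathcal{J}$ and $h_j\in\mathcal{U}(P^0,\mathcal{P}_j)$. First I would verify $h_j\in\mathcal{T}(P^0,\mathcal{P}_j)$ by applying the characterization in~\eqref{eq:TPj} of Lemma~\ref{lem:TPj} with $f_j=0$: the resulting element is $(z,s)\mapsto h_j(\bar z_j,s)\,[1-\mathbbm{1}_{\mathcal{S}_j}(s)\mathbbm{1}_{\bar{\mathcal{Z}}_{j-1}^\dagger}(\bar z_{j-1})]$, which equals $h_j$ $P^0$-almost everywhere because $h_j$ vanishes on $\bar{\mathcal{Z}}_{j-1}^\dagger\times\mathcal{S}_j$ by definition of $\mathcal{U}(P^0,\mathcal{P}_j)$. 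Moreover, inspecting Part~1 of the proof of Lemma~\ref{lem:TPj} with $f_j=0$ (so that $Q^{(\epsilon)}=\uQ^0$ and the first factor of $dP^{(\epsilon)}/dP^0$ is identically $1$), the submodel realizing this score perturbs $P_j$ only on $\{s\notin\mathcal{S}_j\}\cup\{\bar z_{j-1}\notin\bar{\mathcal{Z}}_{j-1}^\dagger\}$ and leaves every $P_{j'}$, $j'\neq j$, unchanged; in particular it leaves all of the conditional laws $P_{j'}(\,\cdot\mid\bar z_{j'-1},s_{j'})$, $j'\in\mathcal{J}$, $\bar z_{j'-1}\in\bar{\mathcal{Z}}_{j'-1}^\dagger$, unchanged, so by the previous paragraph $\phi(P^{(\epsilon)})$ is constant and $h_j\in\mathcal{T}^\ddagger(P^0,\mathcal{P})$.

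I expect the main obstacle to be making the ``$\phi$ depends only on the relevant conditionals over $\bar{\mathcal{Z}}_{j-1}^\dagger$'' step fully rigorous: it requires care about the non-uniqueness of $\theta$, about the fact that $\bar{\mathcal{Z}}_{j-1}^\dagger$ is stable across $\mathcal{Q}(P^0)$ and $\mathcal{Q}(P^{(\epsilon)})$ (mutual absolute continuity within $\mathcal{Q}$), and about the mixture-invariance argument used for the $S$-marginal perturbation. A secondary technical point is extracting from the construction in the proof of Lemma~\ref{lem:TPj} a submodel whose perturbation is genuinely confined to the region on which it cannot influence $\theta$; both of these should be routine once the dependence structure of $\phi$ is cleanly stated, but they are where the proof will need the most attention.
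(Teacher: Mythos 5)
Your proposal follows essentially the same route as the paper's proof: part (i) via a submodel perturbing only $Q_i$; parts (ii) and (iii) by arguing that the relevant conditionals of $\theta(P^{(\epsilon)})$ are unchanged so $\phi(P^{(\epsilon)})$ is exactly constant (including the mixture-invariance argument for the $S$-marginal when $i=0$, which the paper obtains from $P^{(\epsilon)}\in\mathcal{P}$ together with Condition~\ref{cond:identifiability}); and for (iii) the same $f_j=0$ construction from Part~1 of the proof of Lemma~\ref{lem:TPj}. One small caution: for the membership step in (iii), rely on that construction itself rather than the statement of Lemma~\ref{lem:TPj}/\eqref{eq:TPj}, since that lemma is stated under Condition~\ref{cond:posPart}, which is not assumed here (with $f_j=0$ the construction does not need it).
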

\begin{proof}[Proof of Lemma~\ref{lem:nTanSpace}]
If $\mathcal{I}$ is empty then \ref{it:nTanSpaceQI} and \ref{it:nTanSpacePI} are obvious. Hence, when proving those results, we suppose that $\mathcal{I}$ is nonempty.

We first prove \ref{it:nTanSpaceQI}. Fix $i\in \mathcal{I}$ and $f\in \mathcal{T}(\uQ^0,\mathcal{Q}_i)$. As $\mathcal{T}(\uQ^0,\mathcal{Q})=\bigoplus_{i=1}^d \mathcal{T}(\uQ^0,\mathcal{Q}_i)$ and $\mathcal{T}(\uQ^0,\mathcal{Q})$ was assumed to be a closed space, there exists a $\delta>0$ and a univariate submodel $\{\tilde{Q}^{(\epsilon)} : \epsilon\in [0,\delta)\}$ such that $\tilde{Q}^{(\epsilon)}=\tilde{\uQ}^0$ when $\epsilon=0$ and such that the model has score $f$ at $\epsilon=0$. By Condition~\ref{cond:var_indep}, we can further define $\{Q^{(\epsilon)} : \epsilon\in [0,\delta)\}\subseteq\mathcal{Q}$ to be such that, for each $\epsilon$, $Q_i^{(\epsilon)}=\tilde{Q}_i^{(\epsilon)}$ and, for all $j\not=i$, $Q_j^{(\epsilon)}=\uQ_j^0$. It can be readily verified that $Q^{(\epsilon)}=\uQ^0$ when $\epsilon=0$ and $\{Q^{(\epsilon)} : \epsilon\in [0,\delta)\}$ has score $f$ at $\epsilon=0$. Since $Q_j^{(\epsilon)}=\uQ_j^0$ for all $j\in\mathcal{J}\subseteq [d]\backslash \{i\}$ , the definition of $\mathcal{J}$ shows that $\psi(Q^{(\epsilon)})$ is constant over $\epsilon\in [0,\delta)$, and so $\frac{\partial}{\partial \epsilon} \psi(Q^{(\epsilon)})=0$. Hence, $f\in \mathcal{T}^\ddagger(\uQ^0,\mathcal{Q})$. As $f\in\mathcal{T}(\uQ^0,\mathcal{Q}_i)$ was arbitrary, $\mathcal{T}(\uQ^0,\mathcal{Q}_i)\subseteq \mathcal{T}^\ddagger(\uQ^0,\mathcal{Q})$.

We now prove \ref{it:nTanSpacePI}. Fix $i\in\{0\}\cup\mathcal{I}$ and $h\in\mathcal{T}(P^0,\mathcal{P}_i)$. By similar arguments to those used to prove \ref{it:nTanSpaceQI}, there exists $\{P^{(\epsilon)} : \epsilon\in [0,\delta)\}\subseteq \mathcal{P}$ with score $h$ at $\epsilon=0$ that is such that $P_j^{(\epsilon)}=P_j^0$ for all $\epsilon$ and $j\not=i$ and $P^{(\epsilon)}=P^0$ when $\epsilon=0$. Combining this with Condition~\ref{cond:identifiability} shows that $P_j^{(\epsilon)}(\cdot\mid \bar{z}_{j-1},S\in\mathcal{S}_j)=P_j^0(\cdot\mid \bar{z}_{j-1},S\in\mathcal{S}_j)=\uQ_j^0$ for all $j\in \mathcal{J}\subseteq [d]\backslash \{i\}$. Hence, for all $j\in\mathcal{J}$, the distribution of $\bar{Z}_j\mid \bar{Z}_{j-1}$ under $\theta(P^{(\epsilon)})$ is equal to $\uQ_j^0$. The definition of $\mathcal{J}$ then shows that $\phi(P^{(\epsilon)})=\psi\circ \theta(P^{(\epsilon)})$ is constant in $\epsilon$, and so $\frac{\partial}{\partial\epsilon} \phi(P^{(\epsilon)})\mid_{\epsilon=0}=0$. As $h\in\mathcal{T}(P^0,\mathcal{P}_i)$ was arbitrary, $\mathcal{T}(P^0,\mathcal{P}_i)\subseteq \mathcal{T}^\ddagger(P^0,\mathcal{P})$.

We now prove \ref{it:nTanSpacePJ}. Fix $j\in\mathcal{J}$ and $h\in \mathcal{U}(P^0,\mathcal{P}_j)$. First, note that $\mathcal{U}(P^0,\mathcal{P}_j)\subseteq\mathcal{T}(P^0,\mathcal{P}_j)$. Next, note that it is possible to construct a submodel $\{P^{(\epsilon)} : \epsilon\in [0,\delta)\}$ of $\mathcal{P}$ with score $h$ at $\epsilon=0$ that is such that $P^{(\epsilon)}=0$ when $\epsilon=0$ and $P_i^{(\epsilon)}=P_i^0$ for all $i\not=j$ --- in fact, the first part of the proof of Lemma~\ref{lem:TPj} provides such a construction (this can be seen by taking $f_j=0$ in the first part of that proof). Since $h_j\in\mathcal{U}(P^0,\mathcal{P}_j)$, $P_j^{(\epsilon)}(\cdot\mid \bar{z}_{j-1},s)$ to $P_j^0(\cdot\mid \bar{z}_{j-1},s)$ for $P^0$-almost all $(\bar{z}_{j-1},s)\in\bar{\mathcal{Z}}_{j-1}^\dagger\times \mathcal{S}_j$. Now, since $\bar{\mathcal{Z}}_{j-1}^\dagger$ denotes the support of $\bar{Z}_{j-1}$ under sampling from any $Q\in\mathcal{Q}$, it then must hold that the distribution of $Z_j\mid \bar{Z}_{j-1},S$ under $\theta(P^{(\epsilon)})$ is the same for all $\epsilon\in [0,\delta)$; also, for all $i\in \mathcal{J}\backslash \{j\}$, the distribution of $Z_i\mid \bar{Z}_{i-1},S$ under $\theta(P^{(\epsilon)})$ is the same for all $\epsilon\in [0,\delta)$ since $P_i^{(\epsilon)}=P_i^0$. Hence, by the definition of $\mathcal{J}$, $\phi(P^{(\epsilon)})=\psi\circ \theta(P^{(\epsilon)})$ is constant in $\epsilon$, and so $\frac{\partial}{\partial\epsilon} \phi(P^{(\epsilon)})\mid_{\epsilon=0}=0$. As $h\in\mathcal{U}(P^0,\mathcal{P}_j)$ was arbitrary, $\mathcal{U}(P^0,\mathcal{P}_j)\subseteq \mathcal{T}^\ddagger(P^0,\mathcal{P})$.
\end{proof}

Because the proofs are related, we prove Lemma~\ref{lem:pd} and Theorem~\ref{thm:grad} together.
\begin{proof}[Proofs of Lemma~\ref{lem:pd} and Theorem~\ref{thm:grad}]

We begin with a sketch of our proof. We will first suppose that $\psi$ is pathwise differentiable at $\uQ^0$ relative to $\mathcal{Q}$ and fix a gradient $D_{\uQ^0}$ of $\psi$. We will show that, for any submodel $\{P^{(\epsilon)} : \epsilon\in [0,\delta)\}$ with score $h\in \mathcal{T}(P^0,\mathcal{P})$ and with $P^{(\epsilon)}=P^0$ when $\epsilon=0$, it holds that $\frac{\partial}{\partial \epsilon}\phi(P^{(\epsilon)}) \mid_{\epsilon=0} = E_{P^0}\{D_{P^0}(Z,S)h(Z,S)\}$, where $D_{P^0}$ takes the form given in \eqref{eq:phiGrad}. This will show that $D_{P^0}$ is a gradient of $\phi$, which will complete the proof of Theorem~\ref{thm:grad} and the forward direction of Lemma~\ref{lem:pd}. It will then remain to prove the reverse direction of Lemma~\ref{lem:pd}, which we will provide in the latter half of this proof.

Suppose that $\psi$ is pathwise differentiable at $\uQ^0$ relative to $\mathcal{Q}$ and fix a gradient $D_{\uQ^0}$ of $\psi$ at $\uQ^0$ relative to $\mathcal{Q}$. Since $D_{\uQ^0}$ is a gradient, for any submodel $\{Q^{(\epsilon)} : \epsilon\in [0,\delta)\}$ with score $f\in \mathcal{T}(\uQ^0,\mathcal{Q})$ and with $Q^{(\epsilon)}=\uQ^0$ when $\epsilon=0$, it holds that $\frac{\partial}{\partial \epsilon}\psi(Q^{(\epsilon)}) \mid_{\epsilon=0} = E_{\uQ^0}\{D^*_{\uQ^0}(Z)f(Z)\}$. 
As $L_0^2(\uQ^0)=\bigoplus_{i=1}^d L_0^2(\uQ_j^0)$, there exist $D_{\uQ^0,j}\in L_0^2(\uQ_j^0)$, $j\in [d]$, such that $D_{\uQ^0}=\sum_{j=1}^d D_{\uQ^0,j}$ --- in particular, $D_{\uQ^0,j}(\bar{z}_j)=E_{\uQ^0}[D_{\uQ^0}(Z)\mid \bar{Z}_j=\bar{z}_j]-E_{\uQ^0}[D_{\uQ^0}(Z)\mid \bar{Z}_{j-1}=\bar{z}_{j-1}]$. Moreover, since gradients for $\psi$ reside in the orthogonal complement of the nuisance tangent space $\mathcal{T}^\ddagger(\uQ^0,\mathcal{Q})$, Lemma~\ref{lem:nTanSpace} shows that $D_{\uQ^0,i}=0$ for all $i\in\mathcal{I}$. 

Fix a function $h\in \mathcal{T}(P^0,\mathcal{P})$ and submodel $\{P^{(\epsilon)} : \epsilon\in [0,\delta)\}$. Since $\mathcal{T}(P^0,\mathcal{P})=\bigoplus_{j=0}^d \mathcal{T}(P^0,\mathcal{P}_j)$, there exist $h_j\in \mathcal{T}(P^0,\mathcal{P}_j)$, $j\in \{0\}\cup [d]$ , such that $h=\sum_{j=0}^d h_j$. Moreover, for each $j\in \mathcal{J}$, Lemma~\ref{lem:TPj} shows that there exists an $f_j\in \mathcal{T}(\uQ^0,\mathcal{Q}_j)$   such that $h_j(\bar{z}_j,s) =f_j(\bar{z}_j)$ for $(s,\bar{z}_{j-1}) \in \mathcal{S}_j \times \bar{Z}^{\dagger}_{j-1}$. 
For each $\epsilon\in [0,\delta)$, let $Q^{(\epsilon)}\in\mathcal{Q}$ be such that $Q_i^{(\epsilon)}=\uQ_i^0$ for all $i\in\mathcal{I}$ and, for all $j\in\mathcal{J}$, $Q_j^{(\epsilon)}(\cdot\mid \bar{z}_{j-1})=P_j^{(\epsilon)}(\cdot\mid \bar{z}_{j-1}, \mathcal{S}_j)$ for $\uQ^0$-almost all $\bar{z}_{j-1}\in\bar{\mathcal{Z}}_{j-1}^\dagger$. Clearly $Q^{(\epsilon)}=\uQ^0$ when $\epsilon=0$. Moreover, by analogous arguments to those given in the second part of the proof of Lemma~\ref{lem:TPj}, $\{Q^{(\epsilon)} : \epsilon\in [0,\delta)\}$ has score $\sum_{j\in\mathcal{J}} f_j$ at $\epsilon=0$. As $\psi$ is pathwise differentiable at $\uQ^0$ relative to $\mathcal{Q}$,
\begin{align*}
    \frac{\partial}{\partial\epsilon} \psi(Q^{(\epsilon)})\mid_{\epsilon=0}=E_{\uQ^0}\left\{D_{\uQ^0}(Z)\sum_{j\in\mathcal{J}} f_j(\bar{Z}_j)\right\}=E_{\uQ^0}\left\{\sum_{j\in\mathcal{J}} D_{\uQ^0,j}(\bar{Z}_j)f_j(\bar{Z}_j)\right\},
\end{align*}
where the latter equality used the orthogonality of the subspaces $L_0^2(\uQ_j^0)$ and $L_0^2(\uQ_i^0)$ when $i\not=j$. 
By the law of total expectation and Condition~\ref{cond:identifiability}, this shows that
\begin{align*}
     \frac{\partial}{\partial\epsilon} \psi(Q^{(\epsilon)})\mid_{\epsilon=0}&= E_{\uQ^0}\left[\sum_{j\in\mathcal{J}} E_{P^0}\left\{D_{\uQ^0,j}(\bar{Z}_j)f_j(\bar{Z}_j)\mid \bar{Z}_{j-1},S\in\mathcal{S}_j\right\}\right] \\
     &= E_{P^0}\left[\sum_{j\in\mathcal{J}} \lambda_{j-1}(\bar{Z}_{j-1}) E_{P^0}\left\{D_{\uQ^0,j}(\bar{Z}_j)f_j(\bar{Z}_j)\mid \bar{Z}_{j-1},S\in\mathcal{S}_j\right\}\mid S\in\mathcal{S}_j\right] \\
     &= E_{P^0}\left[\sum_{j\in\mathcal{J}} \lambda_{j-1}(\bar{Z}_{j-1}) D_{\uQ^0,j}(\bar{Z}_j)f_j(\bar{Z}_j)\mid S\in\mathcal{S}_j\right] \\
     &= E_{P^0}\left[\sum_{j\in\mathcal{J}} \frac{\mathbbm{1}(S\in\mathcal{S}_j)}{P(S\in\mathcal{S}_j)} \lambda_{j-1}(\bar{Z}_{j-1}) D_{\uQ^0,j}(\bar{Z}_j)f_j(\bar{Z}_j)\right].
\end{align*}
Now, by the construction of $Q^{(\epsilon)}$, it can be verified that, for all $j\in\mathcal{J}$, the distribution of $\bar{Z}_j\mid \bar{Z}_{j-1}$ under $\theta(P^{(\epsilon)})$ is equal to $Q_j^{(\epsilon)}$. Hence, for all $\epsilon\in [0,\delta)$, $\psi(Q^{(\epsilon)})=\phi(P^{(\epsilon)})$. Combining this with the fact that  the above shows that $\mathbbm{1}(s\in\mathcal{S}_j)\lambda_{j-1}(\bar{z}_{j-1})=0$ for $P^0$-almost all $(s,\bar{z}_{j-1})\not\in\mathcal{S}_j\times \bar{Z}_{j-1}^\dagger$, we see that
\begin{align*}
    \frac{\partial}{\partial\epsilon} \phi(P^{(\epsilon)})\mid_{\epsilon=0}&= E_{P^0}\left[\sum_{j\in\mathcal{J}} \frac{\mathbbm{1}(S\in\mathcal{S}_j)}{P(S\in\mathcal{S}_j)} \lambda_{j-1}(\bar{Z}_{j-1}) D_{\uQ^0,j}(\bar{Z}_j)f_j(\bar{Z}_j)\right] \\
    &= E_{P^0}\left[\sum_{j\in\mathcal{J}} \frac{\mathbbm{1}(S\in\mathcal{S}_j)}{P(S\in\mathcal{S}_j)} \lambda_{j-1}(\bar{Z}_{j-1}) D_{\uQ^0,j}(\bar{Z}_j)h_j(\bar{Z}_j,S)\right].
\end{align*}
 Using that $L_0^2(P_j^0)$ and $L_0^2(P_i^0)$ are orthogonal spaces for $i\not=j$ and also that $(z,s)\mapsto \frac{\mathbbm{1}(s\in\mathcal{S}_j)}{P(S\in\mathcal{S}_j)} \lambda_{j-1}(\bar{z}_{j-1}) D_{\uQ^0,j}(\bar{z}_j)\in L_0^2(P_j^0)$, where here we used Conditions~\ref{cond:identifiability} and \ref{cond:posPart} to ensure that this function has finite second moment, we see that $ \frac{\partial}{\partial\epsilon} \phi(P^{(\epsilon)})\mid_{\epsilon=0} = E_{P^0}\{D_{P^0}(Z,S)h(Z,S)\}$, where $D_{P^0}$ takes the form in \eqref{eq:phiGrad}. As $h\in \mathcal{T}(P^0,\mathcal{P})$ was arbitrary, $\phi$ is pathwise differentiable at $P^0$ relative to $\mathcal{P}$ with gradient $D_{P^0}$. 
This proves the forward direction of Lemma~\ref{lem:pd} and also proves Theorem~\ref{thm:grad}.

We now prove the other direction of Lemma~\ref{lem:pd}. Suppose that $\phi$ is pathwise differentiable at $P^0$ relative to $\mathcal{P}$ and let $D_{P^0}^*$ denote the canonical gradient of $\phi$. Fix a univariate submodel $\{Q^{(\epsilon)} : \epsilon\in [0,\delta)\}$ of $\mathcal{Q}$ that has score $f\in \mathcal{T}(\uQ^0,\mathcal{Q})$ and is such that $Q^{(\epsilon)}=\uQ^0$ when $\epsilon=0$. Since $f=\bigoplus_{j=1}^d \mathcal{T}(\uQ^0,\mathcal{Q}_j)$, it holds that $f=\sum_{j=1}^d f_j$, where $f_j$ is the projection of $f$ onto $\mathcal{T}(\uQ^0,\mathcal{Q}_j)$ in $L_0^2(\uQ^0)$. By Lemma~\ref{lem:TPj}, the fact that the tangent set of $\mathcal{P}$ at $P^0$ is a closed linear space, and the variation independence condition, there exists a submodel $\{P^{(\epsilon)} : \epsilon\in [0,\delta)\}$ with score $(z,s)\mapsto \sum_{j\in\mathcal{J}}\mathbbm{1}_{\mathcal{S}_j}(s) \mathbbm{1}_{\bar{\mathcal{Z}}_{j-1}^\dagger}(\bar{z}_{j-1}) f_j(\bar{z}_j)$ and $P^{(\epsilon)}=P^0$ when $\epsilon=0$. Hence, by the pathwise differentiability of $\phi$,
\begin{align*}
     \frac{\partial}{\partial \epsilon}\phi(P^{(\epsilon)}) \mid_{\epsilon=0}&= E_{P^0}\left[D_{P^0}^*(Z,S)\sum_{j\in\mathcal{J}}\mathbbm{1}_{\mathcal{S}_j}(S) \mathbbm{1}_{\bar{\mathcal{Z}}_{j-1}^\dagger}(\bar{Z}_{j-1}) f_j(\bar{Z}_j)\right].
\end{align*}
As $\mathcal{T}(P^0,\mathcal{P})=\bigoplus_{j=0}^d \mathcal{T}(P^0,\mathcal{P}_j)$ and the canonical gradient falls in both the tangent space $\mathcal{T}(P^0,\mathcal{P})$ and the orthogonal complement of the nuisance tangent space $\mathcal{T}^\ddagger(P^0,\mathcal{P})$, Lemmas~\ref{lem:TPj} and \ref{lem:nTanSpace} together show that there exist $D_{\uQ^0,j}\in\mathcal{T}(\uQ^0,\mathcal{Q}_j)$, $j\in\mathcal{J}$, such that $D_{P^0}^*$ takes the form $(z,s)\mapsto \sum_{j\in\mathcal{J}} \mathbbm{1}_{\mathcal{S}_j}(s)\mathbbm{1}_{\bar{\mathcal{Z}}_{j-1}^\dagger}(\bar{z}_{j-1}) D_{\uQ^0,j}(\bar{z}_j)$. Combining this with the above, the fact that $\frac{\partial}{\partial \epsilon}\psi(Q^{(\epsilon)}) \mid_{\epsilon=0}=\frac{\partial}{\partial \epsilon}\phi(P^{(\epsilon)}) \mid_{\epsilon=0}$ under Condition~\ref{cond:identifiability},  and the law of total expectation, we see that
\begin{align*}
     &\frac{\partial}{\partial \epsilon}\psi(Q^{(\epsilon)}) \mid_{\epsilon=0} \\
     &= E_{P^0}\left[\sum_{j'\in\mathcal{J}} \mathbbm{1}_{\mathcal{S}_{j'}}(S)\mathbbm{1}_{\bar{\mathcal{Z}}_{j'-1}^\dagger}(\bar{Z}_{j'-1}) D_{\uQ^0,j'}(\bar{Z}_{j'})\sum_{j\in\mathcal{J}}\mathbbm{1}_{\mathcal{S}_j}(S) \mathbbm{1}_{\bar{\mathcal{Z}}_{j-1}^\dagger}(\bar{Z}_{j-1}) f_j(\bar{Z}_j)\right] \\
     &= E_{P^0}\left[\sum_{j\in\mathcal{J}} \mathbbm{1}_{\bar{\mathcal{Z}}_{j-1}^\dagger}(\bar{Z}_{j-1}) E_{P^0}\{\mathbbm{1}_{\mathcal{S}_{j}}(S)D_{\uQ^0,j}(\bar{Z}_{j}) f_j(\bar{Z}_j)\mid \bar{Z}_{j-1},S\}\right] \\
     &\quad+ E_{P^0}\left[\sum_{j,j'\in\mathcal{J} : j<j'} \mathbbm{1}_{\bar{\mathcal{Z}}_{j'-1}^\dagger}(\bar{Z}_{j'-1}) E_{P^0}\{\mathbbm{1}_{\mathcal{S}_{j'}}(S)D_{\uQ^0,j'}(\bar{Z}_{j'})\mid \bar{Z}_{j'-1},S\}\mathbbm{1}_{\mathcal{S}_j}(S) \mathbbm{1}_{\bar{\mathcal{Z}}_{j-1}^\dagger}(\bar{Z}_{j-1}) f_j(\bar{Z}_j)\right] \\
     &\quad+ E_{P^0}\left[\sum_{j,j'\in\mathcal{J} : j>j'} \mathbbm{1}_{\mathcal{S}_{j'}}(S)\mathbbm{1}_{\bar{\mathcal{Z}}_{j'-1}^\dagger}(\bar{Z}_{j'-1}) D_{\uQ^0,j'}(\bar{Z}_{j'}) \mathbbm{1}_{\bar{\mathcal{Z}}_{j-1}^\dagger}(\bar{Z}_{j-1}) E_{P^0}\{\mathbbm{1}_{\mathcal{S}_j}(S)f_j(\bar{Z}_j)\mid \bar{Z}_{j-1},S\}\right].
\end{align*}
The expectations conditional on $(\bar{Z}_{j-1},S)$ in the latter two terms above are zero by Conditions~\ref{cond:identifiability} and the fact that functions in $L_0^2(\uQ_j^0)$ are $\uQ^0$-mean-zero for any $j$. Hence, the above display continues as
\begin{align*}
    \frac{\partial}{\partial \epsilon}\psi(Q^{(\epsilon)}) \mid_{\epsilon=0}&= E_{P^0}\left[\sum_{j\in\mathcal{J}} \mathbbm{1}_{\bar{\mathcal{Z}}_{j-1}^\dagger}(\bar{Z}_{j-1}) E_{P^0}\{\mathbbm{1}_{\mathcal{S}_{j}}(S)D_{\uQ^0,j}(\bar{Z}_{j}) f_j(\bar{Z}_j)\mid \bar{Z}_{j-1},S\}\right] \\
    &= E_{P^0}\left[\sum_{j\in\mathcal{J}} P^0(S\in\mathcal{S}_j) \mathbbm{1}_{\bar{\mathcal{Z}}_{j-1}^\dagger}(\bar{Z}_{j-1}) E_{P^0}\{D_{\uQ^0,j}(\bar{Z}_{j}) f_j(\bar{Z}_j)\mid \bar{Z}_{j-1},S\in\mathcal{S}_j\}\right] \\
    &= E_{P^0}\left[\sum_{j\in\mathcal{J}} P^0(S\in\mathcal{S}_j) \mathbbm{1}_{\bar{\mathcal{Z}}_{j-1}^\dagger}(\bar{Z}_{j-1}) E_{\uQ^0}\{D_{\uQ^0,j}(\bar{Z}_{j}) f_j(\bar{Z}_j)\mid \bar{Z}_{j-1}\}\right],
\end{align*}
where the final equality used Condition~\ref{cond:identifiability}. Applying Condition~\ref{cond:posPart} and the law of total expectation and using that $\bar{Z}_{j-1}$ has support $\bar{\mathcal{Z}}_{j-1}^\dagger$ under sampling from $\uQ^0$, we see that
\begin{align*}
      \frac{\partial}{\partial \epsilon}\psi(Q^{(\epsilon)}) \mid_{\epsilon=0}&= E_{\uQ^0}\left[\sum_{j\in\mathcal{J}} P^0(S\in\mathcal{S}_j) \mathbbm{1}_{\bar{\mathcal{Z}}_{j-1}^\dagger}(\bar{Z}_{j-1}) E_{\uQ^0}\{D_{\uQ^0,j}(\bar{Z}_{j}) f_j(\bar{Z}_j)\mid \bar{Z}_{j-1}\}\lambda_{j-1}(\bar{Z}_{j-1})^{-1}\right] \\
      &= E_{\uQ^0}\left[\sum_{j\in\mathcal{J}} P^0(S\in\mathcal{S}_j) D_{\uQ^0,j}(\bar{Z}_{j}) f_j(\bar{Z}_j)\lambda_{j-1}(\bar{Z}_{j-1})^{-1}\right]
\end{align*}
Since $z\mapsto D_{\uQ^0,j}(\bar{z}_j)\lambda_{j-1}(\bar{z}_{j-1})^{-1}$ and $f_j$ are both in $L_0^2(\uQ_j^0)$, $j\in\mathcal{J}$, and since $L_0^2(\uQ_j^0)$ and $L_0^2(\uQ_i^0)$ are orthogonal subspaces of $L_0^2(\uQ^0)$ when $i\not=j$, we see that
\begin{align*}
    \frac{\partial}{\partial \epsilon}\psi(Q^{(\epsilon)}) \mid_{\epsilon=0}&= E_{\uQ^0}\left[\left\{\sum_{j\in\mathcal{J}} P^0(S\in\mathcal{S}_j) D_{\uQ^0,j}(\bar{Z}_{j})\lambda_{j-1}(\bar{Z}_{j-1})^{-1}\right\} \sum_{j\in\mathcal{J}} f_j(\bar{Z}_j)\right] \\
    &= E_{\uQ^0}\left[\left\{\sum_{j\in\mathcal{J}} P^0(S\in\mathcal{S}_j) D_{\uQ^0,j}(\bar{Z}_{j})\lambda_{j-1}(\bar{Z}_{j-1})^{-1}\right\} \sum_{j=1}^d f_j(\bar{Z}_j)\right] \\
    &= E_{\uQ^0}\left[\left\{\sum_{j\in\mathcal{J}} P^0(S\in\mathcal{S}_j) D_{\uQ^0,j}(\bar{Z}_{j})\lambda_{j-1}(\bar{Z}_{j-1})^{-1}\right\} f(Z)\right].
\end{align*}
By Condition~\ref{cond:posPart}, $\lambda_{j-1}(\bar{Z}_{j-1})^{-1}$ is bounded, and so $z\mapsto \sum_{j\in\mathcal{J}} P^0(S\in\mathcal{S}_j) D_{\uQ^0,j}(\bar{z}_{j})\lambda_{j-1}(\bar{z}_{j-1})^{-1}$ belongs to $L_0^2(\uQ^0)$. As this function also does not depend on the arbitrarily chosen score $f\in\mathcal{T}(\uQ^0,\mathcal{Q})$, $\psi$ is pathwise differentiable at $\uQ^0$ relative to $\mathcal{Q}$.
\end{proof}

\begin{proof}[Proof of Corollary~\ref{cor:canGrad}]
Fix a gradient $D_{\uQ^0}$ of $\psi$. 
Recall the definition of $D_{P^0}$ from \eqref{eq:phiGrad}. We will show that the $L_0^2(P^0)$-projection of $D_{P^0}$ onto $\mathcal{T}(P^0,\mathcal{P})$ takes the form in \eqref{eq:phiCanGrad}, which establishes the desired result since projecting any gradient onto the tangent space yields the canonical gradient.

First, note that, by Lemma~\ref{lem:LambdajIdent}, we have that, for any $j\in\mathcal{J}$,
\begin{align*}
    \Gamma_j(D_{P^0})(\bar{z}_j)&= E_{P^0}\left[ \frac{\mathbbm{1}(S\in\mathcal{S}_j)}{P(S\in\mathcal{S}_j)} \lambda_{j-1}(\bar{Z}_{j-1}) D_{\uQ^0,j}(\bar{Z}_j)\mid \bar{Z}_j=\bar{z}_j,S\in\mathcal{S}_j\right] \\
    &\quad- E_{P^0}\left[ \frac{\mathbbm{1}(S\in\mathcal{S}_j)}{P(S\in\mathcal{S}_j)} \lambda_{j-1}(\bar{Z}_{j-1}) D_{\uQ^0,j}(\bar{Z}_j)\mid \bar{Z}_{j-1}=\bar{z}_{j-1},S\in\mathcal{S}_j\right] \\
    &=  \frac{\lambda_{j-1}(\bar{z}_{j-1})}{P(S\in\mathcal{S}_j)} D_{\uQ^0,j}(\bar{z}_j) \\
    &\quad-  \frac{1}{P(S\in\mathcal{S}_j)} \lambda_{j-1}(\bar{z}_{j-1}) E_{P^0}\left[D_{\uQ^0,j}(\bar{Z}_j)\mid \bar{Z}_{j-1}=\bar{z}_{j-1},S\in\mathcal{S}_j\right].
\end{align*}
The latter term above is zero since $D_{\uQ^0,j}\in L_0^2(\uQ_j^0)$ and, under Condition~\ref{cond:identifiability}, $E_{P^0}\left[D_{\uQ^0,j}(\bar{Z}_j)\mid \bar{Z}_{j-1}=\bar{z}_{j-1},S\in\mathcal{S}_j\right] = E_{\uQ^0}\left[D_{\uQ^0,j}(\bar{Z}_j)\mid \bar{Z}_{j-1}=\bar{z}_{j-1}\right]$. Hence,
\begin{align*}
    \Gamma_j(D_{P^0})(\bar{z}_j)&=  \frac{\lambda_{j-1}(\bar{z}_{j-1})}{P(S\in\mathcal{S}_j)} D_{\uQ^0,j}(\bar{z}_j).
\end{align*}
Moreover, as $D_{P^0}$ is a gradient of $\phi$ by Theorem~\ref{thm:grad}, and as gradients are orthogonal to the nuisance tangent space, Lemma~\ref{lem:nTanSpace} shows that $\Pi_{P^0}\{D_{P^0}\mid L_0^2(P_0^0)\}=0$ and, for $i\in\mathcal{I}$, $\Pi_{P^0}\{D_{P^0}\mid L_0^2(P_i^0)\}=0$. Combining this with the above shows that
\begin{align*}
    \Pi_{P^0}\{D_{P^0} \mid \mathcal{T}(P^0,\mathcal{P})\}(z,s)&= \sum_{j\in\mathcal{J}} \mathbbm{1}_{\mathcal{S}_j}(s) \mathbbm{1}_{\bar{\mathcal{Z}}_{j-1}^\dagger}(\bar{z}_{j-1})\Pi_{\uQ^0}\left\{\Gamma_j(D_{P^0})\mid \mathcal{T}(\uQ^0,\mathcal{Q})\right\}(\bar{z}_j) \\
    &= \sum_{j\in\mathcal{J}} \mathbbm{1}_{\bar{\mathcal{Z}}_{j-1}^\dagger}(\bar{z}_{j-1})\frac{\mathbbm{1}_{\mathcal{S}_j}(s)}{P^0(S\in\mathcal{S}_j)} \Pi_{\uQ^0}\left\{r_j\mid \mathcal{T}(\uQ^0,\mathcal{Q})\right\}(\bar{z}_j),
\end{align*}
where $r_j$ is as defined in the statement of the corollary.
\end{proof}

\section{Examples of projections onto the tangent spaces of semiparametric models}
\label{sec: projections_examples}
 
  The following examples provide the form of the projection in Corollary~\ref{cor:canGrad} in three interesting cases. Two further examples, namely where a conditional density is known to be symmetric or conditional mean function is known to be linear, can be found in the longitudinal treatment example (see Appendix~\ref{example:longitudinal treatment effect appendix}). We also refer the reader to \cite{bickel1993efficient}, \cite{van2003unified}, and \cite{tsiatis2006semiparametric} for further examples.
  
  \vspace{.75em}
  
  \noindent \textbf{Example 1. Conditional moment restriction.} Suppose $Q^0$ lies in the class $\mathcal{Q}$ containing each univariate distribution $Q$ satisfying the moment condition $E_{Q}[g_0(\bar{Z}_j)|\bar{Z}_{j-1}] = 0$, where, for some $j\in [d]$, $g_0 : \prod_{i=1}^j \mathcal{Z}_j\rightarrow\mathbb{R}$ is a known function. The projection of an arbitrary element $h \in L_0^2(Q_j^0)$ onto the tangent space of $\mathcal{Q}$ at $Q^0$ is given by
  $$\bar{z}_j \mapsto h(\bar{z}_j) - \frac{E_{Q^0}[g_0(\bar{Z}_j)h(\bar{Z}_j)\mid \bar{Z}_{j-1}=\bar{z}_{j-1}]}{E_{Q^0}[g_0(\bar{Z}_j)^2\mid \bar{Z}_{j-1}=\bar{z}_{j-1}]}g_0(\bar{z}_j).$$
  The calculations needed to establish the form of this projection are nearly identical to those used in Example 3.2.3 of \cite{bickel1993efficient} and so are omitted.\vspace{.75em}
  
  \noindent \textbf{Example 2. Repeated measures.}  Suppose that, for some $j\in [d]$, $Z_j=(Z_{j,1},\ldots,Z_{j,r})$ is a tuple of $r$ repeated measurements of a real-valued quantity. Further suppose that, for almost all realizations $\bar{z}_{j-1}$ of $\bar{Z}_{j-1}$, $Z_j\mid \bar{Z}_{j-1}=\bar{z}_{j-1}$ is an exchangeable random variable. The model $\mathcal{Q}$ denotes the collection of all distributions $Q$ for which $Q_j((z_{j,1},\ldots,z_{j,r})\mid \bar{z}_{j-1})=Q_j((z_{j,\pi(1)},z_{j,\pi(2)},\ldots,z_{j,\pi(r)})\mid \bar{z}_{j-1})$ for $Q$-almost all $\bar{z}_j$ and all $\pi$ belonging to the collection $\Pi$ of all permutations of $(1,2,\ldots,r)$. The projection of an arbitrary element $h \in L_0^2(Q_j^0)$ onto the tangent space of $\mathcal{Q}$ at $Q^0$ is given by
  $$\bar{z}_j \mapsto \frac{1}{r!}\sum_{\pi\in \Pi}h(\bar{z}_{j-1},(z_{j,\pi(1)},z_{j,\pi(2)},\ldots,z_{j,\pi(r)})).$$
  The calculations needed to establish the form of this projection follow the arguments used in Example 6.3.3 of \cite{bickel1993efficient}.\vspace{.5em}
  
  \noindent \textbf{Example 3. Conditional independencies as encoded by a directed acyclic graph.} Suppose it is known that the distribution of $Z=(Z_1,Z_2,\ldots,Z_d)$ respects the conditional independencies encoded by a directed acyclic graph with topological ordering $(Z_1,Z_2,\ldots,Z_d)$. More concretely, suppose that $\mathcal{Q}$ is the collection of all distributions $Q$ such that, for each $j\in [d]$, there is a known set $\mathcal{L}_j\subseteq [j-1]$ such that $Q(\cdot\mid \bar{Z}_{j-1})=Q(\cdot\mid {\rm pa}(Z_j))$ with $Q$-probability one, where ${\rm pa}(Z_j)= (Z_\ell)_{\ell\in \mathcal{L}_j}$. For any $j\in [d]$, the projection of an arbitrary $h \in L_0^2(Q_j^0)$ onto the tangent space of $\mathcal{Q}$ at $Q^0$ is given by
  \begin{align*}
      \bar{z}_j\mapsto E_{Q^0}\left[h(\bar{Z}_j)\mid Z_j=z_j,{\rm pa}(Z_j)=(z_\ell)_{\ell\in\mathcal{L}_j}\right] - E_{Q^0}\left[h(\bar{Z}_j)\mid {\rm pa}(Z_j)=(z_\ell)_{\ell\in\mathcal{L}_j}\right],
  \end{align*}
  See Lemma~9 and the proof of Theorem~6 in \cite{rotnitzky2019efficient} for details.

\section{Longitudinal treatment effect: detailed derivation}
\label{example:longitudinal treatment effect appendix}

\noindent \textbf{Derivation of influence functions:} 
\label{sec:derive_long}
\Citet{van2012targeted} gave the form of the canonical gradient under a locally nonparametric model $\mathcal{Q}$, which is $D_{\uQ^0}(z)  = D^{1}_{\uQ^0}(z) - D^{0}_{\uQ^0}(z) $ ,  where $D^{a'}_{\uQ^0}(z) = \sum_{t=1}^T D^{a'}_{\uQ^0,2t-1}(\bar{h}_t)$, and
\begin{align*}
    D^{a'}_{\uQ^0,2t-1}(\bar{h}_t) \equiv \left\{\prod_{m=1}^{t-1} \frac{\mathbbm{1}(a_m=a')}{\uQ^0(A_m=a' \mid \bar{U}_{m} = \bar{u}_m,\bar{A}_{m-1}=a')}\right\}\left\{L^{a'}_t(\bar{h}_t) - L^{a'}_{t-1}(\bar{h}_{t-1})\right\} .
\end{align*}
Following the results in Corollary~\ref{cor:canGrad}, the canonical gradient of $\phi$ under a locally nonparametric model $\mathcal{P}$ is, 
\begin{align*}
    D_{P^0}(x) & = \sum_{t=1}^T \mathbbm{1}(\bar{u}_{t-1} \in \bar{\mathcal{U}}^{\dagger}_{t-1})\frac{\mathbbm{1}(s \in \mathcal{S}_{2t-1})}{P^0(S \in \mathcal{S}_{2t-1})} \Pi_{\uQ^0}\{\lambda_{2t-2}D_{\uQ^0,2t-1} \mid \mathcal{T}(\uQ^0,\mathcal{Q})\}.
\end{align*}
Under a locally nonparametric model, $\mathcal{T}(Q_j^0,\mathcal{Q})\} = L_0^2 (\uQ^0_j)$. Hence, for each $t \in \{1,\ldots,T\}$, $\Pi_{\uQ^0}\{\lambda_{2t-2}D_{\uQ^0,2t-1} \mid \mathcal{T}(\uQ^0,\mathcal{Q})\}=\lambda_{2t-2}(\bar{h}_{t-1},a_{t-1})D_{\uQ^0,2t-1}(\bar{h}_{t})$. For each $t \in \{1,\ldots,T\}$, we substitute $\lambda_{2t-2}(\bar{h}_{t-1},a_{t-1}) = d\uQ^0(\bar{h}_{t-1},a_{t-1})/dP^0(\bar{h}_{t-1},a_{t-1} \mid S \in \mathcal{S}_{2t-1})$ and $D_{\uQ^0,2t-1}(\bar{h}_t)$ back into the expression for $  D_{P^0}(x) $. Abbreviating the event that $S\in\mathcal{S}_{r}$ by $\mathcal{S}_{r}$ and considering fixed $t\in \{1,\ldots,T\}$, $\bar{u}_{t-1} \in \bar{\mathcal{U}}^{\dagger}_{t-1}$, and $s \in \mathcal{S}_{2t-1}$, we note that
\begin{align*}
    \lambda_{2t-2}(\bar{h}_{t-1},a_{t-1})D_{\uQ^0,2t-1}(\bar{h}_{t})&=  f(\bar{h}_{t-1},\bar{a}_{t-1})\left\{L^{a'}_t(\bar{h}_t) - L^{a'}_{t-1}(\bar{h}_{t-1})\right\} \\
    &= f(\bar{h}_{t-1},\bar{a}_{t-1})\left\{\Tilde{L}^{a'}_t(\bar{h}_{t},s) -\Tilde{L}^{a'}_{t-1}(\bar{h}_{t-1},s) \right\},
\end{align*}
where the data fusion condition (Condition \ref{cond:identifiability}) shows that
\begin{align*}
    &f(\bar{h}_{t-1},\bar{a}_{t-1}) \equiv \frac{d\uQ^0(\bar{h}_{t-1},a_{t-1})}{dP^0(\bar{h}_{t-1},a_{t-1} \mid S \in \mathcal{S}_{2t-1})} \prod_{m=1}^{t-1} \frac{\mathbbm{1}(a_m=a')}{\uQ^0(A_m=a' \mid \bar{U}_{m} = \bar{u}_m,\bar{A}_{m-1}=a')} \\
    & =  \prod_{m=1}^{t-1} \frac{\mathbbm{1}(a_m = a')}{P^0(A_m=a' \mid \bar{u}_m,\bar{A}_{m-1}=a', \mathcal{S}_{2t-1})}\frac{ d\uQ^0(u_m \mid  \bar{U}_{m-1} = \bar{u}_{m-1},\bar{A}_{m-1} = a') }{dP^0(u_m \mid  \bar{U}_{m-1} = \bar{u}_{m-1},\bar{A}_{m-1} = a', \mathcal{S}_{2t-1})}  \\
    & = \prod_{m=1}^{t-1} \frac{\mathbbm{1}(a_m = a')}{P^0(A_m=a' \mid \bar{u}_m,\bar{A}_{m-1}=a', \mathcal{S}_{2t-1})}\frac{ dP^0( u_m \mid  \bar{U}_{m-1} = \bar{u}_{m-1},\bar{A}_{m-1} = a',  \mathcal{S}_{2m-1}) }{dP^0(u_m \mid  \bar{U}_{m-1} = \bar{u}_{m-1},\bar{A}_{m-1} = a', \mathcal{S}_{2t-1})} 
\end{align*}
Combining the above observations gives the form of the canonical gradient provided in the main text.

Under the semiparametric model where the conditional distribution $U_T \mid \bar{H}_{T-1},A_{T-1}$ is symmetric about $g(\bar{H}_{T-1},A_{T-1})$ for some unknown function $g(\cdot)$ (see Section~\ref{sec:examples} for more details), we show that $D^*_{P^0} = D^{*1}_{P^0} - D^{*0}_{P^0}$ where $D^{*a'}_{P^0}$ is specified in \eqref{eq:EIF_long_semi}, is indeed the canonical gradient of $\phi$ via the following three steps. First, we present the form of the tangent space of $\mathcal{Q}$ at $\uQ^0$ and providing the corresponding form of the projection. Second, we use the initial gradient under a locally nonparametric model and project it onto the tangent space $\mathcal{T}(\uQ^0, \mathcal{Q})$. Third, we use Corollary~\ref{cor:canGrad} to derive the canonical gradient of $\phi$.
 
To begin with, we let $\uq^0_{2T-1}(\cdot \mid \bar{h}_{T-1},a_{T-1})$ denote the conditional density of $U_T$ given that $\bar{H}_{T-1}=\bar{h}_{T-1}$ and $A_{T-1}=a_{T-1}$. We also let $\dot{\uq}^0_{2T-1}(u_T\mid \bar{h}_{T-1},a_{T-1})=\frac{\partial}{\partial u_T}\uq^0_{2T-1}(u_T \mid \bar{h}_{T-1},a_{T-1})$, $\ell(z) \equiv \dot{\uq}^0_{2T-1}(u_T \mid \bar{h}_{T-1},a_{T-1})/\uq^0_{2T-1}(u_T \mid \bar{h}_{T-1},a_{T-1})$, and $I_{2T-1}(\bar{h}_{T-1},a_{T-1}) \equiv \int \dot{\uq}^0_{2T-1}(u_T \mid \bar{h}_{T-1},a_{T-1})^2 / \uq^0_{2T-1} (u_T \mid \bar{h}_{T-1},a_{T-1}) du_T$. 
Similarly, we let $p^0_{2T-1}(\,\cdot \mid \bar{h}_{T-1},a_{T-1}, s \in \mathcal{S}_{2T-1})$ denotes the conditional density of $Y$ given that $\bar{H}_{T-1}=\bar{h}_{T-1}$, $A_{T-1}=a_{T-1}$ and $S \in \mathcal{S}_{2T-1}$, $\dot{p}^0_{2T-1}(y\mid \bar{h}_{T-1},a_{T-1}, s \in \mathcal{S}_{2T-1})=\frac{\partial}{\partial y}p^0_{2T-1}(y \mid \bar{h}_{T-1},a_{T-1}, s \in \mathcal{S}_{2T-1})$,  $\tilde{\ell}(x) \equiv \dot{p}^0_{2T-1}(y \mid \bar{h}_{T-1},a_{T-1}, s \in \mathcal{S}_{2T-1})/p^0_{2T-1}(y \mid \bar{h}_{T-1},a_{T-1}, s \in \mathcal{S}_{2T-1})$, and $\tilde{I}_{2T-1} \equiv  \int \dot{p}^0_{2T-1}(y \mid \bar{h}_{T-1},a_{T-1},s \in \mathcal{S}_{2T-1})^2 / p^0_{2T-1} (y \mid \bar{h}_{T-1},a_{T-1},s \in \mathcal{S}_{2T-1}) dy$.By Condition~\ref{cond:var_indep}, the tangent space writes as $\mathcal{T}(\uQ^0, \mathcal{Q}) = \bigoplus_{j=1}^{2T-2} L_0^2(\uQ^0_j) + \mathcal{T}(\uQ^0_{2T-1},\mathcal{Q}_{2T-1})$, where $\mathcal{T}(\uQ^0_{2T-1},\mathcal{Q}_{2T-1}) = \mathcal{T}_1(\uQ^0_{2T-1},\mathcal{Q}_{2T-1}) \bigoplus \mathcal{T}_2(\uQ^0_{2T-1},\mathcal{Q}_{2T-1})$ with $\mathcal{T}_1(\uQ^0_{2T-1},\mathcal{Q}_{2T-1})$ being equal to the $L_0^2(\uQ^0)$-closure of $\{z \mapsto c(\bar{h}_{T-1}, a_{T-1}) \ell(z) \textnormal{ for any bounded function } c(\cdot) \}$ and, letting $\tilde{u}_T=2g(\bar{h}_{T-1},a_{T-1})-u_T$,
\begin{align*}
    \mathcal{T}_2(\uQ^0_{2T-1},\mathcal{Q}_{2T-1}) = \Bigg\{z \mapsto l(u_T,\bar{h}_{T-1},a_{T-1}): &z\mapsto l(u_T) \in L_0^2(\uQ_{2T-1}^0), \textnormal{ where } \\
    &l(u_T,\bar{h}_{T-1},a_{T-1})=l(\tilde{u}_T,\bar{h}_{T-1},a_{T-1})\Bigg\}.
\end{align*}
The proof of this representation for the tangent space, which is omitted, follows similar arguments to those for the univariate symmetric case \cite[Example~3.2.4][]{bickel1993efficient}.

The projections of any $f \in L_0^2(\uQ^0_{2T-1})$ onto $\mathcal{T}_1(\uQ^0_{2T-1},\mathcal{Q}_{2T-1})$ and $\mathcal{T}_2(\uQ^0_{2T-1},\mathcal{Q}_{2T-1})$ have the following pointwise evaluations:
 \begin{align}
      \Pi_{\uQ^0}\{f \mid \mathcal{T}_1(\uQ^0_{2T-1},\mathcal{Q}_{2T-1})\}(z) &= \frac{E_{\uQ^0}\left[f(Z) \ell (Z) \mid \bar{H}_{T-1} = \bar{h}_{T-1}, A_{T-1}=a_{T-1}\right]\ell (z) }{I_{2T-1}(\bar{h}_{T-1},a_{T-1})}, \label{eq:center_proj} \\
     \Pi_{\uQ^0}\{f \mid \mathcal{T}_2(\uQ^0_{2T-1},\mathcal{Q}_{2T-1})\} (z)&= \frac{f(z)  + f(\tilde{z})}{2}. \label{eq:sym_proj} 
 \end{align}
 where $\tilde{z}\equiv (\bar{h}_{T-1},a_{T-1},\tilde{u}_t)$. In the special case of a univariate symmetric location model, the forms of these projections can be found in Example~3.3.1 and 3.2.4 of \cite{bickel1993efficient}.

Now we are at the last step and will use Corollary~\ref{cor:canGrad} to derive the canonical gradient of $\phi$. We let $ \Pi_{\uQ^0}\{\lambda_{2T-2}D_{\uQ^0,2T-1} \mid \mathcal{T}(\uQ^0_{2T-1},\mathcal{Q}_{2T-1})\}(z)   = \textnormal{(I) + (II)}$, where 
\begin{align}
    \textnormal{(I)}& = \lambda_{2T-2}(\bar{h}_{T-1},a_{T-1})\frac{E_{\uQ^0}\left[D_{\uQ^0,2T-1}(Z) \ell (Z) \mid \bar{H}_{T-1}=\bar{h}_{T-1},A_{T-1} =a_{T-1} \right]\ell (z) }{ I_{2T-1}(\bar{h}_{T-1},a_{T-1})},\\
    \textnormal{(II)} & = \lambda_{2T-2}(\bar{h}_{T-1},a_{T-1}) \frac{\{D_{\uQ^0,2T-1}(z) + D_{\uQ^0,2T-1}(\tilde{z}) \}}{2}.
\end{align}

Following Corollary~\ref{cor:canGrad} and denoting the canonical gradient of $\phi$ under such semiparametric model as $D^*_{P^0}(x) $, we have
\begin{align*}
    D^{*}_{P^0}(x) & = \sum_{t=1}^T \mathbbm{1}(\bar{u}_{t-1} \in \bar{\mathcal{U}}^{\dagger}_{t-1})\frac{\mathbbm{1}(s \in \mathcal{S}_{2t-1})}{\textnormal{pr}(S \in \mathcal{S}_{2t-1})} \Pi_{\uQ^0}\{\lambda_{2t-2}D_{\uQ^0,2t-1} \mid \mathcal{T}(\uQ^0,\mathcal{Q})\}(z)\\
    & = D_{P^0}(x) - D_{P_{2T-1}^0}(x) +\mathbbm{1}(\bar{u}_{T-1} \in \bar{\mathcal{U}}^{\dagger}_{T-1})\frac{\mathbbm{1}(s \in \mathcal{S}_{2T-1})}{P^0(S \in \mathcal{S}_{2T-1})} \lambda_{2T-2}(\bar{h}_{T-1},a_{T-1})  \\
    &\quad \cdot\Bigg\{ \frac{D_{\uQ^0,2T-1}(z) + D_{\uQ^0,2T-1}(\tilde{z}) }{2} \\
    &\hspace{3em}+ 
  \frac{E_{\uQ^0}\left[D_{\uQ^0,2T-1}(Z) \ell (Z) \mid \bar{H}_{T-1} = \bar{h}_{T-1}, A_{T-1} =a_{T-1}\right]\ell (z) }{ I_{2T-1}(\bar{h}_{T-1},a_{T-1})}\Bigg\}.
\end{align*}
We can use Condition~\ref{cond:identifiability} to replace features of $\uQ^0$ in the expression above by the corresponding features of $P^0$. By the definition of $D_{P^0_{2t-1}}(x)$, the above can then be simplified to 
\begin{align}
    D^{*}_{P^0}(x) & = 
    D^{}_{P^0}(x)  - D^{}_{P_{2T-1}^0}(x)
    + \frac{E_{P^0}\left\{D^{}_{P_{2T-1}^0}(X) \tilde{\ell} (X) \mid \bar{h}_{T-1}, a_{T-1}, S \in \mathcal{S}_{2T-1}\right\} \tilde{\ell} (x)}{ \tilde{I}_{2T-1}(\bar{h}_{T-1},a_{T-1})}. \label{eq:EIF_long_semi} 
\end{align}

We conclude by deriving the form of the canonical gradient in the semiparametric model where $U_T = \beta^\top \kappa(\bar{H}_{T-1},A_{T-1}) + \epsilon$ under nested fusion sets--- see Section~\ref{sec:examples} for more details. The tangent space within this semiparametric model takes the form $\mathcal{T}(\uQ^0, \mathcal{Q}) = \bigoplus_{j=1}^{2T-2} L_0^2(\uQ^0_j) + \mathcal{T}(\uQ^0_{2T-1},\mathcal{Q}_{2T-1})$, where, $\ell$ is defined in Section~\ref{sec:estimation} and $\mathcal{T}(\uQ^0_{2T-1},\mathcal{Q}_{2T-1})=\left\{z\mapsto m^{\top} \ell(z) : m\in\mathbb{R}^{c}\right\}$. It can be verified that 
\begin{align}
    \Pi_{\uQ^0}\{f \mid \mathcal{T}(\uQ^0_{2T-1},\mathcal{Q}_{2T-1})\} (z)&=  \{E_{\uQ^0}[\ell(Z)\ell(Z)^\top]^{-1} E_{\uQ^0}[\ell(Z) f(Z)]\}^\top \ell(z). \label{eq:semi_gaussian_proj}
\end{align}
Following Corollary~\ref{cor:canGrad} and denoting the canonical gradient of $\phi$ under the semiparametric model under consideration as $D^{\dagger}_{P^0}(x) $, we have
\begin{align*}
    D^{\dagger}_{P^0}(x)  & = \sum_{t=1}^T \mathbbm{1}(\bar{u}_{t-1} \in \bar{\mathcal{U}}^{\dagger}_{t-1})\frac{\mathbbm{1}(s \in \mathcal{S}_{2t-1})}{\textnormal{pr}(S \in \mathcal{S}_{2t-1})} \Pi_{\uQ^0}\{\lambda_{2t-2}D_{\uQ^0,2t-1} \mid \mathcal{T}(\uQ^0,\mathcal{Q})\}(z)\\
    & = D_{P^0}(x) - D_{P_{2T-1}^0}(x) +\mathbbm{1}(\bar{u}_{T-1} \in \bar{\mathcal{U}}^{\dagger}_{T-1})\frac{\mathbbm{1}(s \in \mathcal{S}_{2T-1})}{P^0(S \in \mathcal{S}_{2T-1})}  \\
    &\quad \cdot  \{E_{\uQ^0}[\ell(Z)\ell(Z)^\top]^{-1} E_{\uQ^0}[\ell(Z)\lambda_{2T-2}(\bar{H}_{T-1},A_{T-1})  D_{\uQ^0,2T-1}(Z)]\}^\top \ell(z).
\end{align*}
We can use Condition~\ref{cond:identifiability} to replace features of $\uQ^0$ in the expression above by corresponding features of $P^0$.

\section{Additional Examples}
\label{additional examples}
\subsection{Intent-to-treat average treatment effect}
\label{example:ITT}
Primary analyses in randomized clinical trials often concern the intent-to-treat average treatment effect. This estimand corresponds to the difference between mean outcome observed of individuals randomized to treatment versus control, regardless of what intervention they actually receive. Let $Z_1$ denote some baseline characteristic variable, $Z_2$ be the binary randomized treatment assignment, $Z_3$ be an indicator of actually receiving treatment, and $Z_4$ be the real-valued outcome of interest. The model $\mathcal{Q}$ for the unknown target distribution $Q$ consists of all distributions with some common support that are such that treatment assignment is randomized, that is, $Z_2$ is independent of $Z_1$. The intent-to-treat average treatment effect of a distribution $Q\in\mathcal{Q}$ is defined as $\psi(Q)\equiv E_{Q}(Z_4|Z_2=1)  - E_{Q}(Z_4|Z_2=0)$. By leveraging the randomization of treatment assignment and the law of total expectation, it can be seen that $\psi(Q)= \sum_{a=0}^1 (2a-1) E_{Q_1} [E_{Q_3}\{E_{Q_4}(Z_4\mid Z_3,Z_2=a,Z_1)\mid Z_2=a,Z_1\}]$, where here and throughout we write $E_{Q_j}$, rather than $E_Q$, when we want to emphasize that a conditional expectation only depends on the conditional distribution $Q_j$, rather than on the whole distribution $Q$. Because $\psi(Q)$ can be written as a function of $Q_1$, $Q_3$, and $Q_4$ only, it is evident that we can take $\mathcal{I}=\{2\}$ in this example.

Suppose we observe data from $k$ sources of three types. 
The first type of data source only contains covariate information $Z_1$, while randomization, treatment, and outcome information are missing. To indicate such missingness, we let $Z_2=Z_3=Z_4=\star$ for data from sources of this type. Despite this systematic missingness, it is still possible that such data sources belong to $\mathcal{S}_1$ since $\mathcal{S}_1$ only pertains to the marginal distribution of $Z_1$. The second type of data source also comes from a clinical trial setting but does not have relevant outcome information measured, so that $(Z_1,Z_2,Z_3)$ is measured and $Z_4=\star$. The third type of data source comes from a clinical trial setting and has all relevant variables, including outcomes, measured. Data from the first type of source may inform about the covariate distribution in the target population, data from the second and third may inform about the propensity to adhere to a treatment assignment, and data from the third may also inform about the probability of experiencing a particular outcome given treatment and covariate information.

Under Condition~\ref{cond:identifiability}, Theorem~\ref{thm:identifiability} shows that the intent-to-treat average treatment effect on the target population $Q^0$ can be identified from the observed data distribution --- in particular, that $\psi(Q^0)=\phi(P^0)$. In this example, $\phi(P^0)$ takes the following form:
\begin{align*}
    \phi(P^0) & = \sum_{a=0}^1 (2a-1)E_{P^0}\left[ E_{P^0}\{E_{P^0}(Z_4\,|\, Z_3,Z_2=a,Z_1, S \in \mathcal{S}_4)  \,|\,  Z_2=a,Z_1, S \in \mathcal{S}_3\}\,|\,  S \in \mathcal{S}_1\right].
\end{align*}
\cite{rudolph2017robust} considered this problem in the case where $k=2$ data sources are available.
Our work makes it possible to incorporate data from more than two sources.
\vspace{.5em}\noindent \textbf{Comparison of identifiability conditions:}
Assumption 1 in Section 3 of \cite{rudolph2017robust} is a weaker version of our Condition~\ref{cond:identifiability}\ref{suff_alignment} in that it only imposes exchangeability on the conditional means across data sources. Assumption 3 in the same section of that work corresponds to our Condition~\ref{cond:identifiability}\ref{suff_overlap}. The fact that \cite{rudolph2017robust} were able to establish the identifiability of $\psi(Q^0)$ with $\phi(P^0)$ under a slight weakening of our Condition~\ref{cond:identifiability}\ref{suff_alignment} suggests that, in some cases, studying a particular functional of interest $\psi$ can make it possible to state a weaker identifiability condition than the sufficient one that we stated in Section~\ref{sec:notations and setup}. Nevertheless, it is worth noting that the estimation problem resulting from this weakened identifiability condition, namely estimation of $\phi(P^0)$, is the same as the one that we consider in this work. Hence, the general framework that we provide still allows for the construction of an efficient estimator in this setting, even if the sufficient conditions for its validity as an estimator of the target estimand $\psi(Q^0)$ are slightly stronger than is, in fact, necessary.

\vspace{.5em}\noindent \textbf{Derivation of influence functions:}
This result is a natural extension of those of \cite{rudolph2017robust} to more than two data sources and hence we omit the derivation and present the results only.
The canonical gradient of $\phi$ at $P^0$ relative to the model that makes at most assumptions about propensity score and positivity is given by
\begin{align*}
    D_{P^0}(x) & = D^1_{P^0}(x) - D^0_{P^0}(x),
\end{align*}
where
\begin{align*}
    D^{z_2}_{P^0}(x) & = D^{z_2}_{P^0,1}(x) +D^{z_2}_{P^0,3}(x) +D^{z_2}_{P^0,4}(x),
\end{align*}
with
\begin{align*}
    D^{z_2'}_{P^0,1}(x) & = \frac{1(s \in \mathcal{S}_1)}{P^0(S \in \mathcal{S}_1)}\bigg\{E_{P^0}\left\{E_{P^0}[Z_4\mid Z_3,Z_2=z_2',Z_1,S \in \mathcal{S}_4] \mid Z_2=z_2',Z_1=z_1, S\in \mathcal{S}_3\right\}\\
    & \hspace{2em}- E_{P^0}\left\{E_{P^0}\left\{E_{P^0}[Z_4\mid Z_3,Z_2=z_2',Z_1,S \in \mathcal{S}_4] \mid Z_2=z'_2,Z_1, S\in \mathcal{S}_3\right\} \mid S \in \mathcal{S}_1\right\}\bigg\}. \\
    D^{z_2'}_{P^0,3}(x) & = \frac{1(s \in \mathcal{S}_3, z_2=z_2')}{P^0(S \in \mathcal{S}_3)P^0(Z_2=z_2'\mid Z_1=z_1,S\in \mathcal{S}_3)}\frac{ dP^0(z_1 \mid S \in \mathcal{S}_1 )}{dP^0(z_1 \mid S\in \mathcal{S}_3 )} \\
    & \bigg\{E_{P^0}[Z_4\mid Z_3=z_3,Z_2=z_2',Z_1=z_1,S \in \mathcal{S}_4] -\\
    & \hspace{6em} E_{P^0}\left\{E_{P^0}[Z_4\mid Z_3,Z_2=z_2',Z_1,S \in \mathcal{S}_4] \mid Z_2=z_2',Z_1=z_1, S\in \mathcal{S}_3\right\}\bigg\}. \\
    D^{z_2'}_{P^0,4}(x) & = \frac{1(s \in \mathcal{S}_4, z_2=z_2')}{P^0(S \in \mathcal{S}_4)P^0(Z_2=z_2'\mid Z_1=z_1,S\in \mathcal{S}_4)}\frac{dP^0(z_3 \mid Z_2=z_2',Z_1=z_1,S \in \mathcal{S}_3) }{dP^0(z_3 \mid Z_2=z_2',Z_1=z_1,S \in \mathcal{S}_4) } \\
    & \quad \frac{dP^0(z_1 \mid S \in \mathcal{S}_1 )}{dP^0(z_1 \mid S \in \mathcal{S}_4)} \left\{z_4 - E_{P^0}[Z_4\mid Z_3=z_3,Z_2=z_2',Z_1=z_1,S \in \mathcal{S}_4]\right\} .
\end{align*}

\subsection{Complier average treatment effect}
\label{example:CATE}
\noindent \textbf{Parameter of interest:} Suppose that, in the setting of the example in  Appendix~\ref{example:ITT}, we want to measure the impact of an intervention only in the population that complies with its assigned treatment. This quantity of interest is known as the complier average treatment effect \citep{angrist1996identification}, which is defined as, 
\begin{align*}
    \psi(Q) &  = \frac{E_Q(Z_4\mid Z_2=1) - E_Q(Z_4\mid Z_2=0 )}{E_Q(Z_3\mid Z_2=1)-E_Q(Z_3\mid Z_2=0)} \\
    &= \frac{\sum_{a=0}^1 (2a-1) E_{Q_1} [E_{Q_3}\{E_{Q_4}(Z_4\mid Z_3,Z_2=a,Z_1)\mid Z_2=a,Z_1\}]}{\sum_{a=0}^1 (2a-1) E_{Q_1} \{E_{Q_3}(Z_3 \mid Z_2=a,Z_1)\}} \\
    &\equiv \frac{\psi_{ITT}(Q)}{\psi_{c}(Q)},
\end{align*}
where, $\psi_{ITT}$ is the intent-to-treat average treatment effect in  Appendix~\ref{example:ITT} and $\psi_{c}$ measures the proportion of compliers. As in  Appendix~\ref{example:ITT}, the model $\mathcal{Q}$ for the unknown target distribution $Q$ consists of all distributions with some common support that are such that treatment assignment is randomized. Because $\psi$ can be written as a function of $Q_1$, $Q_3$, and $Q_4$ only, we see that we can take $\mathcal{I}=\{2\}$ in this example. Suppose we observe data from $k$ sources and consider the same setting as the one in  Appendix~\ref{example:ITT}. Then under Condition~\ref{cond:identifiability}, the complier average treatment effect on the target population $Q^0$ can be identified as $\phi(P^0)\equiv \phi_{ITT}(P^0) / \phi_{c}(P^0)$ where $\phi_{ITT}(P^0)$ is equal to $\phi(P^0)$ in  Appendix~\ref{example:ITT} and $\phi_{c}(P^0)$ is equal to
\begin{align*}
    \phi_{c}(P^0) & = \sum_{a=0}^1 (2a-1) E_{P^0}\{E_{P^0}(Z_3\,|\, Z_2=a,Z_1, S \in \mathcal{S}_3)  \,|\ Z_1, S \in \mathcal{S}_1\}.
\end{align*}
\cite{rudolph2017robust} considered this problem in the case where $k=2$ data sources are available. Our work makes it possible to incorporate data from more than two sources.

\vspace{.5em}\noindent \textbf{Comparison of identifiability conditions:}
Assumption 1 in Section 5 of \cite{rudolph2017robust} is a weaker version of our Condition~\ref{cond:identifiability}\ref{suff_alignment} in that it only imposes exchangeability on the conditional means across data sources. Assumption 5 in the same section of that work corresponds to our Condition~\ref{cond:identifiability}\ref{suff_overlap}. See our discussion of the identifiability conditions for the intent-to-treat average treatment effect for more discussion on the weaker form of Condition~\ref{cond:identifiability}\ref{suff_alignment} presented in \cite{rudolph2017robust}.

\vspace{.5em}\noindent \textbf{Derivation of influence functions:} Chapter~5.2.3 of \cite{van2011targeted} shows that the canonical gradient of $\psi_c$ is $\tilde{C}_{Q^0}(z) =\tilde{C}^1_{Q^0}(z)-\tilde{C}^0_{Q^0}(z)$, where
\begin{align*}
    \tilde{C}^{z_2'}_{Q^0}(z) & = \frac{1(z_2=z_2')}{Q^0(Z_2=z_2'\mid Z_1 =z_1)}\{z_3 - E_{Q^0}(Z_3 \mid Z_2 =z_2',Z_1 =z_1)\} \nonumber\\
    & \quad + E_{Q^0}(Z_3 \mid Z_2 = z_2',Z_1 =z_1) - E_{Q^0}\left[E_{Q^0}(Z_3 \mid Z_2 = z_2',Z_1)\right].
\end{align*}
Plugging this into \eqref{eq:phiCanGrad} shows that the canonical gradient of $\phi_{c}$ is $C_{P^0}(x) =C^1_{P^0}(x)-C^0_{P^0}(x)$, where
\begin{align}
    C^{z_2'}_{P^0}(x) & = \frac{1(s \in \mathcal{S}_3, z_2=z_2')}{P^0(S \in \mathcal{S}_3)P^0(Z_2=z_2'\mid Z_1 =z_1,S\in \mathcal{S}_3)} \frac{dP^0( z_1 \mid S \in \mathcal{S}_1 )}{dP^0(z_1 \mid S \in \mathcal{S}_3)} \nonumber\\
    & \hspace{14em}\cdot \{z_3 - E_{P^0}(Z_3 \mid Z_2 =z_2',Z_1 =z_1,S \in \mathcal{S}_3)\} \nonumber\\
    & \quad + \frac{1(s \in \mathcal{S}_1)}{P^0(S \in \mathcal{S}_1)} \{E_{P^0}(Z_3 \mid Z_2 = z_2',Z_1 =z_1,S \in \mathcal{S}_3) \nonumber \\
    &\quad\hspace{7em}- E_{P_0}\left[E_{P^0}(Z_3 \mid Z_2 = z_2',Z_1 ,S \in \mathcal{S}_3)\mid S\in \mathcal{S}_1\right]\}.\nonumber
\end{align}
Then by delta method, the canonical gradient of $\phi$ is, 
\begin{align}
    D_{P^0}(x) & = \frac{1}{\phi_{c}(P^0)}T_{P^0}(x) - \frac{\phi_{\textnormal{ITT}}(P^0)}{\phi^2_{c}(P^0)} C_{P^0}(x),\nonumber
\end{align}
where we use $T_{P^0}(x)$ to denote the canonical gradient of $\phi_{\textnormal{ITT}}$ given in  Appendix~\ref{example:ITT}.

\subsection{Off-policy evaluation}
\label{example:OPE}

\noindent \textbf{Parameter of interest:} Researchers are often interested in evaluating the impact of a new, previously unimplemented policy in a population \citep{dudik2014doubly}. 
This is known as off-policy evaluation, which aims to estimate the reward of a given policy using historical data that contains the outcomes under different, currently-implemented policies. Let $Z_1$ denote some baseline characteristic variable, $Z_2$ denote a discrete or continuous action variable, and $Z_3$ denote a real-valued outcome of interest. The evaluation policy $\Pi_e$ corresponds to a conditional distribution of $Z_2$ given $Z_1$. The target estimand is the average reward under the evaluation policy and writes as $\psi(Q) = E_{Q_1} \left[E_{\Pi_e}\left\{E_{Q_3}(Z_3\mid Z_2, Z_1) \mid Z_1 \right\} \right]$. The model $\mathcal{Q}$ consists of all distributions $Q$ such that $Q_2(\cdot\mid z_1)=\Pi_e(\cdot\mid z_1)$ $Q_1$-almost everywhere.  

Because $\psi(Q)$ can be written as a function of $Q_1$ and $Q_3$ only, we can take $\mathcal{I}=\{2\}$ in this example. Then, under Condition~\ref{cond:identifiability}, Theorem~\ref{thm:identifiability} shows that $\psi(Q^0)$ can be identified as
\begin{align*}
    \phi(P^0) &=  E_{P^0} \Big[ E_{\Pi_e}\big\{E_{P^0}(Z_3\mid Z_2, Z_1, S\in \mathcal{S}_3)\mid Z_1\big\} \mid  S \in \mathcal{S}_1\Big], 
\end{align*}
\cite{kallus2020optimal} considered a similar off-policy learning problem to the one we have presented here. Our setup differs from that of this earlier work in two respects. First, the sample size from each dataset is random in our setting as opposed to fixed in Kallus' setting. As we discuss in  Appendix~\ref{sec:miscellanea}, we do not believe that this distinction is particularly important, and in that appendix we outline how our results can be extended to handle the case where fixed sample sizes are observed from each data source. The second, and more important, way in which our results differ from those in \cite{kallus2020optimal} is that 
we do not require that the collection of data sources informing about the covariate distribution, $\mathcal{S}_1$, be the same as the collection of data sources informing about the reward distribution, $\mathcal{S}_3$. This makes it easy to consider the case of covariate shift for some data sources that inform on the reward distribution (by making them belong to $\mathcal{S}_3$ but not $\mathcal{S}_1$), and also to incorporate additional data sources that have information about the target distribution of covariates but not the reward distribution (by placing those data sources in $\mathcal{S}_1$ but not $\mathcal{S}_3$). Notably, our results also handle the case considered in \cite{kallus2020optimal} where $\mathcal{S}_1=\mathcal{S}_3=[k]$.

\vspace{.5em}\noindent \textbf{Comparison of identifiability conditions:} As noted above, \cite{kallus2020optimal} focused on the special case where $\mathcal{S}_1=\mathcal{S}_3=[k]$. In these cases, our Condition~\ref{cond:identifiability}\ref{suff_alignment} encodes the same requirement as in Section~2.1 of \cite{kallus2020optimal}, namely that, conditional on $S=s$, the distribution of the triplet $(Z_1,Z_2,Z_3)$ follows the product distribution $q_1(z_1)p(z_2\mid z_1,s)q_3(z_3\mid z_2,z_1)$. Our Condition~\ref{cond:identifiability}\ref{suff_overlap} corresponds to Assumption 1 in \cite{kallus2020optimal}.

\vspace{.5em}\noindent \textbf{Derivation of influence functions:} 
We let $p_2^0$ denote the conditional density of $Z_2$ given $(Z_1, S)$ under $P^0$ and let $\pi_e$ to denote the conditional density of the evaluation policy $\Pi_e$.
By Lemma~2 of \cite{kennedy2019nonparametric}, the canonical gradient of $\psi$ is given by
\begin{align*}
    D_{Q^0}(z) &= \frac{\pi_e(z_2\mid z_1)}{\pi^0(z_2\mid z_1)}\left\{z_3-E_{Q^0}(Z_3\mid Z_2 = z_2,Z_1 = z_1)\right\} \nonumber \\
      & \quad+ \sum_{z_2' \in \mathcal{Z}_2}E_{Q^0}(Z_3\mid Z_2 = z_2', Z_1=z_1) \pi_e(z_2'\mid z_1) -\psi(Q^0).
\end{align*}
Plugging this into \eqref{eq:phiCanGrad} shows that the canonical gradient of $\phi$ is given by
\begin{align}
      D_{P^0}(x) &= \frac{1(s  \in \mathcal{S}_3)}{P^0(S \in \mathcal{S}_3)} \frac{dP^0( z_1 \mid S \in \mathcal{S}_1 )}{dP^0(z_1 \mid S \in \mathcal{S}_3 )}\frac{\pi_e(z_2\mid z_1)}{\pi^0(z_2\mid z_1)}\left\{z_3-E_{P^0}(Z_3\mid Z_2 = z_2,Z_1 = z_1,S\in \mathcal{S}_3)\right\} \nonumber \\
      & \quad+ \frac{1(s \in \mathcal{S}_1)}{P^0(S \in \mathcal{S}_1)}\left\{\sum_{z_2' \in \mathcal{Z}_2}E_{P^0}(Z_3\mid Z_2 = z_2', Z_1=z_1, S\in \mathcal{S}_3) \pi_e(z_2'\mid z_1) -\phi(P^0)\right\},\nonumber
\end{align}
where $\pi^0(z_2\mid z_1)\equiv \sum_{s=1}^{k} p_2^0(z_2\mid z_1,s)P^0(S=s)$ and the sum over $z_2'$ above should be replaced by a Lebesgue integral.

\subsection{Z-estimation}
\label{example:Z-estimation appendix}
We now consider a more general example that can be applied for a wide variety of estimands. Specifically, we consider a $b$-dimensional Z-estimation problem where 
$\{m_\gamma : \gamma\in \mathbb{R}^b\}$ denotes a collection of $\mathcal{Z}\rightarrow\mathbb{R}^b$ functions (\citealp{hansen1982large}). We are interested in inferring the unknown parameter $\psi(Q^0)$, where, for all $Q$ in a specified model $\mathcal{Q}$, $\psi(Q)$ is defined implicitly as the solution in $\gamma$ to the estimating equation $M(Q)(\gamma)\equiv  E_{Q}\{m_{\gamma}(Z)\} = 0$. It is assumed that this solution is unique for each $Q$. As mentioned in Section~\ref{sec:notations and setup}, it is always possible to take $\mathcal{I}=\emptyset$. For certain classes of functions $\{m_\gamma : \gamma\in \mathbb{R}\}$, it will also be possible to take $\mathcal{I}$ to be a larger, non-empty set --- this is the case, for example, if the conditional distribution $Q_j^0$ is known for some $j$ or if $m_\gamma(z)$ only depends on $(z_1,\ldots,z_{d'})$, $d'<d$.

Under Condition~\ref{cond:identifiability}, $\psi(Q^0)$ can be identified with $\phi(P^0)$, which is the solution in $\gamma$ to 
\begin{align}
    0&= E_{P^0}\left[ \dots E_{P^0}\{E_{P^0}(m_\gamma(Z)\mid \bar{Z}_{d-1}, S\in \mathcal{S}_d) \mid  \bar{Z}_{d-2}, S \in \mathcal{S}_{d-1}\}\mid \ldots ,S \in \mathcal{S}_1\right]. \label{eq:Zestimation}
\end{align}
\cite{chakrabortty2016robust} treat the special case that arises in semi-supervised learning problems, namely the case where $d=2$, $k=2$, $\mathcal{S}_1=\{1,2\}$, and $\mathcal{S}_2=\{1\}$. Our example generalizes this previous work by allowing for additional data sources ($k>2$) and fusion sets ($d>2$). A great variety of problems can be studied using the generality of our framework. One specific example corresponds to a least-squares projection onto a working linear regression model where $(Z_1,\ldots,Z_{d-1})$ are features of interest, $Z_d$ is an outcome of interest, and the criterion function $m_\gamma(z)=\{z_d-(z_1,\ldots,z_{d-1})^\top \gamma\}^2$ is used. 

Under regularity conditions on $\mathcal{Q}$ and the functions $m_\gamma$, $\gamma\in\mathbb{R}^b$, an initial gradient $D_{Q^0}$ to plug into Theorem~\ref{thm:grad} can be found in Theorem~5.21 of \cite{van2000asymptotic}. Following results from Corollary~\ref{cor:canGrad}, the canonical gradient of $\phi$ takes the form $-V_{P^0}^{-1} F_{P^0}(x)$, where $V_{P^0}$ is the derivative matrix at $\phi(P^0)$ of the function of $\gamma$ defined pointwise to be equal to the right-hand side of \eqref{eq:Zestimation} and, for recursively defined $G_j^0(\bar{z}_{j},s) = E_{P^0}\{G_{j+1}^0(\bar{Z}_{j+1})\mid \bar{z}_{j},S\in \mathcal{S}_{j+1}\}$ with $G_{d}^0(\bar{z}_d,s) = m_{\phi(P^0)}(Z)$,

\begin{align}
       F_{P^0}(x) &  = \sum_{j=1}^{d} \mathbbm{1}(\bar{z}_{j-1} \in \bar{\mathcal{Z}}^{\dagger}_{j-1})\frac{\mathbbm{1}(s \in \mathcal{S}_j)}{P^0(S \in \mathcal{S}_j) } \left\{\prod_{m=1}^{j-1} \frac{dP^0(z_m \mid \bar{z}_{m-1},S \in \mathcal{S}_{m})}{dP^0(z_m \mid \bar{z}_{m-1},S \in \mathcal{S}_{j})}\right\} \nonumber\\
       & \hspace{23em} \cdot \left\{G_{j}^0(\bar{z}_{j},s) - G_{j-1}^0(\bar{z}_{j-1},s)\right\}. \label{eq:IF_z}
\end{align}
In fact, it can be verified via Theorem~\ref{thm:grad} that $F_{P^0}$ is the canonical gradient of $P^0\mapsto M\{\theta(P^0)\}(\phi(P^0))$ relative to $\mathcal{P}$, where we recall that $M(Q)(\gamma)\equiv  E_{Q}\{m_{\gamma}(Z)\}$. \\

\noindent \textbf{Derivation of influence functions:} 
Theorem~5.21 of \cite{van2000asymptotic} gives the canonical gradient of $\psi$ at $\uQ^0$ relative to a locally nonparametric $\mathcal{Q}$,  namely
\begin{align*}
    D_{\uQ^0}(z) &\equiv - V^{-1}_{\uQ^0}m_\gamma(z) = - V^{-1}_{\uQ^0}\sum_{j=1}^d\left\{\tilde{G}_j^0(\bar{z}_{j}) - \tilde{G}_{j-1}^0(\bar{z}_{j-1})\right\},
\end{align*}
where $V_{\uQ^0}$ is the derivative matrix at $\psi(\uQ^0)$ of the function of $\gamma$ defined equal to $M(\uQ^0)(\gamma)$ and we recursively define $\tilde{G}^0_j : (\bar{z}_{j}) \mapsto E_{\uQ^0}\{\tilde{G}^0_{j+1}(\bar{Z}_{j+1})\mid \bar{z}_{j}\}$ with $\tilde{G}^0_{d}: (\bar{z}_d) \mapsto m_{\psi(\uQ^0)}(Z)$ and $\tilde{G}_0^0=0$. 
It can be verified that $D_{\uQ_j^0}=-V^{-1}_{\uQ^0}\{\tilde{G}_j^{0} - \tilde{G}_{j-1}^{0}\}$. 

We take $\mathcal{I} = \emptyset$. Using the results in Corollary~\ref{cor:canGrad}, the canonical gradient of $\phi$ under a locally nonparametric model $\mathcal{P}$ is given by
\begin{align*}
    D_{P^0}(x) & = \sum_{j=1}^d \mathbbm{1}(\bar{z}_{j-1} \in \bar{\mathcal{Z}}^{\dagger}_{j-1})\frac{\mathbbm{1}(s \in \mathcal{S}_{j})}{P^0(S \in \mathcal{S}_{j})} \Pi_{\uQ^0}\{\lambda_{j-1}D_{\uQ_{j}^0} \mid \mathcal{T}(\uQ^0,\mathcal{Q})\}.
\end{align*}
Since the model $\mathcal{Q}$ is locally nonparametric, for $j \in \{1,\ldots,d\}$, we have $\mathcal{T}(\uQ^0_j,\mathcal{Q}) = L_0^2(\uQ_j^0)$. As a result,  $\Pi_{\uQ^0}\{\lambda_{j-1}D_{\uQ^0_{j}} \mid \mathcal{T}(\uQ^0,\mathcal{Q})\} = - \lambda_{j-1}(\bar{z}_{j-1})V^{-1}_{\uQ^0} \{ \tilde{G}^{0}_j (\bar{z}_j) - \tilde{G}^{0}_{j-1}(\bar{z}_{j-1})\}$. Substituting this expression back into $D_{P^0}$, we obtain, 
\begin{align*}
        D_{P^0}(x) & = - \sum_{j=1}^d \mathbbm{1}(\bar{z}_{j-1} \in \bar{\mathcal{Z}}^{\dagger}_{j-1})\frac{\mathbbm{1}(s \in \mathcal{S}_{j})}{P^0(S \in \mathcal{S}_{j})} \frac{d\uQ^0(\bar{z}_{j-1})}{dP^0(\bar{z}_{j-1} \mid S \in \mathcal{S}_j)}  V^{-1}_{\uQ^0} \left[\tilde{G}^{0}_j (\bar{z}_j) - \tilde{G}^{0}_{j-1}(\bar{z}_{j-1})\right]. 
\end{align*}
Replacing $\uQ^0(\bar{z}_{j-1})$, $V^{-1}_{\uQ^0}$, and $\tilde{G}^{0}_j$ with the corresponding features of observed data distribution $P^0$ yields the above canonical gradient.

\subsection{Quantile treatment effect}

Average treatment effects are commonly used to quantify the impact of a treatment on an outcome \citep{hernan2020causal}. Quantile treatment effects, which represent the difference of the $\tau$-quantile of the outcome on treatment versus control, provide a complementary approach \citep{firpo2007efficient}. When $\tau$ is near zero or one, quantile treatment effects make it possible to pick up effects that occur in the tails of the outcome distribution. When $\tau$ is far from zero and one, they instead represent a robust treatment effect estimand that is insensitive to outlying values of the outcome.

We now describe how quantile treatment effects fit within our framework. Let $Z_1$ be a real-valued baseline variable, $Z_2$ be a binary treatment variable that is assigned at random (as in a randomized trial), and $Z_3$ be an outcome. For a fixed $\tau\in (0,1)$, the target estimand is $\psi(Q^0)\equiv  u^0_1 -u^0_0$, where $u^{Q}_{z_2}\equiv \inf\{u : Q(Z_3 \leq u \mid Z_2=z_2) \geq \tau\}$ for $z_2\in \{0,1\}$ and we let $u_{z_2}^0 \equiv u_{z_2}^{Q^0}$. The model $\mathcal{Q}$ consists of all distributions $Q$ with support on $\mathbb{R}\times \{0,1\}\times\mathbb{R}$ that are such that $Z_2$ is independent of $Z_1$, the marginal distribution of $Z_2$ takes some known value, and $u\mapsto Q(Z_3\le u\mid Z_2=z_2)$ is everywhere differentiable for each $z_2\in\{0,1\}$. Because $\psi(Q)$ can be written as a function of $Q_1$ and $Q_3$, we see that we can take $\mathcal{I} = \{2\}$ in this example. Under Condition~\ref{cond:identifiability}, $\psi(Q^0)$ can be identified as,
\begin{align*}
    \phi(P^0) = \sum_{a=0}^1 (2a-1)\inf\{u:P^0\{P^0(Z_3 \leq u \mid Z_2=a, Z_1,S\in\mathcal{S}_3 ) \mid S \in\mathcal{S}_1\} \geq \tau \}.
\end{align*}
To our knowledge, quantile treatment effects have not previously been studied in a data fusion setting.

Following results from Corollary~\ref{cor:canGrad},  we can show the canonical gradient under a locally nonparametric model is $D_{P^0}(x) = D^1_{P^0}(x)-D^0_{P^0}(x)$ where, letting $\uq^0(\,\cdot\mid z_2)$ denote the conditional density of $Z_3$ given that $Z_2=z_2$ and, for $z_2'\in \{0,1\}$, $\rho^{z_2'}_{\tau}(z_3)\equiv  \{\tau-\mathbbm{1}(z_3 \leq u^{0}_{z_2'})\}/\int p^0(Z_3 = u^0_{z_2'} \mid Z_2=z_2',z_1, S\in \mathcal{S}_3)  p^0(z_1 \mid S \in \mathcal{S}_3) dz_1$, $D^{z_2'}_{P^0}(x)$ takes the form
\begin{align}
    D^{z_2'}_{P^0}(x) & =  \mathbbm{1}(z_1 \in \mathcal{Z}^{\dagger}_1) \frac{\mathbbm{1}(s \in \mathcal{S}_3)}{P^0(S \in \mathcal{S}_3)}\frac{\mathbbm{1}(z_2=z_2')}{P^0(Z_2=z_2'\mid z_1,S\in \mathcal{S}_3)}\frac{dP^0(z_1 \mid S\in \mathcal{S}_1)}{dP^0(z_1 \mid S\in \mathcal{S}_3 )} \nonumber \\
    & \hspace{18em}\cdot \left[\rho^{z_2'}_{\tau}(z_3) - E\left\{\rho^{z_2'}_{\tau}(Z_3)\mid Z_2=z_2',z_1,S \in \mathcal{S}_3\right\}\right] \nonumber\\
    &\quad+ \frac{\mathbbm{1}(s \in \mathcal{S}_1)}{P^0(S \in \mathcal{S}_1)} E\left\{\rho^{z_2'}_{\tau}(Z_3)\mid Z_2=z_2',z_1,S\in \mathcal{S}_3\right\}. \label{eq:IF_qte}
\end{align}

In below, the above form of the canonical gradient is derived directly via Corollary~\ref{cor:canGrad}. An alternative approach to deriving this result involves noting that the quantile treatment effect can be written as the difference of two implicitly defined functionals. Consequently, the results for Z-estimation in  Appendix~\ref{example:Z-estimation appendix} could be used to derive the canonical gradients of these two functionals, and then the delta method would provide the canonical gradient for the quantile treatment effect. More concretely, this approach involves noting that
$(u^0_1,u^0_0)$ is the solution in $\gamma\in\mathbb{R}^2$ to the estimating equation $M(Q^0)(\gamma) = E_{Q^0}[m_\gamma(Z)]$, where $m_\gamma(z)=(\mathbbm{1}(Z_2=z_2)\{ \mathbbm{1}(Z_3 \leq \gamma_{z_2}) - \tau\})_{z_2=0}^1$.\\

\noindent \textbf{Derivation of influence functions:}
\label{sec:derive_qte}
\cite{firpo2007efficient} gave the canonical gradient of $\psi$ under $\mathcal{Q}$, which is $D_{\uQ^0} \equiv D^{1}_{\uQ^0} -D^{0}_{\uQ^0}$, where 
\begin{align*}
    D^{z'_2}_{\uQ^0}(z) & \equiv \frac{\mathbbm{1}(z_2 = z'_2)}{\uQ^0(Z_2=z'_2 \mid Z_1 = z_1) }\left[ \rho_{\tau}^{z'_2}(z_3)  - E_{\uQ^0} \{\rho_{\tau}^{z'_2}(Z_3) \mid Z_2 = z'_2, Z_1 =z_1\}\right] \\
    & \quad  + E_{\uQ^0} \{\rho_{\tau}^{z'_2}(Z_3) \mid Z_2 = z'_2, Z_1 =z_1\}.
\end{align*}
We take $\mathcal{I} = \{2\}$. Since the only restriction on the model $\mathcal{Q}$ is that $Z_2$ is independent of $Z_1$,  $\Pi_{\uQ^0} \{\lambda_{j-1}D^{z'_2}_{\uQ^0,j} \mid \mathcal{T}(\uQ^0, \mathcal{Q})\} = \lambda_{j-1}D^{z'_2}_{\uQ^0,j}$ when $j\in \{1,3\}$. Following the results in Corollary~\ref{cor:canGrad}, the canonical gradient of $\phi$ relative to a locally nonparametric model $\mathcal{P}$ is
\begin{align*}
    D^{z'_2}_{P^0}(x) & = \mathbbm{1}(\bar{z}_{2} \in \bar{\mathcal{Z}}^{\dagger}_{2})\frac{\mathbbm{1}(s \in \mathcal{S}_{3})}{P^0(S \in \mathcal{S}_{3})} \Pi_{\uQ^0}\{\lambda_{2}D_{\uQ^0,3} \mid \mathcal{T}(\uQ^0,\mathcal{Q})\}  \\
    &\quad+ \frac{\mathbbm{1}(s \in \mathcal{S}_{1})}{P^0(S \in \mathcal{S}_{1})} \Pi_{\uQ^0}\{D_{\uQ^0,1} \mid \mathcal{T}(\uQ^0,\mathcal{Q})\} \\
    & =  \mathbbm{1}(\bar{z}_{2} \in \bar{\mathcal{Z}}^{\dagger}_{2})\frac{\mathbbm{1}(s \in \mathcal{S}_{3})}{P^0(S \in \mathcal{S}_{3})} \frac{d\uQ^0(\bar{z}_2)}{dP^0(\bar{z}_2 \mid S \in \mathcal{S}_3)}D_{\uQ^0,3}(\bar{z}_{3})  + \frac{\mathbbm{1}(s \in \mathcal{S}_{1})}{P^0(S \in \mathcal{S}_{1})} D_{\uQ^0,1}(z_{1}) \\
    & =  \mathbbm{1}(\bar{z}_{2} \in \bar{\mathcal{Z}}^{\dagger}_{2})\frac{\mathbbm{1}(s \in \mathcal{S}_{3})}{P^0(S \in \mathcal{S}_{3})} \frac{P^0(Z_2=z'_2 \mid Z_1=z_1, S \in \mathcal{S}_2)dP^0(z_1 \mid S \in \mathcal{S}_1)}{P^0(Z_2 = z'_2 \mid Z_1 = z_1, S \in \mathcal{S}_3) dP^0(z_1 \mid S \in \mathcal{S}_3)}D_{\uQ^0,3}(\bar{z}_{3})  \\
    & \quad  + \frac{\mathbbm{1}(s \in \mathcal{S}_{1})}{P^0(S \in \mathcal{S}_{1})}D_{\uQ^0,1}(z_{1}). 
\end{align*}
Substituting the expressions for $D_{\uQ^0,3}(z_{3})$ and $D_{\uQ^0,1}(z_{1})$ into the above, we obtain \eqref{eq:IF_qte}.

\section{Review of semiparametric theory}
\label{sec:efficiency theory_appendix}

We review some important aspects of nonparametric and semiparametric theory in this subsection. Further details can be found in \cite{bickel1993efficient}. We begin by discussing the case that $\phi$ is univariate ($b=1$), and then we discuss the case where $\phi$ is multivariate ($b\ge 2$). An estimator $\hat{\phi}$ of $\phi(P)$ is called asymptotically linear with influence function $D_P$ if it can be written as $\hat{\phi} - \phi(P) = n^{-1}\sum_{i=1}^n  D_P(X_i) + o_p(n^{-1/2})$, where $E_P\{D_P(X_i)\}=0$ and $\sigma_P^2\equiv E_P\{D_P(X_i)^2\}<\infty$. 
One reason such estimators are attractive is that they are consistent and asymptotically normal, in the sense that $\sqrt{n} \{ \hat{\phi} - \phi(P)\} \xrightarrow{d} N(0,\sigma_P^2)$ under sampling $n$ independent draws from $P$. This facilitates the construction of confidence intervals and hypothesis tests. 
It is also often desirable for $\hat{\phi}$ to not depend on the particular data-generating distribution too heavily, in the sense that $\sqrt{n} \{\hat{\phi} - \phi(P^{(n^{-1/2})})\}$ converges to the same distribution under sampling from any sequence of distributions $(P^{(n^{-1/2})})_{n=1}^\infty$ that converges to $P$ in an appropriate sense. In particular, $(P^{(n^{-1/2})})_{n=1}^\infty$ should arise from a submodel in the collection $\mathscr{P}(P,\mathcal{P})$ of submodels $\{P^{(\epsilon)} : \epsilon\in [0,\delta)\}$ of $\mathcal{P}$ with $P^{(0)}=P$ and with score $h$ at $\epsilon=0$, where the score is defined in a quadratic mean differentiability sense \citep{van2000asymptotic}. 
If $\hat{\phi}$ is regular and asymptotically linear at $P$ \citep{bickel1993efficient}, then $\phi$ is pathwise differentiable and the influence function $D_P$ is a gradient of $\phi$, in the sense that, for all submodels $\{P^{(\epsilon)} : \epsilon\in [0,\delta)\}\in\mathscr{P}(P,\mathcal{P})$, $\frac{\partial}{\partial \epsilon} \phi(P_{\epsilon}) \mid_{\epsilon=0} = E_P\{D_P(X)h(X)\}$. The representation $E_P\{D_P(X)h(X)\}$ can be viewed as an inner product between $D_P$ and $h$ in the Hilbert space $L_0^2(P)$ of $P$-mean-zero functions, finite variance functions. The tangent set $\mathcal{T}(P,\mathcal{P})$ of $\mathcal{P}$ at $P$ is defined as the set of all scores of submodels in $\mathscr{P}(P,\mathcal{P})$. Since scores are mean-zero, finite variance functions, $\mathcal{T}(P,\mathcal{P})\subseteq L_0^2(P)$. The canonical gradient $D_P^*$ corresponds to the $L_0^2(P)$-projection of any gradient $D_P$ onto the closure of the linear span of scores in $\mathcal{T}(P,\mathcal{P})$. Since $L_0^2(P)$ projections reduce variance and the influence function of any regular and asymptotically linear estimator is a gradient, any regular and asymptotically linear estimator that has the canonical gradient as its influence function achieves the minimal possible asymptotic variance among all such estimators. Thus, $D_P^*$ is also referred to as the efficient influence function.

\section{Construction of estimators}
\label{app:contruct_estimators}
 One way to construct a regular asymptotically linear estimator with influence function $D_{P^0}$ is through one-step estimation \citep{ibragimov1981statistical,bickel1982adaptive}. 
Given an estimate $\widehat{P}$ of $P^0$, 
the one-step estimator is given by $\hat{\phi} \equiv \phi(\widehat{P}) + \sum_{i=1}^n D_{\widehat{P}}(X_i)/n$. 
This estimator will be asymptotically linear with influence function $D_{P^0}$ if
\begin{enumerate}
    \item the remainder term $R(\widehat{P},P^0)\equiv \phi(\widehat{P}) - \phi(P^0) + E_{P^0}\{D_{\widehat{P}}(X)\}$ is $o_p(n^{-1/2})$, and
    \item the empirical mean of $D_{\widehat{P}}(X)-D_{P^0}(X)$ is within $o_p(n^{-1/2})$ of the mean of this term when $X\sim P^0$.
\end{enumerate}
When the above two conditions hold and $D_{P^0}$ is the canonical gradient of $\phi$ relative to the statistical model $\mathcal{P}$, the one-step estimator $\hat{\phi}$ will be efficient in the sense that it will achieve the minimal possible asymptotic variance among all regular estimators \citep{bickel1993efficient}. 
Typically, the remainder $R(\widehat{P},P^0)$ appearing in the first of these requirements will involve second-order products of errors. To make meeting this first requirement as likely as possible, we recommend using nonparametric approaches such as kernel density estimation for estimating density ratios and data-adaptive methods for estimating other nuisance parameters. For example, in our data illustration, we use the ensemble learning approach known as super learning to estimate the needed regression functions and conditional probabilities. 
The second of these requirements will hold under an appropriate empirical process and consistency condition, namely that $D_{\widehat{P}}$ belongs to a fixed $P^0$-Donsker class with probability tending to one and the $L^2(P^0)$-norm of $x\mapsto D_{\widehat{P}}(x)-D_{P^0}(x)$ converges to zero in probability \citep[Lemma 19.24 of][]{van2000asymptotic}. 

This empirical process condition can be avoided if cross-fitting is used when developing the initial estimator $\widehat{P}$ of $P^0$ \citep[e.g.,][]{zheng2011cross,chernozhukov2018double}. When $D_{\widehat{P}}$ is a gradient of $\phi$ at $\widehat{P}$ relative to the data fusion model $\mathcal{P}$, but not necessarily with respect to a locally nonparametric model, it will be important to ensure that the initial estimate $\widehat{P}$ of $P^0$ belongs to the $\mathcal{P}$, so that, for all $j\in\mathcal{J}$, there exists some $Q\in\mathcal{Q}$ such that $\widehat{P}_j(\,\cdot\mid \bar{z}_{j-1},s)=Q_j(\,\cdot\mid \bar{z}_{j-1})$ for all $s\in\mathcal{S}_j$; otherwise, it will not generally be plausible that $R(\widehat{P},P^0)\equiv \phi(\widehat{P}) - \phi(P^0) + E_{P^0}\{D_{\widehat{P}}(X)\}$ is $o_p(n^{-1/2})$. 
Alternative approaches for constructing asymptotically linear estimators include targeted minimum loss-based estimation \citep{van2006targeted} and estimating equations \citep{van2003unified,tsiatis2006semiparametric}.

\section{Simulation setup and results}\label{app:sim}
\label{app:sim_long}
\begin{table}[htb]
\centering
\caption{Specification of each simulated data source. The distribution of the target population is specified in Table~\ref{tab:Long_setup_distributions}. For others, $U_1 \mid S\notin \mathcal{S}_1 \sim \textnormal{Normal}(3,2)$, $U_2 \mid A_1,U_1, S\notin \mathcal{S}_3 \sim \textnormal{Normal}(\mathbbm{1}(A_1=0)(3+1.8U_1) +\mathbbm{1}(A_1=1)(5+0.8U_1), 5 )$, $U_3 \mid (\bar{H}_2, A_2,S\notin \mathcal{S}_5) \sim \textnormal{Normal}(\mathbbm{1}(A_1=A_2=0)(20+0.44U_1+0.07U_2) +\mathbbm{1}(A_2=1,A_1=0)(-10+0.71U_1+0.12U_2)+\mathbbm{1}(A_2=0,A_1=1)(10+0.44U_1+0.07U_2)+\mathbbm{1}(A_2=A_1=1)(10+0.41U_1+0.24U_2), \mathbbm{1}(A_2=A_1=0) 0.74 + \mathbbm{1}(A_2=1,A_1=0) 2.34 +\mathbbm{1}(A_2=0,A_1=1) 0.74 +\mathbbm{1}(A_2=A_1=1) 3.56)$.}
\begin{tabular}{lrrr}
\toprule
k &  Observed Variables & Sample Size & Distribution \\
\midrule
1 & $(U_1)$ & $n=2000$ & $\mathcal{S}_1$\\
2 & $(U_1)$ & $n=400$ & \\
3 & $(U_1,A_1,U_2)$ & $n=2000$ & $\mathcal{S}_1$, $\mathcal{S}_2$, $\mathcal{S}_3$\\
4 & $(U_1,A_1,U_2)$ & $n=400$ & $\mathcal{S}_2$\\
5 & $(U_1,A_1,U_2,A_2,U_3)$ & $n=2000$ & $\mathcal{S}_2$, $\mathcal{S}_3$, $\mathcal{S}_4$, $\mathcal{S}_5$ \\
6 & $(U_1,A_1,U_2,A_2,U_3,A_3,Y)$ & $n=4000$ &  $\mathcal{S}_2$, $\mathcal{S}_4$, $\mathcal{S}_5$, $\mathcal{S}_6$, $\mathcal{S}_7$\\
7 & $(U_1,A_1,U_2,A_2,U_3)$ & $n=2000$ &  $\mathcal{S}_2$, $\mathcal{S}_3$, $\mathcal{S}_4$\\
8 & $(U_1,A_1,U_2,A_2,U_3,A_3,Y)$ & $n=4000$ &  $\mathcal{S}_2$, $\mathcal{S}_4$, $\mathcal{S}_5$, $\mathcal{S}_6$, $\mathcal{S}_7$\\
9 & $(U_1,A_1,U_2,A_2,U_3,A_3,Y)$ & $n=2000$ & $\mathcal{S}_1$, $\mathcal{S}_2$, $\mathcal{S}_3$, $\mathcal{S}_4$, $\mathcal{S}_5$, $\mathcal{S}_6$, $\mathcal{S}_7$\\
\bottomrule
\end{tabular}
\label{tab:Long_setup}
\end{table}

\begin{table}[htb]
\centering
\caption{Specification of the distribution of $\bar{U}_3 \mid \bar{A}_2$ of the target population. }
\begin{tabular}{lrr}
\toprule
$(A_1,A_2)$ &  $\mu$ & $\Sigma$  \\
\midrule
(0,0) & (0,1,10) & $\begin{pmatrix} 1& 0.8& 0.5\\0.8&2&0.5\\0.5&0.5&1\end{pmatrix}$\\
(1,0) & (0,1.5,20) & $\begin{pmatrix} 1& 0.8& 0.5\\0.8&2&0.5\\0.5&0.5&1\end{pmatrix}$\\
(0,1) & (0,1,20) & $\begin{pmatrix} 1& 0.8& 0.5\\0.8&2&0.5\\0.5&0.5&3\end{pmatrix}$\\
(1,1) & (0,1.5,40) & $\begin{pmatrix} 1& 0.8& 0.6\\0.8&2&0.8\\0.6&0.8&4\end{pmatrix}$\\
\bottomrule
\end{tabular}
\label{tab:Long_setup_distributions}
\end{table}

The transformation $\kappa$ is taken to be equal to 
\begin{align*}
    \kappa(\bar{H}_3,A_3) &= (1,A_1,A_2,A_3,A_1A_2,A_1A_3,A_2A_3,A_1A_2A_3,\\
    & \quad U_1,U_2,U_3,A_1U_1,A_2U_1,A_3U_1,A_1U_2,A_2U_2,A_3U_2,A_1U_3,A_2U_3,A_3U_3,\\
    & \quad A_1A_2U_1,A_1A_3U_1,A_2A_3U_1,A_1A_2U_2,A_1A_3U_2,A_2A_3U_2,A_1A_2U_3,A_1A_3U_3,A_2A_3U_3,\\
    & \quad A_1A_2A_3U_1,A_1A_2A_3U_2,A_1A_2A_3U_3)
\end{align*}
with $c = 32$, and the values of $\beta$ are such that
\begin{align*}
    E[Y\mid \bar{H}_3, A_3] &= \mathbbm{1}(A_1=A_2=A_3=0)[5+ \mu_{A_1,A_2}^{\top} \Sigma_{A_1,A_2}^{-1} \{(U_1,U_2,U_3)^{\top} - \mu_{A_1,A_2}\} ]\\
    & \quad +\mathbbm{1}(A_1+A_2=1,A_3=0)[8+ \mu_{A_1,A_2}^{\top} \Sigma_{A_1,A_2}^{-1} \{(U_1,U_2,U_3)^{\top} - \mu_{A_1,A_2}\}]\\
    & \quad +\mathbbm{1}(A_1=A_2=0,A_3=1)[9+ \mu_{A_1,A_2}^{\top} \Sigma_{A_1,A_2}^{-1} \{(U_1,U_2,U_3)^{\top} - \mu_{A_1,A_2}\}]\\
    & \quad +\mathbbm{1}(A_1=1,A_2+A_3=1)[10+ \mu_{A_1,A_2}^{\top} \Sigma_{A_1,A_2}^{-1} \{(U_1,U_2,U_3)^{\top} - \mu_{A_1,A_2}\}]\\
    & \quad +\mathbbm{1}(A_1=0,A_2=A_3=1)[12+ \mu_{A_1,A_2}^{\top} \Sigma_{A_1,A_2}^{-1} \{(U_1,U_2,U_3)^{\top} - \mu_{A_1,A_2}\}]\\
    & \quad +\mathbbm{1}(A_1=A_2=A_3=1)[15+ \mu_{A_1,A_2}^{\top} \Sigma_{A_1,A_2}^{-1} \{(U_1,U_2,U_3)^{\top} - \mu_{A_1,A_2}\}].
\end{align*}

When constructing the initial plug-in $\hat{P}$ for the one-step estimator, we want to make sure such $\hat{P}$ resides in the model. For the semiparametric model that assumed a symmetric conditional outcome distribution, we set the density estimate $\hat{p}_{2T-1}(u_T \mid \bar{h}_{T-1},a_{T-1}, S\in\mathcal{S}_{2T-1})$ to be $\{\hat{f}(u_T) + \hat{f}(2g(\bar{h}_{T-1},a_{T-1}) - u_T)\}/2$, where $\hat{f}(u_T \mid \bar{h}_{T-1},a_{T-1})$ is the kernel density estimator for the conditional density of $U_T=u_T$ given $(\bar{H}_{T-1}, A_{T-1}) = (\bar{h}_{T-1},a_{T-1})$. Similarly, we set the estimate for the derivative of density $\hat{p}'_{2T-1}(u_T \mid \bar{h}_{T-1},a_{T-1}, S\in\mathcal{S}_{2T-1})$ to be $\{\hat{f}'(u_T) - \hat{f}'(2g(\bar{h}_{T-1},a_{T-1}) - u_T)\}/2$, where $\hat{f}'(u_T \mid \bar{h}_{T-1},a_{T-1})$ is the kernel density estimator for the derivative of the conditional density of $U_T=u_T$ given $(\bar{H}_{T-1}, A_{T-1}) = (\bar{h}_{T-1},a_{T-1})$. Kernel density estimation and the corresponding derivative density estimation were performed using a normal scale bandwidth \citep{duong2007ks} from the \texttt{ks} R package. To avoid over-fitting, we obtained the estimate of $g(\bar{h}_{T-1},a_{T-1})$ via a 2-fold cross-fitting and took the average as $\hat{g}(\bar{h}_{T-1},a_{T-1})$. We estimated $\tilde{I}$ and the conditional expectation in equation~\ref{eq:EIF_long_semi} using SuperLearner \citep{van2007super} with a library containing generalized linear model, general additive model and elastic.

\section{Estimand of interest, canonical gradients and additional results for the data illustration} \label{sec: real_data_appendix}
We studied the associations between immune responses and baseline covariates in Kullback-Leibler projections onto univariate working logistic regression models. Specifically, for each baseline covariate of interest $Z_{1j}$, we are interested in estimating 
\begin{align*}
    \beta^0 : = \argmin \beta E_{Q^0}[l(Z;\beta)],
\end{align*}
where $l(Z;\beta)$ denotes the weighted quasi-log-likelihood loss function that accounts for two-phase sampling weights $w(Z)$ and is given by
\begin{align*}
        l(z;\beta)& := w(z)\left\{- z_3 \log \mu(z_{1j};\beta) - (1- z_3) \log \left(1 - \mu(z_{1j};\beta)\right)\right\},
\end{align*}
where $\mu(z_{1j};\beta) = 1/\exp ( -\beta_0 - \beta_1 z_{1j})$ denotes the mean function. To derive the canonical gradients of $\beta^0$, we adopt the Z-estimation framework introduced in  Appendix~\ref{example:Z-estimation appendix}. Following previous notations, our problem is equivalent to solving 
\begin{align*}
     0 & = E_{P^0}[E_{P^0}[m_{\beta^0}(Z)\mid Z_1, S \in \mathcal{S}_3] \mid S \in \mathcal{S}_1],
\end{align*}
where $m_{\beta}(z) = \frac{\partial l(z;\beta)}{\partial \beta}\mid_{\beta = \beta^0}$. The canonical gradient of $\beta^0$ is thus given by,
\begin{align*}
    D_{P^0}(z,s)&  = - \left( \frac{\partial  E_{P^0}[E_{P^0}[m_{\beta}(Z)\mid Z_1, S \in \mathcal{S}_3] \mid S \in \mathcal{S}_1]}{\partial \beta} \mid_{\beta = \beta^0}\right)^{-1} F_{P^0}(z,s;\beta^0),
    \intertext{where}
    F_{P^0}(z,s;\beta)&  = \frac{\mathbbm{1}(s \in \mathcal{S}_3)}{P^0(S \in \mathcal{S}_3)} \frac{dP^0(z_1 \mid S\in \mathcal{S}_1)}{dP^0(z_1 \mid S\in \mathcal{S}_3)} \left(m_{\beta} - E_{P^0}[m_{\beta}\mid z_1, S \in \mathcal{S}_3]\right) \\
    & \quad +\frac{\mathbbm{1}(s \in \mathcal{S}_1)}{P^0(S \in \mathcal{S}_1)} \left( E_{P^0}[m_{\beta}\mid z_1, S \in \mathcal{S}_3] - E_{P^0}[E_{P^0}[m_{\beta}\mid z_1, S \in \mathcal{S}_3]\mid S \in \mathcal{S}_1]\right).
\end{align*}
We examined the overlaps in distributions of baseline covariates across the two trials. Table~\ref{tab:HVTN} presents baseline characteristics of participants in STEP and Phambili. Figure~\ref{fig:density} presents estimates of the distributions of baseline covariates of participants who had their immune responses measured. Table~\ref{tab:beta_rest} presents the estimated coefficient in the univariate working logistic models considered.

\begin{table}[htb]
\centering
\caption{Baseline characteristics of participants in STEP and Phambili.}
\begin{tabular}{lrr}
\toprule
 &  \textbf{STEP} & \textbf{Phambili} \\
 &  (N=2979) & (N=801) \\
\midrule
Age (years) & 18-45 & 18-35 \\
Sex \\
\hspace{0.3em} Male & 1844 (61.9\%) & 441 (55.1\%)\\
\hspace{0.3em} Female & 1135 (38.1\%) & 360 (44.9\%)\\
Race \\
\hspace{0.3em} Black & 889 (29.8\%)  & 793 (99.0\%) \\
\hspace{0.3em} Other & 2090 (70.2\%)  & 8 (1.0\%) \\
Adenovirus serotype-5 positivity & 2021 (67.8\%) & 647 (80.8\%) \\
Circumcision (men only) & 1003 (54.3\%) & 129 (29.3\%)\\
\bottomrule
\end{tabular}
\label{tab:HVTN}
\end{table}

\begin{figure}[htb]
\centering
\includegraphics[scale = 0.6]{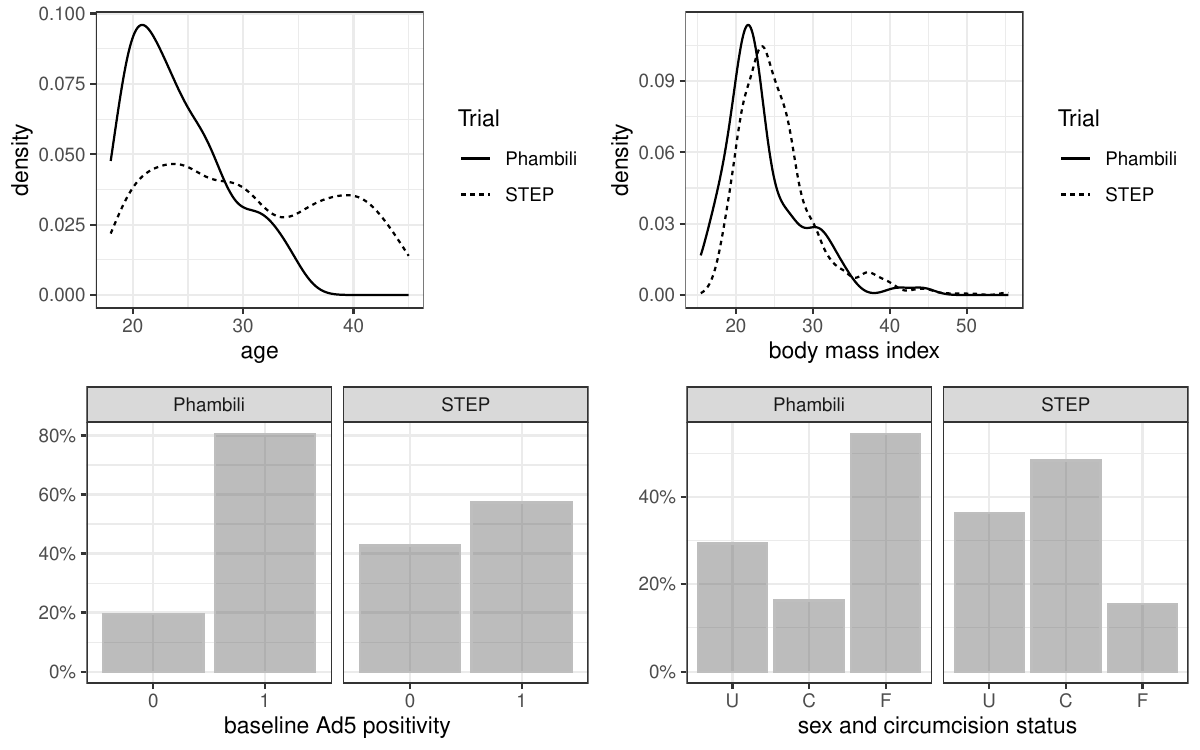}
\caption{Kernel density estimates and histograms describing the distributions of baseline covariates of participants who had their immune responses measured by ELISpot.}
\label{fig:density}
\end{figure}

\begin{table}[htb]
\centering
\caption{Estimated coefficient in univariate working logistic model between baseline covariates and, Nef and Pol using the STEP and Phambili study data. Estimation results are presented as estimates  (standard errors). }
\begin{tabular}{lrrrr}
\toprule
   & \multicolumn{2}{c}{Augmenting STEP} 
& \multicolumn{2}{c}{Augmenting Phambili} \\
\cmidrule(l){2-3} \cmidrule(l){4-5}
 &  STEP only & Both  &  Phambili only & Both \\
 &  (N=979) & (N=1072)  &  (N=93) & (N=1072) \\
\midrule
\textbf{Nef} \\
\hspace{1em}Age  & 0.03 (0.01) & 0.03 (0.01) & 0.07 (0.06) & 0.16 (0.04)\\
\hspace{1em}Circumcised male& 1.32 (0.13) & 1.33 (0.12) & 0.70 (0.55) & 0.42 (0.42)\\
\hspace{1em}Uncircumcised male& -0.40 (0.19) & -0.40 (0.18) & 0.01 (0.68) & 0.07 (0.53)\\
\hspace{1em}Female & 0.32 (0.34) & 0.37 (0.28) & 0.25 (0.63) & 0.43 (0.48)\\
\hspace{1em}BMI & -0.01 (0.02) & -0.01 (0.02) & -0.01 (0.04) & -0.02 (0.03)\\
\hspace{1em}Ad-5 positivity  & -1.07 (0.20) & -1.09 (0.20) & -0.16 (0.58) & 0.33 (0.46)\\
\textbf{Pol} \\
\hspace{1em}Age & 0.02 (0.01) & 0.01 (0.01) & 0.03 (0.05) & 0.12 (0.04)\\
\hspace{1em}Circumcised male& 1.35 (0.13) & 1.34 (0.11) & 0.41 (0.53) & 0.48 (0.33)\\
\hspace{1em}Uncircumcised male& -0.15 (0.20) & -0.12 (0.18) & 0.13 (0.66) & -0.36 (0.47)\\
\hspace{1em}Female& -0.97 (0.27) & -1.06 (0.23) & 0.75 (0.62) & 0.93 (0.43)\\
\hspace{1em}BMI & -0.03 (0.02) & -0.03 (0.02) & 0.02 (0.05) & 0.06 (0.03)\\
\hspace{1em}Ad-5 positivity & -2.04 (0.25) & -2.04 (0.24) & -0.94 (0.68) & -0.78 (0.55)\\
\bottomrule
\end{tabular}
\label{tab:beta_rest}
\end{table}

\section{Implementation and Possible Extensions}
\label{sec:miscellanea}

As mentioned in Section~\ref{sec:efficiency theory}, the initial estimate $\hat{P}$ of $P^0$ used to construct an efficient one-step estimator generally must reside in the model $\mathcal{P}$ for guarantees on such estimators to hold. In practice, $Q^0_j$, $j \in \mathcal{J}$, can be estimated by pooling data from sources in $\mathcal{S}_j$ and setting $\hat{P}_j(\cdot \mid \bar{z}_{j-1},s)$ equal to that estimate of $Q^0_j$ for those data sources $s$. Sometimes it is only necessary to estimate certain components of $P^0$, namely the ones needed to evaluate $\phi$ and the gradient. For example, constructing a one-step estimator of the longitudinal treatment effect in Section~\ref{sec:estimation} does not require estimating the conditional distribution of the outcome $U_T$ given the past; instead, only the conditional mean of $U_T$ given the past must be estimated.

We conjecture that Condition~\ref{cond:identifiability} can often be relaxed. In particular, rather than needing to require exact equality between $Q_j^0(\,\cdot\mid \bar{z}_{j-1})$ and conditional distributions $P_j^0(\,\cdot\mid \bar{z}_{j-1},s)$ under aligning data sources $s\in \mathcal{S}_j$, we believe that it is typically only necessary to have certain features of $P_j^0(\,\cdot\mid \bar{z}_{j-1},s)$ -- such as conditional expectations like $E_{P_0}(Z_j\mid \bar{Z}_{j-1}=\cdot,S=s)$ -- align with $E(Z_j\mid \bar{Z}_{j-1}=\cdot)$. The particular features that need to align in any given problem should be those that are needed to be able to evaluate the functional $\phi$ and its canonical gradient relative to $\mathcal{P}$.
For example, in the longitudinal treatment effect problem that we have studied, we conjecture that data sources $s \in \mathcal{S}_{2T-1}$ would only need to be such that $E_{P^0}(U_{T} \mid \bar{H}_{T-1}=\bar{h}_{T-1}, A_{T-1}=a, S \in \mathcal{S}_{2T-1}) = E_{Q^0}(U_{T} \mid \bar{H}_{T-1}=\bar{h}_{T-1}, A_{T-1}=a)$, rather than the stronger condition that $U_{T} \mid \bar{H}_{T-1}=\bar{h}_{T-1}, A_{T-1}=a, S \in \mathcal{S}_{2T-1}$ under sampling from $P^0$ has the same distribution as $U_{T} \mid \bar{H}_{T-1}=\bar{h}_{T-1}, A_{T-1}=a$ under sampling from $Q^0$.

We have treated the data source $S$ as a random variable in our developments. In fact, similar theoretical guarantees to those that we have established can also be established for the case where there are $k$ datasets with fixed sample sizes $n_1, \ldots, n_k$. The results for this new case are similar, but, in the expressions for the gradients, $P(S\in\mathcal{S}_j)$ is replaced by $\{\sum_{i=1}^k n_i \mathbbm{1}(i\in\mathcal{S}_j)\} / (\sum_{i=1}^k n_i)$ and $P_j^0(\,\cdot\mid \bar{z}_{j-1},s)$ is replaced by the conditional probability of $Z_j$ given $\bar{Z}_{j-1}$ for a random draw from data source $s$. For estimation, a variable $S$ can be introduced into the dataset, which takes the value of the data source from which an observation was drawn, and then the estimation can proceed as described in this work, treating $S$ as though it were random. Under suitable regularity conditions, including that $n_i/\sum_{j=1}^k n_j$ converges to a positive constant for each $i\in [k]$, the resulting estimator can be shown to be semiparametrically efficient in the model where independent samples of deterministic sizes are drawn from each data source. Because the theoretical arguments needed to formalize this statement are nearly identical to those we have already given for the case where $S$ is random, they are omitted. Instead, we compare simulation results from two data generating mechanisms, one with sample sizes fixed and one with $S$ being random. The results are almost identical and are therefore omitted.

\end{appendices}

\bibliography{paper-ref}
\end{document}